\documentclass[conference]{IEEEtran}

\usepackage{graphicx}
\usepackage{xspace}

\usepackage{enumitem}
\usepackage{balance}

\ifCLASSINFOpdf
\else
\fi

\hyphenation{con-tained}

\usepackage[cmex10]{amsmath}
\interdisplaylinepenalty=2500
\usepackage{amssymb,amsthm}
\usepackage[noadjust]{cite}
\usepackage[tight,footnotesize]{subfigure}
\usepackage{booktabs}

\newtheorem{theorem}{Theorem}

\newtheorem{lemma}{Lemma} 

\newtheorem{proposition}{Proposition} 
\newtheorem{example}{Example}
\newtheorem{remark}{Remark}

\thinmuskip=0mu
\setlength{\parskip}{0pt}    

\newcommand{\notimplies}{\mathrel{{\ooalign{\hidewidth$\not\phantom{=}$\hidewidth\cr$\implies$}}}}

\newcommand*{\defeq}{\mathrel{\vcenter{\baselineskip0.5ex \lineskiplimit0pt
                     \hbox{\scriptsize.}\hbox{\scriptsize.}}}
                     =}

\newcommand*{\GP}{\ensuremath{\mathbf{(GP)}}}
\newcommand*{\WS}{\ensuremath{\mathbf{(S)}}}
\newcommand*{\SR}{\ensuremath{\mathbf{(I)}}}
\newcommand*{\Mo}{\ensuremath{\mathbf{(M)}}}
\newcommand*{\SMo}{\ensuremath{\mathbf{(SM)}}}
\newcommand*{\LN}{\ensuremath{\mathbf{(LP)}}}
\newcommand*{\Idty}{\ensuremath{\mathbf{(Id)}}}
\newcommand*{\TM}{\ensuremath{\mathbf{(TM)}}}
\newcommand*{\PM}{\ensuremath{\mathbf{(PM)}}}

\newcommand*{\PMU}{\ensuremath{\mathbf{(PM_{u})}}}
\newcommand*{\PMUC}{\ensuremath{\mathbf{(PM_{u}^c)}}}
\newcommand*{\TMU}{\ensuremath{\mathbf{(TM_{u})}}}


\newcommand*{\op}[1]{ \ensuremath{ \operatorname{#1} }}
\newcommand*{\hgr}[1]{ \ensuremath{ \operatorname{hgr}\left( #1 \right)}}

\begin{document}

\setlength{\abovedisplayskip}{.2cm}
\setlength{\belowdisplayskip}{.2cm}

\title{Synergy, Redundancy and Common Information}

\author{
\IEEEauthorblockN{Pradeep Kr. Banerjee}
\IEEEauthorblockA{Dept. of E \& ECE\\ Indian Institute of Technology Kharagpur\\
pradeep.banerjee@gmail.com}
\and
\IEEEauthorblockN{Virgil Griffith}
\IEEEauthorblockA{School of Computing\\ National University of Singapore\\
dcsvg@nus.edu.sg}}

\maketitle

\begin{abstract}
We consider the problem of decomposing the total mutual information conveyed by a pair of predictor random variables about a target random variable into redundant, unique and synergistic contributions. We focus on the relationship between ``redundant information'' and the more familiar information-theoretic notions of ``common information.'' Our main contribution is an impossibility result.  We show that for independent predictor random variables, any common information based measure of redundancy cannot induce a nonnegative decomposition of the total mutual information. Interestingly, this entails that any reasonable measure of redundant information cannot be derived by optimization over a single random variable.
\end{abstract}

\begin{IEEEkeywords}
common and private information, synergy, redundancy, information lattice, sufficient statistic, partial information decomposition
\end{IEEEkeywords}

\IEEEpeerreviewmaketitle

\section{Introduction}
\vspace{2mm}
A complex system consists of multiple interacting parts or subsystems. A prominent example is the human brain that exhibits structure spanning a hierarchy of multiple spatial and temporal scales \cite{ref1}. A series of recent papers have focused on the problem of information decomposition in complex systems \cite{ref2,ref3,ref4,ref5,ref6,ref7,ref8,ref9,ref10,ref23}. A simple version of the problem can be stated as follows: The total mutual information that a pair of \emph{predictor} random variables (RVs) $({{X}_{1}},{{X}_{2}})$ convey about a \emph{target} RV $Y$ can have aspects of \emph{synergistic information} (conveyed only by the joint RV $({{X}_{1}}{{X}_{2}})$), of \emph{redundant} information (\emph{identically} conveyed by both ${{X}_{1}}$ and ${{X}_{2}}$), and of unique or private information (\emph{exclusively} conveyed by either ${{X}_{1}}$ or ${{X}_{2}}$). Is there a principled information-theoretic way of decomposing the total mutual information $I(X_1X_2;Y)$ into nonnegative quantities? 

Developing a principled approach to disentangling synergy and redundancy has been a long standing pursuit in neuroscience and allied fields\footnote{We invite the interested reader to see Appendix B, where we provide a sampling of several interesting examples and applications, where information-theoretic notions of synergy and redundancy are deemed useful.} \cite{ref1}, \cite{ref15,ref24,ref28,ref29,ref30,ref90}. However, the traditional apparatus of Shannon's information theory does not furnish ready-made tools for quantifying multivariate interactions. Starting with the work of Williams and Beer \cite{ref2}, several workers have begun addressing these issues \cite{ref3,ref4,ref5,ref6,ref7,ref8,ref9,ref10,ref23}. For the general case of $K$ predictors, Williams and Beer \cite{ref2} proposed the partial information (PI) decomposition framework to specify how the total mutual information about the target is shared across the singleton predictors and their overlapping or disjoint coalitions. Effecting a nonnegative decomposition has however turned out to be a surprisingly difficult problem even for the modest case of $K = 3$ \cite{ref3}, \cite{ref6}. Furthermore, there seems to be no clear consensus as to what is an ideal measure of redundant information. 

We focus on the relationship between redundant information and the more familiar information-theoretic notions of common information \cite{ref31,ref32}.
We distinguish synergistic and redundant interactions that exist within a group of predictor RVs from those that exist between a group of predictor RVs and a target RV.
A popular measure of the former (symmetric) type of interaction is the co-information \cite{ref46}.
Our main interest, however, lies in asymmetric measures of interaction that distinguish the target RV from the group of predictor RVs. 
An instance of such an interaction is when populations of retinal ganglion cells (predictors) interact to encode a (target) visual stimulus. \cite{ref58}. Yet another instance is when multiple genes (predictors) cooperatively interact within cellular pathways to specify a (target) phenotype \cite{ref24}.
In building up to our main contribution, we review and extend existing (symmetric) measures of common information to capture the asymmetric nature of these interactions. 

\emph{Section organization and summary of results}. In Section II, building on the heuristic notion of embodying information using $\sigma$-algebras and sample space partitions, we formalize the notions of common and private information structures. Information in the technical sense of entropy hardly captures the structure of information embodied in a source. First introduced by Shannon in a lesser known, short note \cite{ref35}, information structures capture the quintessence of ``information itself.'' We bridge several inter-related domains---notably, game theory, distributed control, and team decision problems to investigate the properties  of such structures. Surprisingly, while the ideas are not new, we are not aware of any prior work or exposition where common and private information structures have received a unified treatment. For instance, the notion of common information structures have appeared independently in at least four different early works, namely, that of Shannon \cite{ref35}, G{\'a}cs and K\"{o}rner \cite{ref31}, Aumann \cite{ref27}, and Hexner and Ho \cite{ref38}, and more recently in \cite{ref25}, \cite{ref36}, \cite{ref48}. In the first part of (mostly expository) Section II, we make some of these connections explicit for a finite alphabet. 

In the second part of Section II, we take a closer look at the intricate relationships between a pair of RVs. Inspired by the notion of private information structures \cite{ref38}, we derive a measure of private information and show how a dual of that measure recovers a known result \cite{ref25} in the form of the minimal sufficient statistic for one variable with respect to the other. We also introduce two new measures of common information. The richness of the decomposition problem is already manifest in simple examples when common and private informational parts cannot be isolated.

In Section III, we inquire if a nonnegative PI decomposition of $I(X_1X_2;Y)$ can be achieved using a measure of redundancy based on the notions of common information due to G{\'a}cs and K\"{o}rner \cite{ref31} and Wyner \cite{ref32}. We answer this question in the negative. 
For independent predictor RVs when any nonvanishing redundancy can be attributed solely to a mechanistic dependence between the target and the predictors, we show that any common information based measure of redundancy cannot induce a nonnegative PI decomposition.

\section{Information Decomposition into Common and Private Parts: The Case for Two Variables}
\vspace{2mm}

Let $(\Omega, \mathfrak{F}, \mathbb{P})$ be a fixed probability triple, where $\Omega$ is the set of all possible outcomes, elements of the $\sigma$-algebra $\mathfrak{F}$ are events and $\mathbb{P}$ is a function returning an event's probability. 
A random variable (RV) $X$ taking values in a discrete measurable space $(\mathcal{X},\mathfrak{X})$ (called the alphabet) is a measurable function $X:\Omega \to \mathcal{X}$ such that if $x \in \mathfrak{X}$, then $X^{-1}(x)=\{\omega:X(\omega) \in x\} \in \mathfrak{F}$. 
The $\sigma$-algebra induced by $X$ is denoted by $\sigma(X)$. We use ``iff'' as a shorthand for ``if and only if''.

\subsection{Information Structure Aspects}

The heuristic notion of embodying information using $\sigma$-algebras is not new \cite{ref50}\footnote{See though Example 4.10 in \cite{ref50} for a counterexample.}, \cite{ref49}. A sense in which $\sigma(X)$ represents information is given by the following lemma (see Lemma 1.13 in \cite{ref47}).  

\begin{lemma}[Doob-Dynkin Lemma]
Let $X_1:\Omega \to \mathcal{X}_1$ and $X_2:\Omega \to \mathcal{X}_2$ be two RVs, where $(\mathcal{X}_2,\mathfrak{X}_2)$ is a standard Borel space. Then $X_2$ is $\sigma(X_1)$-measurable, or equivalently $\sigma(X_2) \subset \sigma(X_1)$ iff there exists a measurable mapping $f:\mathcal{X}_1\to\mathcal{X}_2$ such that $X_2=f(X_1)$.
\end{lemma}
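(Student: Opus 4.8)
The plan is to prove the two implications separately; the reverse implication is routine and the forward implication carries all the content. For the reverse direction, suppose $X_2 = f(X_1)$ with $f$ measurable. Then for any $B \in \mathfrak{X}_2$, measurability of $f$ gives $f^{-1}(B) \in \mathfrak{X}_1$, whence $X_2^{-1}(B) = X_1^{-1}(f^{-1}(B)) \in \sigma(X_1)$. As $B$ ranges over $\mathfrak{X}_2$ this shows every generator of $\sigma(X_2)$ lies in $\sigma(X_1)$, i.e.\ $\sigma(X_2) \subset \sigma(X_1)$.

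For the forward direction I would proceed in three stages, leaning crucially on the hypothesis that $(\mathcal{X}_2, \mathfrak{X}_2)$ is standard Borel. First, reduce to the scalar case: fix a Borel isomorphism $\varphi$ of $\mathcal{X}_2$ onto a Borel subset $E \subseteq \mathbb{R}$ and put $Z \defeq \varphi \circ X_2$, a real-valued $\sigma(X_1)$-measurable RV. It then suffices to produce a Borel map $g : \mathcal{X}_1 \to \mathbb{R}$ with $Z = g(X_1)$ whose values on the image $X_1(\Omega)$ lie in $E$, for then $f \defeq \varphi^{-1} \circ g$ is the required measurable factorization.

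Second, I would settle the case of simple functions: if $Z = \sum_{i=1}^{n} a_i \mathbf{1}_{A_i}$ with disjoint $A_i \in \sigma(X_1)$, then by definition of $\sigma(X_1)$ each $A_i = X_1^{-1}(B_i)$ for some $B_i \in \mathfrak{X}_1$; after disjointifying the $B_i$, the function $g \defeq \sum_i a_i \mathbf{1}_{B_i}$ is Borel and satisfies $g(X_1) = Z$ pointwise. Third, I would pass to a general measurable $Z$ by choosing simple $\sigma(X_1)$-measurable $Z_m \to Z$ pointwise, extracting $g_m$ with $Z_m = g_m(X_1)$ from the previous step, and setting $g \defeq \limsup_m g_m$.

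The main obstacle is this last limit step together with the role of the standard Borel hypothesis. The sequence $g_m$ need not converge off the image of $X_1$, so I would take $\limsup$ rather than $\lim$ to guarantee that $g$ is Borel on all of $\mathcal{X}_1$, noting that on $X_1(\Omega)$ one has $g_m(X_1(\omega)) = Z_m(\omega) \to Z(\omega)$, so $g(X_1) = Z$. The standard Borel assumption is exactly what licenses the reduction to $\mathbb{R}$ and the resulting approximation arguments; without it the forward implication can fail. I would therefore take care that $g$ is arranged to take values in $E$ on the relevant set, so that $f = \varphi^{-1} \circ g$ is genuinely well defined and measurable.
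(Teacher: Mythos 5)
Your proof is correct, and it is essentially the paper's own approach in the only sense available: the paper does not prove this lemma at all but cites it as Lemma 1.13 of Kallenberg's \emph{Foundations of Modern Probability}, and your argument—handle the routine reverse direction by composition, then for the forward direction Borel-isomorph $\mathcal{X}_2$ onto a Borel set $E \subseteq \mathbb{R}$, factor simple $\sigma(X_1)$-measurable functions through $X_1$ by disjointifying the sets $B_i$ with $A_i = X_1^{-1}(B_i)$, and pass to general $Z$ via a pointwise $\limsup$—is precisely the standard proof found in that cited reference. The two delicate points you flag are real but are discharged exactly as you indicate: take $g = \limsup_m g_m$ as an extended-real-valued measurable function, and replace it by $\tilde g(x) = g(x)$ when $g(x) \in E$ and $\tilde g(x) = e_0$ for a fixed $e_0 \in E$ otherwise, which leaves $\tilde g \circ X_1 = Z$ unchanged on $X_1(\Omega)$ (where $g$ already takes values in $E$, since $g(X_1(\omega)) = \varphi(X_2(\omega))$) and makes $f = \varphi^{-1} \circ \tilde g$ well defined and measurable.
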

Suppose an agent does not know the ``true'' point $\omega \in \Omega$ but only observes an outcome $X_1(\omega)$. If for each drawn $\omega$, he takes some decision $X_2(\omega)$, then clearly $X_1(\omega)$ determines $X_2(\omega)$ so that we necessarily have $X_2=f(X_1)$. The Doob-Dynkin lemma says that this is equivalent to $X_2$ being $\sigma(X_1)$-measurable under some reasonable assumptions on the underlying measurable spaces.

From Lemma 1, it is easy to see that $X_1$ and $X_2$ carry the ``same information'' iff $\sigma(X_1)=\sigma(X_2)$. This notion of \emph{informational sameness} (denoted $X_1 \sim^{S} X_2$) induces a partition on the set of all RVs into equivalence classes called \emph{information elements}. We say that the RV $X_S$ is \emph{representative} of the information element $S$. First introduced by Shannon in a (perhaps) lesser known, short note \cite{ref35}, information elements capture the quintessence of \emph{information itself} in that all RVs within a given class can be derived from a representative RV for that class using finite state reversible encoding operations, i.e., with 1-to-1 mappings. Contrast the notion of information elements with the Shannon entropy of a source $X$, denoted $H(X)$. Two sources $X_1$ and $X_2$ might produce information at the same entropy rate\footnote{Most of the arguments here are valid for a countable $\mathcal{X}$. Entropies for countable alphabets can be infinite and even discontinuous. In the later sections, we shall be dealing solely with finite discrete RVs.}$^,$\footnote{For finite or countable $\mathcal{X}_1$, $\mathcal{X}_2$, if $f:\mathcal{X}_1\to\mathcal{X}_2$ is a bijection such that $X_2=f(X_1)$, then $H(X_2)=H(X_1)$, i.e., entropy is invariant under relabeling.}, but not necessarily produce the ``same'' information. Thus, $X_1 = X_2 \implies X_1 \sim^{S} X_2 \implies H(X_1)=H(X_2)$, but the converse of neither implication is true.

A partial order between two information elements $S_1$ and $S_2$ is defined as follows: $S_1 \succcurlyeq S_2$ iff $H({S_2}|{S_1})=0$ or equivalently iff $X_{S_2}$ is $\sigma(X_{S_1})$-measurable. We say that $S_1$ is \emph{larger} than $S_2$ or equivalently $S_2$ is an abstraction of $S_1$. Likewise, we write $S_1 \preccurlyeq S_2$ if $S_2 \succcurlyeq S_1$, when $S_1$ is \emph{smaller} than $S_2$. There exists a natural metric $\rho$ on the space of information elements and an associated topology induced by $\rho$ \cite{ref35}. $\rho$ is defined as follows: $\rho(S_1,S_2)=H({S_1}|{S_2})+H({S_2}|{S_1})$. Clearly, $\rho(S_1,S_2)=0$ iff $S_1 \succcurlyeq S_2$ and $S_2 \succcurlyeq S_1$. The \emph{join} of two information elements $S_1$ and $S_2$ is given by $\sup\{ {S_1},{S_2}\}$ (denoted $S_1\vee S_2$) and is called the \emph{joint information} of both $S_1$ and $S_2$. The joint RV $(X_{S_1},X_{S_2})$ is representative of the joint information. Likewise, the \emph{meet} is given by $\inf\{ {S_1},{S_2}\}$ (denoted $S_1\wedge S_2$) and is called the \emph{common information} of $S_1$ and $S_2$. $(X_{S_1}\wedge X_{S_2})$ is the representative \emph{common RV} \cite{ref35}. The entropy of both the joint and common information elements are invariant in a given equivalent class.

A finite set of information elements endowed with the partial order $\succcurlyeq$, join ($\vee$), and meet ($\wedge$) operations have the structure of a metric lattice which is isomorphic to a finite partition lattice \cite{ref35}, \cite{ref36}. As a simple example, the lattice structure arising out of a $\textsc{Xor}$ operation is the diamond lattice $M_3$, the smallest instance of a nondistributive modular lattice. The nondistributivity is easily seen as follows: let $S_3=\textsc{Xor}(S_1,S_2)$ where $S_1$ and $S_2$ are independent information elements.  In this example, $(S_3\wedge S_2)\vee (S_3\wedge S_1)=0$, whereas $S_3\wedge (S_2\vee S_1)=S_3\ne 0$. In general however, information lattices are neither distributive nor modular \cite{ref35}, \cite{ref36}. More important for our immediate purposes is the notion of common information as defined by Shannon \cite{ref35} which arises naturally when quantifying information embodied in structure. Contrast this with Shannon's mutual information which does not correspond to any element in the information lattice.

The modeling of information structures can also be motivated nonstochastically, i.e., when the underlying space has no probability measure associated with it (e.g., see \cite{ref48}, \cite{ref27}, \cite{ref38}, \cite{ref39}). Let $(\Omega, \mathfrak{F})$ be a measurable space, where $\Omega$ is the set of possible states of Nature, and elements of $\mathfrak{F}$ are events. One of the states $\omega \in \Omega$ is the ``true'' state. An event $E$ \emph{occurs} when $\omega \in E$. Define an \emph{uncertain variable} $X$ \cite{ref48} taking values in a discrete measurable space $(\mathcal{X},\mathfrak{X})$ as the measurable function $X:\Omega \to \mathcal{X}$ where $\mathfrak{X}$ contains all singletons. The $\sigma$-algebra induced by $X$ is $\sigma(X)=\sigma(\{X^{-1}(T):T \in \mathfrak{X}\})$. $X$ generates a partition on $\Omega$ called the \emph{information partition} $\mathcal{P}_X=\{X^{-1}(x) \in \Omega:x \in \mathcal{X}\}$. Since the alphabet $\mathcal{X}$ is finite or countable, $\sigma(\mathcal{P}_X)=\sigma(X)$.  

The information structure $\left\langle {\Omega,\mathcal{P}_X}\right\rangle$ specifies the extent to which an agent observing $X$ can distinguish among different states of Nature. Given an observation $x=X(\omega)$, an agent endowed with a partition $\mathcal{P}_X$ only knows that the true state belongs to $\mathcal{P}_X(\omega)$, where $\mathcal{P}_X(\omega)$ is the element of $X$'s partition that contains $\omega$. Given a pair of partitions $(\mathcal{P}_i,\mathcal{P}_j)$ on $\Omega$, $\mathcal{P}_i$ is said to be \emph{finer} than $\mathcal{P}_j$ and that $\mathcal{P}_j$ is \emph{coarser} than $\mathcal{P}_i$ if $\mathcal{P}_i(\omega)  \subseteq \mathcal{P}_j(\omega)$ $\forall \omega \in \Omega$. If $\mathcal{P}_i$ is finer than $\mathcal{P}_j$, then agent $i$ has \emph{more precise} information than agent $j$ in that $i$ can distinguish between more states of Nature. We say \emph{$X$ knows an event $E$ at $\omega$} if $\mathcal{P}_X(\omega) \subset E$. $E$ can only be known if it occurs. The event that \emph{$X$ knows $E$} is the set $K_X(E)=\{\omega: \mathcal{P}_X(\omega) \subset E\}$. Then, given two agents, Alice observing $X$ and Bob observing $Y$, $K_X(E) \cap K_Y(E)$ is the event that $E$ is \emph{mutually} known (between Alice and Bob). We say that an event $E$ is \emph{commonly} known (to both Alice and Bob) if it occurs, or equivalently, an \emph{event $E$ is common information} iff $E \in \sigma({\mathcal{P}_{\curlywedge}})$, where ${\mathcal{P}_{\curlywedge}}={{\mathcal{P}}_X}\wedge {{\mathcal{P}}_Y}$ is the finest common coarsening of the agents' partitions\footnote{The astute reader will immediately notice the connection with the notion of \emph{common knowledge} due to Aumann \cite{ref27}. In keeping with our focus on information structure, we prefer the term ``common information'' to ``common knowledge.'' Indeed, for finite or countably infinite information partitions, common knowledge is defined on the basis of the information contained in ${\mathcal{P}_{\curlywedge}}$ as follows. An event \emph{$E$ is common knowledge at $\omega$} iff ${\mathcal{P}_{\curlywedge}}(\omega) \subset E$, i.e., the event that $E$ is common knowledge is $C(E)=\{\omega: \mathcal{P}_{\curlywedge}(\omega) \subset E\}$. For any event $E$, $C(E)\subset E$. $E$ is \emph{common information} if $C(E)=E$ \cite{ref27}, \cite{ref49}.}$^{,}$\footnote{For uncountable alphabets, see e.g., \cite{ref49} for a more nuanced discussion on representing information structures using $\sigma$-algebras of events instead of partitions.}.
Since the $\sigma$-algebra generated by ${\mathcal{P}_{\curlywedge}}$ is simply $\sigma({{\mathcal{P}}_X}) \cap \sigma ({{\mathcal{P}}_Y})$, or equivalently, $\sigma (X) \cap \sigma (Y)$, $E$ is common information iff $E \in \sigma (X) \cap \sigma (Y)$. Commonly knowing $E$ is a far stronger requirement than mutually knowing $E$. For finite $\mathcal{X}$, $\mathcal{Y}$, the common information structure admits a representation as a graph $C_{XY}$ with the vertex set  ${{\mathcal{P}}_X}\vee {{\mathcal{P}}_Y}$ and an edge connecting two vertices if the corresponding atoms $v_i$ and $v_j$ are contained in a single atom of ${{\mathcal{P}}_X}$ or ${{\mathcal{P}}_Y}$ or of both. 
The connected components of $C_{XY}$ are in one-to-one correspondence with the atoms of ${\mathcal{P}_{\curlywedge}}$ \cite{ref48}.

\begin{example}
Let $\Omega=\{\omega_1,\omega_2,\omega_3,\omega_4\}$. Alice observes $X$ which generates the information partition ${\mathcal{P}_X} = {\omega _1}{\omega _4}|{\omega _2}|{\omega _3}$. Likewise, Bob observes $Y$ which induces the partition, ${\mathcal{P}_Y} = {\omega _1}{\omega _2}|{\omega _3}|{\omega _4}$. Let $\omega_2$ be the true state of Nature. Consider the event $E=\{\omega_1,\omega_2\}$. Both Alice and Bob know $E$ at $\omega_2$, since ${\mathcal{P}_X}(\omega_2)=\{\omega_2\} \subset E$ and ${\mathcal{P}_Y}(\omega_2)=\{\omega_1,\omega_2\} \subset E$. The event that Alice knows $E$ is simply the true state $\{\omega_2\}$ \emph{(}i.e., $K_X(E)=\{\omega_2\}$\emph{)}, whereas for Bob, $K_Y(E)=\{\omega_1,\omega_2\}$. Clearly, Bob cannot tell apart the true state $\{\omega_2\}$ \emph{(}in which Alice knows $E$\emph{)} from $\{\omega_1\}$ \emph{(}in which Alice does not know $E$\emph{)}. Hence, $E$ is not commonly known to Alice and Bob. 

On the other hand, it is easy to check that the events $\{\omega_1,\omega_2,\omega_4\}$ and $\{\omega_3\}$ are common information. Indeed, ${\mathcal{P}_ \curlywedge } = {\mathcal{P}_X} \wedge {\mathcal{P}_Y} = \left\{ {\{ {\omega _1},{\omega _2},{\omega _4}\},\{ {\omega _3}\} } \right\}$. $C_{XY}$ has the vertex set ${{\mathcal{P}}_X}\vee {{\mathcal{P}}_Y}=\{\{\omega_1\},\{\omega_2\},\{\omega_3\},\{\omega_4\}\}$ and the connected components of $C_{XY}$ correspond to the atoms $\{\omega_1,\omega_2,\omega_4\}$ and $\{\omega_3\}$ of ${\mathcal{P}_{\curlywedge}}$.
\end{example}

One may also seek to characterize the private information structures of the agents. Let $\Omega$ be a finite set of states of Nature. To simplify notation, let $X$ denote the agent $X$ as well as its information partition. Let Alice and Bob be endowed, respectively, with information partitions $X$ and $Y$ so that $X$ and $Y$ are subalgebras of a $2^{|\Omega|}$-element Boolean algebra. One plausible definition of the \emph{private information structure of $Y$} is the minimal amount of information that $X$ needs from $Y$ to reconstruct the joint information $Y\vee X$ \cite{ref37}. Define $P{{I}_{X}}(Y)=\{Z:Z\vee X=Y\vee X; Z\subseteq Y; Z \text{ minimal}\}$. Since $P{{I}_{X}}(Y)$ complements $X$ to reconstruct $Y\vee X$, minimality of $Z$ entails that $\forall Z \in P{{I}_{X}}(Y)$, $Z\wedge X=0$, where $0$ denotes the two-element algebra. Witsenhausen \cite{ref37} showed that the problem of constructing elements of $P{{I}_{X}}(Y)$ with minimal cardinality is equivalent to the chromatic number problem for a graph $G_Y$ with the vertex set $Y$ and an edge connecting vertices  $v_i$ and $v_j$ iff there exists an atom $x \in X$ such that $v_i \cap x \ne \emptyset$ and $v_j \cap x \ne \emptyset$. Unfortunately, since there are multiple valid minimal colorings of $G_Y$, $P{{I}_{X}}(Y)$ is not be unique. The following example illustrates the point. 

\begin{example}
Consider the set, $\Omega  = \left\{ {{\omega _1}, \ldots ,{\omega _{16}}} \right\}$. Let Alice and Bob's partitions be respectively, ${X}={\omega _1}{\omega _3}|{\omega _4}{\omega _5}|{\omega _6}{\omega _7}|{\omega _8}$ ${\omega _9}|{\omega _{10}}{\omega _2}|{\omega _{11}}{\omega _{13}}|{\omega _{14}}{\omega _{15}}|{\omega _{12}}{\omega _{16}}$ and ${Y} = {\omega _1}{\omega _2}|{\omega _3}{\omega _4}|{\omega _5}$ ${\omega _6}|{\omega _7}{\omega _8}|{\omega _9}{\omega _{10}}|{\omega _{11}}{\omega _{12}}|{\omega _{13}}{\omega _{14}}|{\omega _{15}}{\omega _{16}}$. $G_Y=(Y,\mathcal{E})$ has the edge set $\mathcal{E}=\{ \{ {\omega _1}{\omega _2},{\omega _3}{\omega _4}\} ,\{ {\omega _3}{\omega _4},{\omega _5}{\omega _6}\} ,\{ {\omega _5}{\omega _6},{\omega _7}{\omega _8}\}$, $\{ {\omega _7}{\omega _8},{\omega _9}{\omega _{10}}\} ,\{ {\omega _9}{\omega _{10}},{\omega _1}{\omega _2}\} ,\{ {\omega _{11}}{\omega _{12}},{\omega _{13}}{\omega _{14}}\},\{ {\omega _{13}}{\omega _{14}},{\omega _{15}}$ ${\omega _{16}}\},\{ {\omega _{15}}{\omega _{16}},{\omega _{11}}{\omega _{12}}\} \}$. 
 
Two distinct minimal colorings of $G_Y$ are as follows:
 \begin{align*}
 (a)\; {\gamma _1} &= \{ {\omega _1}{\omega _2}, {\omega _7}{\omega _8}, {\omega _{11}}{\omega _{12}}\},
 {\gamma _2} = \{ {\omega _3}{\omega _4},{\omega _9}{\omega _{10}},{\omega _{15}}{\omega _{16}}\},\\
 {\gamma _3} &= \{ {\omega _5}{\omega _6},{\omega _{13}}{\omega _{14}}\},
 \end{align*}
so that $$PI_X^a(Y) = {\omega _1}{\omega _2}{\omega _7}{\omega _8}{\omega _{11}}{\omega _{12}}|{\omega _3}{\omega _4}{\omega _9}{\omega _{10}}{\omega _{15}}{\omega _{16}}|{\omega _5}{\omega _6}{\omega _{13}}{\omega _{14}},$$
and 
\begin{align*}
 (b)\; {{\gamma'}_1} &= \{ {\omega _1}{\omega _2},{\omega _5}{\omega _6},{\omega _{11}}{\omega _{12}}\}, 
 {{\gamma'}_2} = \{ {\omega _3}{\omega _4},{\omega _7}{\omega _8},{\omega _{13}}{\omega _{14}}\},\\
 {{\gamma'}_3} &= \{ {\omega _9}{\omega _{10}},{\omega _{15}}{\omega _{16}}\},
 \end{align*}
 so that $$PI_X^b(Y) = {\omega _1}{\omega _2}{\omega _5}{\omega _6}{\omega _{11}}{\omega _{12}}|{\omega _3}{\omega _4}{\omega _7}{\omega _8}{\omega _{13}}{\omega _{14}}|{\omega _9}{\omega _{10}}{\omega _{15}}{\omega _{16}}.$$

It is easy to see that $PI_X^a(Y) \vee X = PI_X^b(Y) \vee X = Y \vee X$. Hence, such a minimal coloring is not unique and consequently, $P{I_X}(Y)$ is not unique.
\end{example}

One would also like to characterize the information contained exclusively in either $X$ or $Y$. The \emph{private information structure of $Y$ with respect to $X$} may be defined as the amount of information one needs to reconstruct $Y$ from the common information $X\wedge Y$. Define $PI(Y\backslash X)=\{Z:Z\vee(X\wedge Y)=Y;Z \text{ minimal}\}$, where minimality of $Z$ entails that $\forall Z \in PI(Y\backslash X)$, if there exists a $Z'$ such that $Z' \supseteq Z$ and $Z'\vee(X\wedge Y)=Y$, then $Z'\not\in PI(Y\backslash X)$. We note that, if $Z \in PI(Y\backslash X)$, then $Z\vee X=Y\vee X$ and $Z\wedge X=0$. Hexner and Ho \cite{ref38} proposed and showed that this definition does not admit a unique specification for the private information of $Y$ with respect to $X$ as can be seen from the following example. 

\begin{example}
Consider the set, $\Omega =\{\omega_1,\ldots,\omega_6\}$ and the following partitions on $\Omega:$ $X=\omega_1\omega_2|\omega_3|\omega_4\omega_5|\omega_6$, and $Y=\omega_1|\omega_2\omega_3|\omega_4|\omega_5\omega_6$. Then we have, $X \vee Y = \omega_1|\omega_2|\omega_3|\omega_4|\omega_5|\omega_6$ and $X \wedge Y = \omega_1\omega_2\omega_3|\omega_4\omega_5\omega_6$. It is easy to see that each of the following subalgebras satisfies the definition, i.e., given $Z_1=\omega_1\omega_4|\omega_2\omega_3\omega_5\omega_6$ and $Z_2 = \omega_1\omega_5\omega_6|\omega_2\omega_3\omega_4$, we have, $Z_1 \vee (X \wedge Y) = Z_2 \vee (X \wedge Y) = Y$ and $Z_1 \vee X = Z_2 \vee X = Y \vee X$. Hence, $PI(Y\backslash X)$ is not unique.
\end{example}

\begin{remark}
\emph{We have the following observations. Note that if $Z_1 \in PI(Y\backslash X)$, then $Z_1 \vee X = Y \vee X$. Thus, one can find a $Z_2 \in P{{I}_{X}}(Y)$ such that $Z_2 \subseteq Z_1$. Choosing $Z_1$ minimal, it follows that the cardinality of the minimal algebras of $PI(Y\backslash X)$ is lower bounded by the cardinality of the minimal algebras of $P{{I}_{X}}(Y)$ or equivalently by the chromatic number of $G_Y$. Thus, $X$ need not use all of $PI(Y\backslash X)$ to reconstruct $Y\vee X$. Furthermore, it is known that the lattice $L$ of subalgebras of a finite Boolean algebra is isomorphic to a finite partition lattice \cite{ref44}. Thus, in general, $L$ is not distributive, nor even modular. Since both the structures $P{{I}_{X}}(Y)$ and $PI(Y\backslash X)$ consists of complements in $L$, nonmodularity of $L$ implies the nonuniqueness of the private information structures.}
\end{remark}

\subsection{Operational Aspects}
    
We now turn to mainstream information-theoretic notions of ``common information'' (CI). We introduce the remaining notation. 
For a discrete, finite-valued RV $X$, ${{p}_{X}}(x)=\mathbb{P}\{X=x\}$ denotes the probability mass function (pmf or distribution) of $X$. We abbreviate ${{p}_{X}}(x)$ as $p(x)$ when there is no ambiguity. For $\mathcal{X}=\{x_n,\;n=1,\ldots,N\}$, the entropy $H(X)$ of $X$ can be written as $H({{p}_{1}},\ldots,{{p}_{N}})\defeq\sum\nolimits_{n}{{{p}_{n}}\log \tfrac{1}{{{p}_{n}}}}$, where $p_n=\mathbb{P}\{X=x_n\}$ and $\sum\nolimits_{n}{{{p}_{n}}}=1$.
The Kullback-Leibler (KL) divergence from ${{q}_{X}}$ to ${{p}_{X}}$ is defined as $D(p||q)\defeq{{\sum }_{x\in \mathcal{X}}}{{p}_{X}}(x)\log \tfrac{{{p}_{X}}(x)}{{{q}_{X}}(x)}$. 

$X-Y-Z$ denotes that $X$ is conditionally independent of $Z$ given $Y$ (denoted $X\perp Z|Y$), or equivalently, $X,Y,Z$ form a Markov chain satisfying 
$$p(x,y,z)=\tfrac{p(x,y)p(y,z)}{p(y)}=p(x|y)p(y,z),\text{ if }p(y)>0;\text{ else } 0.$$
Equivalently, $p(y)p(x,y,z)=p(x,y)p(y,z)$. 

Let $\{{{X}_{i}},{{Y}_{i}}\}_{i=1}^{\infty }$ be i.i.d. copies of the pair $(X,Y)\sim {{p}_{XY}}$ on $\mathsf{\mathcal{X}}\times \mathsf{\mathcal{Y}}$. An information source generating such a (stationary) sequence is called a two-component discrete, memoryless source (2-DMS). Given $\varepsilon>0$, we say that $\hat{X}^n$ $\varepsilon$-recovers $X^n$ iff $\mathbb{P}\{\hat{X}^n \ne X^n\} < \varepsilon$.

To fix ideas, consider a ``one-decoder'' network for the distributed compression of a 2-DMS \cite{ref32l}. The correlated streams $\{{{X}_{i}}\}_{i=1}^{\infty }$ and $\{{{Y}_{i}}\}_{i=1}^{\infty }$ are encoded separately at rates $R_x$ and $R_y$ and decoded jointly by combining the two streams to $\varepsilon$-recover $(X^n,Y^n)$. A remarkable consequence of the Slepian-Wolf theorem \cite{ref32l} is that the (minimum) sum rate of $R_x+R_y=H(X,Y)$ is achievable. This immediately gives a coding-theoretic interpretation of Shannon's mutual information (MI) as the maximum descriptive savings in sum rate by considering $(X,Y)$ jointly rather than separately, i.e., 
$$I(X;Y)=H(X)+H(Y)-\min (R_x+R_y).$$
Thus, for the one-decoder network, MI appears to be a natural measure of CI of two dependent RVs. However, other networks yield different CI measures. Indeed, as pointed out in \cite{ref33}, depending upon the number of encoders and decoders and the network used for connecting them, several notions of CI can be defined. We restrict ourselves to two dependent sources and a ``two-decoder'' network when two different notions of CI due to G{\'a}cs and K\"{o}rner \cite{ref31} and Wyner \cite{ref32} are well known. Each of these notions appear as solutions to asymptotic formulations of some distributed information processing task.

Given a sequence $(X^n,Y^n)$ generated by a 2-DMS $(\mathcal{X}\times \mathcal{Y},\text{ }p_{XY})$, G{\'a}cs and K\"{o}rner (GK) \cite{ref31}
defined CI as the maximum rate of common randomness (CR) that two nodes, observing sequences ${{X}^{n}}$ and ${{Y}^{n}}$ separately can extract without any communication, i.e., 
\begin{align*}
C_{GK}(X;Y)\defeq\sup \tfrac{1}{n}H({{f}_{1}}({{X}^{n}})),
\end{align*}
where the supremum is taken over all sequences of pairs of deterministic mappings $(f_{1}^{n},f_{2}^{n})$ such that $\mathbb{P}\{f_{1}^{n}({{X}^{n}})\ne f_{2}^{n}({{Y}^{n}})\}\to 0\text{ as }n\to \infty$. 

The zero pattern of ${{p}_{XY}}$ is specified by its characteristic bipartite graph ${{B}_{XY}}$ with the vertex set $\mathcal{X}\cup \mathcal{Y}$ and an edge connecting two vertices $x$ and $y$ if ${{p}_{XY}} > 0$. If ${{B}_{XY}}$ is a single connected component, we say that ${{p}_{XY}}$ is \emph{indecomposable}. An \emph{ergodic decomposition} of ${{p}_{XY}}$ is defined by a unique partition of the space $\mathcal{X}\times \mathcal{Y}$ into connected components \cite{ref31}, \cite{ref33}, \cite{ref25}. Given an ergodic decomposition of $p_{XY}$ such that $\mathcal{X}\times\mathcal{Y}=\bigcup\nolimits_{{q_*}} \mathcal{X}_{q_*}\times\mathcal{Y}_{q_*}$, define the RV $Q_*$ as $Q_*=q_* \iff X \in {\mathcal{X}}_{q_*} \iff Y \in {\mathcal{Y}}_{q_*}$. For any RV $Q$ such that $H(Q|X) = H(Q|Y) = 0$, we have $H(Q|Q_*) = 0$ so that $Q_*$ has the maximum range among all $Q$ satisfying $H(Q|X) = H(Q|Y) = 0$. In this sense, $Q_*$ is the maximal common RV\footnote{It is not hard to see the connection with Shannon's notion of common information introduced earlier in Section II.A. In particular, we have $Q_*=X\wedge Y$. G{\'a}cs and K\"{o}rner independently proposed the notion of common information two decades following Shannon's work \cite{ref35}.} of $X$ and $Y$. Remarkably, GK showed that
\begin{align*}
C_{GK}(X;Y) = H(Q_*) \label{eq:1} \tag{1}    
\end{align*}
Thus, common GK codes cannot exploit any correlation beyond deterministic interdependence of the sources. $C_{GK}(X;Y)$ depends solely on the zero pattern of $p_{XY}$ and is zero for all indecomposable distributions. 

The following double markovity lemma (see proof in Appendix A) is useful.
\begin{lemma}
A triple of RVs $(X,Y,Q)$ satisfies the double Markov conditions 
\begin{align*}
X-Y-Q,\text{ }Y-X-Q \tag{2}
\end{align*}
iff there exists a pmf $p_{Q'|XY}$ such that $H(Q'|X)=H(Q'|Y)=0$ and $XY-Q'-Q$. Furthermore, (2) implies $I(XY;Q) = H(Q')$ iff $H(Q'|Q)=0$.
\end{lemma}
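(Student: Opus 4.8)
The plan is to establish the equivalence by proving each implication separately and then to settle the ``furthermore'' clause with a single information identity. I expect the reverse implication and the furthermore to be routine manipulations, while the forward implication---manufacturing the common RV $Q'$ out of the bare double Markov conditions---will carry the real content of the lemma.

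For the reverse direction, suppose a conditional pmf $p_{Q'|XY}$ with $H(Q'|X)=H(Q'|Y)=0$ and $XY-Q'-Q$ is given. By Lemma 1, $H(Q'|X)=H(Q'|Y)=0$ forces $Q'=g(X)=h(Y)$ for deterministic maps $g,h$, and $XY-Q'-Q$ reads $p(q|x,y)=p(q|q')$ whenever $p(x,y)>0$. Fixing $y$, every $x$ with $p(x,y)>0$ produces the same label $q'=h(y)$, so averaging $p(q|x,y)=p(q|h(y))$ over $x$ leaves it intact; hence $p(q|x,y)=p(q|y)$, i.e. $X-Y-Q$, and by symmetry $Y-X-Q$.

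For the forward direction I would take $Q'$ to be the maximal common RV $Q_*=X\wedge Y$ introduced above, namely the index of the connected component of the characteristic bipartite graph $B_{XY}$ that contains the support pair $(x,y)$. Then $H(Q_*|X)=H(Q_*|Y)=0$ holds automatically, since the component label is read off from either coordinate alone, so only $XY-Q_*-Q$ needs proof. The crux is to show that $p(q|x,y)$ is constant on each connected component of $B_{XY}$. Here $Y-X-Q$ gives $p(q|x,y)=p(q|x)$, so the value is unchanged when we move to an adjacent edge $(x,y')$ sharing the vertex $x$; symmetrically $X-Y-Q$ gives $p(q|x,y)=p(q|y)$, so it is unchanged across an edge $(x',y)$ sharing the vertex $y$. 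Propagating these two equalities along any path in $B_{XY}$ shows that $p(q|x,y)$ depends on $(x,y)$ only through the component label $q_*$, which is precisely $XY-Q_*-Q$. This propagation argument is the main obstacle: it is the one place where the two Markov hypotheses are genuinely combined rather than invoked separately, and it is exactly the connected-component structure underlying the G{\'a}cs--K\"{o}rner common RV that makes the combination work.

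Finally, for the furthermore clause, assume (2) and let $Q'$ be the common RV just produced, which is a deterministic function of $(X,Y)$. Since $H(Q'|XY)=0$, adjoining $Q'$ changes nothing, and the chain rule gives $I(XY;Q)=I(XYQ';Q)=I(Q';Q)+I(XY;Q|Q')$; the last term vanishes by $XY-Q'-Q$, so $I(XY;Q)=I(Q';Q)=H(Q')-H(Q'|Q)$. Consequently $I(XY;Q)=H(Q')$ iff $H(Q'|Q)=0$, as claimed.
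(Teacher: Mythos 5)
Your proof is correct and follows essentially the same route as the paper's: the forward direction constructs $Q'$ as the ergodic-decomposition (G\'acs--K\"orner) common RV and shows $p_{Q|XY}(q|\cdot,\cdot)$ is constant on each connected component, and the furthermore clause uses the same chain-rule identity $I(XY;Q)=I(XYQ';Q)=I(Q';Q)=H(Q')-H(Q'|Q)$. The only difference is one of detail: you spell out the path-propagation along $B_{XY}$ and the converse computation, both of which the paper compresses (it asserts the constancy over $\mathcal{X}_{q'}\times\mathcal{Y}_{q'}$ directly and calls the converse ``obvious'').
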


\begin{remark}
\emph{For all $X,Y$ we have $I(X;Y)=H({{Q}_{*}})+I(X;Y|{{Q}_{*}})$. We say that ${{p}_{XY}}$ is \emph{saturable} if $I(X;Y|{{Q}_{*}})=0$. Equivalently, ${{p}_{XY}}$ is saturable iff there exists a pmf ${{p}_{Q|XY}}$ such that $X-Q-Y,\text{  }Q-X-Y,\text{  }Q-Y-X$ (see Lemma A1 in Appendix A). We say that the triple $(X,Y,Q)$ has a \emph{pairwise double Markov} structure when the latter condition holds.}
\end{remark}

The following alternative characterizations of $C_{GK}(X;Y)$ follow from Lemma 2 \cite{ref33}.
\begin{align*}
C_{GK}(X;Y) &= \mathop {\max }\limits_{\substack{ Q:\text{ }Q-X-Y\\ \hspace{4mm} Q-Y-X}} I(XY;Q)\\
&= I(X;Y)-\mathop {\min }\limits_{\substack{ Q:\text{ }Q-X-Y\\ \hspace{4mm} Q-Y-X}} I(X;Y|Q),\tag{3}
\end{align*}
where the cardinality of the alphabet $\mathcal{Q}$ is bounded as $|{\mathcal{Q}}|\leq |\mathcal{X}||\mathcal{Y}|+2$.

Wyner \cite{ref32} defined CI as the minimum rate of CR needed to simulate a 2-DMS $(\mathcal{X}\times \mathcal{Y}, \; \; p_{XY})$ using local operations and no communication. More precisely, given access to a common uniform random string $Q_n\sim \operatorname{unif}[1:2^{nR}]$ and independent noisy channels $p_{{\hat{X}^{n}}|Q_n}(x^n|q)$ and $p_{{\hat{Y}^{n}}|Q_n}(y^n|q)$ such that $\left({\hat{X}^{n}},{\hat{Y}^{n}}\right)$ $\varepsilon$-recovers $(X^n,Y^n)$, the Wyner CI, denoted $C_W(X;Y)$, is the minimum cost (in terms of the number of common random bits per symbol $R$) for the distributed approximate simulation of ${{p}_{XY}}$. $C_W(X;Y)$ admits an elegant single-letter characterization,
\begin{align*}
  C_W(X;Y) &\defeq \mathop {\min }\limits_{Q:X - Q - Y} I(XY;Q)\\
  &= I(X;Y)+\mathop {\min }\limits_{Q:X - Q - Y} I(Y;Q|X)+I(X;Q|Y), \tag{4}
\end{align*}
where again $|{\mathcal{Q}}|\leq |\mathcal{X}||\mathcal{Y}|+2$.

A related notion of \emph{common entropy}, $G(X,Y)$ is useful for characterizing a zero-error version of the Wyner CI \cite{ref32n}. 
\begin{align*}
G(X;Y) \defeq \mathop {\min }\limits_{Q:X - Q - Y} H(Q)  \tag{5}
\end{align*}

Gray and Wyner (GW) \cite{ref32m} devised a distributed lossless source coding network for jointly encoding the 2-DMS into a common part (at rate $R_c$) and two private parts (at rates $R_x$ and $R_y$), and separately decoding each private part using the common part as side information. The optimal rate region ${\Re _{\op{GW}}}(X;Y)$ for this ``two-decoder'' network configuration is given by,
\begin{displaymath}
{\Re _{\op{GW}}}(X;Y) = \begin{cases}
(R_c,R_x,R_y) \in \mathbb{R}_ + ^{3}:\exists {p_{Q|XY}} \in {{{\mathcal{P}}}_{XY}},\\
\text{s.t. } {R_c} \geq I({XY};Q), \\ 
\hspace{5.5mm} {R_x} \geq H({X}|Q),{\text{ }}{R_y} \geq H({Y}|Q),
\end{cases}
\end{displaymath}
where ${{\mathcal{P}}_{XY}}$ is the set of all conditional pmfs ${p_{Q|XY}}$ s.t. $|{\mathcal{Q}}|\leq |\mathcal{X}||\mathcal{Y}|+2$. A trivial lower bound to ${\Re _{\op{GW}}}(X;Y)$ follows from basic information-theoretic considerations \cite{ref32m},
\begin{displaymath}
\begin{gathered}
  {\Re _{\op{GW}}}(X;Y) \subseteq {\mathfrak{L}_{\op{GW}}}(X;Y) \hfill \\
  \hspace{12mm} = \left\{ \begin{gathered}
  ({R_c},R_x,R_y):{R_c} + {R_x} \geq H({X}), \hfill \\
  \hspace{20.5mm} {R_c} + {R_y} \geq H({Y}), \hfill \\
  \hspace{20.5mm} {R_0} + R_x + R_y  \geq H(XY) \hfill 
\end{gathered}  \right\}. \hfill
\end{gathered} 
\end{displaymath}
The different notions of CI can be viewed as extreme points for the corresponding common rate $R_c$ in the two-decoder network\footnote{See Problem 16.28--16.30, p. 394 in \cite{ref34}}, i.e., for $({R_x},{R_y},{R_c}) \in {\Re _{\op{GW}}}(X;Y)$, we have
\begin{align*}
C_{GK}(X;Y) &=\mathop {\max }\limits_{ \hspace{4mm} {R_c} + {R_x} = H(X),{\text{ }}{R_c} + {R_y} = H(Y)} R_c, \\
I(X;Y) &= \mathop {\max }\limits_{\hspace{4mm} 2{R_c} + {R_x} + {R_y} = H(X) + H(Y)} R_c, \\
C_W(X;Y) &=\mathop {\min }\limits_{ \hspace{4mm} {R_c} + {R_x} + {R_y} = H(X,Y)} R_c.
\end{align*}

\begin{remark}
\emph{The different notions of CI are related as, ${{C}_{GK}}(X;Y)\leq I(X;Y)\leq {{C}_{W}}(X;Y)$, with equality iff $p_{XY}$ is saturable, whence ${{C}_{GK}}(X;Y)=I(X;Y)\iff I(X;Y)=C_W(X;Y)$ (see Lemma A2 in Appendix A).}
\end{remark}

\begin{remark}
\emph{$C_{GK}(X_1;\ldots;X_K)$ is monotonically nonincreasing in the number of input arguments $K$. In contrast, $C_W(X_1;\ldots;X_K)$ is monotonically nondecreasing in $K$. It is easy to show that $C_{GK}(X_1;\ldots;X_K) \leq \mathop{\min}\limits_{i\ne j}\; I(X_i;X_j)$, while $C_W(X_1;\ldots;X_K) \geq \mathop{\max}\limits_{i\ne j}\; I(X_i;X_j)$ for any $i,j\in \{1,\ldots,K\}$ (see Lemma A3 in Appendix A).}
\end{remark}

Witsenhausen \cite{ref41} defined a symmetric notion of private information. Witsenhausen?s total private information, denoted ${{M}_{W}}(X;Y)$, is defined as the complement of Wyner's CI,
\begin{align*}
M_W(X;Y) \defeq H(XY)-C_W(X;Y)=\mathop {\max }\limits_{Q:{\text{ }}X - Q - Y} H(XY|Q).
\end{align*}

One can define the private information of $Y$ with respect to $X$ (denoted ${\tilde P_W}(Y\backslash X)$) as
\begin{align*}
{\tilde P_W}(Y\backslash X) \defeq \mathop {\max }\limits_{\substack{ Q:\text{ }X-Q-Y\\ \hspace{4mm} X-Y-Q}} H(Y|Q).\tag{6}
\end{align*}
Likewise, the complement of ${\tilde P_W}(Y\backslash X)$ is defined as
\begin{align*}
{\tilde C_W}(Y\backslash X) \defeq \mathop {\min }\limits_{\substack{ Q:\text{ }X-Q-Y\\ \hspace{4mm} X-Y-Q}} H(Q). \tag{7}
\end{align*}
The double Markov constraint (see Lemma 2) already hints at the structure of the minimizer $Q$ in (7). The following lemma (see proof in Appendix A) shows that the minimizer in ${{\tilde{C}}_{W}}(Y\backslash X)$ is a minimal sufficient statistic of $Y$ with respect to $X$.

\begin{lemma}
Let $Q_{Y}^{X}$ denote a function $f$ from $\mathsf{\mathcal{Y}}$ to the probability simplex ${{\Delta }_{\mathsf{\mathcal{X}}}}$ (the space of all distributions on $\mathcal{X}$) that defines an equivalence relation on $\mathcal{Y}$:
\begin{align*}
y \equiv y'{\text{ }}\operatorname{iff}{\text{ }}{p_{X|Y}}(x|y) = {p_{X|Y}}(x|y'),{\text{ }}x \in \mathcal{X},{\text{ }}y,y' \in \mathcal{Y}.
\end{align*}
Then $Q_{Y}^{X}$ is a minimal sufficient statistic of $Y$ with respect to $X$.
\end{lemma}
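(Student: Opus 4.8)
The plan is to verify separately the two defining properties of a minimal sufficient statistic for the function $Q_Y^X = f(Y)$, where $f(y)$ is the posterior $p_{X|Y}(\cdot|y)\in\Delta_{\mathcal{X}}$ and the equivalence classes of $\equiv$ are exactly its level sets. Throughout I would use the information-theoretic reading of the notions involved: a deterministic statistic $T=g(Y)$ is \emph{sufficient} for $X$ precisely when $X - g(Y) - Y$ is a Markov chain, and it is \emph{minimal} when it is a function of every sufficient statistic, i.e. $g(Y)\succcurlyeq f(Y)$ in the partial order of Section II.A. The crux of both halves is one elementary equivalence: since $T=g(Y)$ is a deterministic function of $Y$, the chain $X - g(Y) - Y$ holds iff the posterior of $X$ depends on $Y$ only through $g$, namely $p_{X|Y}(x|y)=p_{X|g(Y)}(x|g(y))$ for every $x$ and every $y$ in the support. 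I would prove this equivalence once and reuse it for both sufficiency and minimality.

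For sufficiency, I would unwind the definition of $f$. By construction every $y$ in a given class shares one posterior $q$, so $p_{X|Y}(x|y)=q(x)$ there. Averaging the joint law over the class gives $p_{X|f(Y)}(x|q)=q(x)$, whence $p_{X|Y}(x|y)=p_{X|f(Y)}(x|f(y))$ identically on the support. By the equivalence above this is exactly $X - f(Y) - Y$, so $Q_Y^X$ is sufficient.

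For minimality, I would take an arbitrary sufficient statistic $T=g(Y)$ and show that $f$ factors through $g$. Sufficiency of $g$, via the same equivalence, yields $p_{X|Y}(x|y)=p_{X|g(Y)}(x|g(y))$ for all $x,y$. Hence whenever $g(y)=g(y')$ the posteriors coincide, $p_{X|Y}(\cdot|y)=p_{X|Y}(\cdot|y')$, which is precisely the relation $y\equiv y'$ defining $f$; thus $g(y)=g(y')$ forces $f(y)=f(y')$. Therefore $f=h\circ g$ for some map $h$, i.e. $H(f(Y)|g(Y))=0$ and $g(Y)\succcurlyeq f(Y)$. As $g$ was arbitrary, $Q_Y^X$ lies below every sufficient statistic and is minimal.

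The step I expect to be the genuine obstacle is the equivalence between the Markov condition and posterior-invariance, together with a clean treatment of zero-probability outcomes: one must restrict to the supports $\{y:p_Y(y)>0\}$ and $\{x:p_X(x)>0\}$ so that the conditionals are well defined, and check that the averaging identity in the sufficiency step is not corrupted by null classes. Everything else---the factorization through $g$, its translation into $H(\cdot|\cdot)=0$, and the link to the stated equivalence relation---is then routine. It is worth including a short Bayes-rule reconciliation showing that the posterior relation $p_{X|Y}(\cdot|y)=p_{X|Y}(\cdot|y')$ is the same as the classical criterion that $p_{Y|X}(y|x)/p_{Y|X}(y'|x)$ be constant in $x$, which anchors the lemma to the familiar Fisher--Neyman notion of minimal sufficiency.
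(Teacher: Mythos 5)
Your proposal is correct and takes essentially the same route as the paper's proof: sufficiency via averaging the shared posterior over each equivalence class to obtain the Markov chain $X - Q_Y^X - Y$, and minimality by showing that any sufficient $g(Y)$ gluing $y$ and $y'$ forces $p_{X|Y}(\cdot|y)=p_{X|Y}(\cdot|y')$, so that $f$ factors through $g$ (the paper's conclusion $X - Q_1 - Q_2 - Y$). Your extra care with null outcomes and the Bayes-rule reconciliation with the Fisher--Neyman criterion are harmless embellishments, not a different method.
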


Theorem 1 gives a decomposition of $H(Y)$ into a part that is correlated with $X$ ($H(Q_{Y}^{X})$) and a part that carries no information about $X$ ($H(Y|Q_{Y}^{X})$) (see proof in Appendix A). 

\begin{theorem}
For any pair of correlated RVs $(X,Y)\sim p_{XY}$, the following hold:
\begin{align*}
  &{{\tilde C}_W}(Y\backslash X)= H(Q_Y^X), \tag{8a} \\
  &{{\tilde P}_W}(Y\backslash X)= H(Y|Q_Y^X), \tag{8b} \\
  &H(Y)= {{\tilde C}_W}(Y\backslash X)+{{\tilde P}_W}(Y\backslash X) = H(Q_Y^X)+H(Y|Q_Y^X), \tag{8c}\\
  &{C_W}(X;Y)\leq {{\tilde C}_W}(Y\backslash X). \tag{8d}
\end{align*}
\end{theorem}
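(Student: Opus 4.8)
The plan is to reduce everything to parts (8a) and (8b), since (8c) is then immediate from the chain rule and (8d) follows from a one-line comparison. Observe first that both $\tilde{C}_W(Y\backslash X)$ and $\tilde{P}_W(Y\backslash X)$ optimise over the same feasible set $\mathcal{F}=\{Q:X-Q-Y,\ X-Y-Q\}$, one minimising $H(Q)$ and the other maximising $H(Y|Q)$. I would begin by checking that the candidate optimiser $Q_Y^X$ of Lemma 3 lies in $\mathcal{F}$: it is a deterministic function of $Y$, so $H(Q_Y^X|Y)=0$ gives $X-Y-Q_Y^X$ for free, while its defining property (all $y$ in a class share the same $p_{X|Y}(\cdot|y)$) makes it a sufficient statistic, i.e. $X-Q_Y^X-Y$. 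This feasibility yields the achievable direction of both (8a) and (8b), namely $\tilde{C}_W(Y\backslash X)\le H(Q_Y^X)$ and $\tilde{P}_W(Y\backslash X)\ge H(Y|Q_Y^X)$.

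The key step---and the one I expect to carry the whole argument---is a pointwise identity forced by the two Markov constraints. For any $Q\in\mathcal{F}$ and any co-occurring pair $(q,y)$ with $p(q,y)>0$, the constraint $X-Q-Y$ (i.e. $X\perp Y\mid Q$) gives $p(x|q,y)=p(x|q)$, while $X-Y-Q$ (i.e. $X\perp Q\mid Y$) gives $p(x|y,q)=p(x|y)$; equating the two yields $p(x|q)=p(x|y)$ for every $x$. Thus every $y$ compatible with a fixed $q$ induces the same conditional law $p_{X|Y}(\cdot|y)$, which is exactly the quantity defining the equivalence classes of $Q_Y^X$. Consequently $Q_Y^X$ is constant on the conditional support of $Y$ given $Q=q$, so $Q_Y^X=g(Q)$ for some deterministic $g$; equivalently $H(Q_Y^X\mid Q)=0$ for every feasible $Q$.

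With this containment $\sigma(Q_Y^X)\subseteq\sigma(Q)$ in hand, both converse directions drop out. For (8a), $H(Q)=H(Q_Y^X)+H(Q\mid Q_Y^X)\ge H(Q_Y^X)$, so $Q_Y^X$ minimises $H(Q)$ and $\tilde{C}_W(Y\backslash X)=H(Q_Y^X)$. For (8b), since $Q_Y^X$ is a function of $Q$, conditioning on the finer variable reduces entropy, giving $H(Y|Q)\le H(Y|Q_Y^X)$ for every feasible $Q$, so $Q_Y^X$ maximises $H(Y|Q)$ and $\tilde{P}_W(Y\backslash X)=H(Y|Q_Y^X)$. Part (8c) is then just the chain rule applied to the function $Q_Y^X$ of $Y$: $H(Y)=H(Q_Y^X)+H(Y|Q_Y^X)=\tilde{C}_W(Y\backslash X)+\tilde{P}_W(Y\backslash X)$. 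Finally (8d) follows by feeding $Q_Y^X$ into the less constrained Wyner program (4): since $Q_Y^X$ satisfies $X-Q_Y^X-Y$ it is feasible there, whence $C_W(X;Y)\le I(XY;Q_Y^X)\le H(Q_Y^X)=\tilde{C}_W(Y\backslash X)$.

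The main obstacle is really the second paragraph: one must be careful that the identity $p(x|q)=p(x|y)$ is only claimed on the joint support, and that this support argument legitimately upgrades equality of conditionals into ``$Q_Y^X$ is a deterministic function of $Q$'' (up to $\mathbb{P}$-null sets). Verifying the sufficiency half of feasibility, $X-Q_Y^X-Y$, also leans on Lemma 3; everything else is bookkeeping with the chain rule and the elementary bounds $I(XY;Q)\le H(Q)$ and monotonicity of conditional entropy under refinement.
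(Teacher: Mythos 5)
Your proof is correct and follows essentially the same route as the paper's: feasibility of $Q_Y^X$ together with the key fact that $H(Q_Y^X\mid Q)=0$ for every $Q$ satisfying the double Markov constraints, then (8c) by the chain rule and (8d) by a one-line feasibility comparison with Wyner's program. The only difference is that where the paper obtains that key fact by invoking minimal sufficiency (Lemma 3, with a citation to Lemma 3.4(5) of Kamath--Anantharam), you prove it directly via the pointwise identity $p(x|q)=p(x|y)$ on the joint support of $(Q,Y)$ --- a self-contained argument that, as a bonus, transparently covers randomized $Q$ rather than only deterministic statistics of $Y$.
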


Let ${\mathcal{X}}_k \subseteq \mathcal{X}$, ${\mathcal{Y}}_k \subseteq \mathcal{Y}$, where $\mathcal{X}_{k}$$'$s and $\mathcal{Y}_{k}$$'$s having different subscripts are distinct (but not necessarily disjoint) subsets. 
Let $({\mathcal{X}},{\mathcal{Y}})$ admit a unique decomposition into \emph{components} $\{({\mathcal{X}}_k,{\mathcal{Y}}_k)\}_{k=1}^{p}$ so that $\bigcup\nolimits_{k=1}^p {\mathcal{X}}_k=\mathcal{X}$, and $\{{\mathcal{Y}}_k\}_{k=1}^{p}$ is a partition of $\mathcal{Y}$ induced by the equivalence relation in Lemma 3, i.e., $\forall y,{y}'\in {{{\mathcal{Y}}}_{k}},\text{ }x\in {{{\mathcal{X}}}_{k}},\text{ }y\equiv {y}'$ and $\forall y\in {{{\mathcal{Y}}}_{k}},\text{ }x\notin {{{\mathcal{X}}}_{k}},\text{ }{{p}_{Y|X}}(y|x)=0$. 
We also require that each component is the ``largest'' possible in the sense that for any two components $({\mathcal{X}}_i,{\mathcal{Y}}_i)$, $({\mathcal{X}}_j,{\mathcal{Y}}_j)$, there exists $x' \in {\mathcal{X}}_i \cup {\mathcal{X}}_j$ such that ${p_{X|Y}}(x'|y_i)\ne{p_{X|Y}}(x'|y_j)$.
The size of the component $({{{\mathcal{X}}}_{k}},{{{\mathcal{Y}}}_{k}})$ is defined as $|{{{\mathcal{Y}}}_{k}}|$. 
Given such a unique decomposition of $({\mathcal{X}},{\mathcal{Y}})$ into components $\{({\mathcal{X}}_k,{\mathcal{Y}}_k)\}_{k=1}^{p}$, the following theorem gives necessary and sufficient conditions for ${{\tilde{P}}_{W}}(Y\backslash X)$ achieving its minimum and maximum value (see proof in Appendix A).

\begin{theorem}
${{\tilde{P}}_{W}}(Y\backslash X)$ achieves its minimum, ${{\tilde{P}}_{W}}(Y\backslash X)=0$ iff there exist no component with size greater than one. 

On the other hand, ${{\tilde{P}}_{W}}(Y\backslash X)$ achieves its maximum, ${{\tilde{P}}_{W}}(Y\backslash X)=H(Y|X)$ iff ${{p}_{XY}}$ is saturable iff each component $({{{\mathcal{X}}}_{k}},{{{\mathcal{Y}}}_{k}})$ is a connected component induced by the ergodic decomposition of ${{p}_{XY}}$.
\end{theorem}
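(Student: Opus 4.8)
The plan is to reduce the whole statement, via Theorem~1, to structural properties of the minimal sufficient statistic $Q_Y^X$ and then to compare its partition with the ergodic (GK) partition of $Q_*$. By (8b), $\tilde{P}_W(Y\backslash X)=H(Y|Q_Y^X)$, and I would first argue that the ``largest component'' hypothesis forces each $\mathcal{Y}_k$ to coincide with an equivalence class of the relation $\equiv$ of Lemma~3: conditions (a)--(b) make all $y,y'\in\mathcal{Y}_k$ share the conditional $p_{X|Y}(\cdot\mid y)$, and the maximality clause forbids two distinct blocks with identical conditionals. Hence $Q_Y^X$ is constant on each $\mathcal{Y}_k$, so that $H(Y|Q_Y^X)=\sum_k \mathbb{P}(Y\in\mathcal{Y}_k)\,H(Y\mid Y\in\mathcal{Y}_k)$; and since $Q_Y^X$ is a deterministic function of $Y$ we also have $H(Y|Q_Y^X)=H(Y)-H(Q_Y^X)$. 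These two identities drive the two halves of the proof.

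For the minimum, each summand $H(Y\mid Y\in\mathcal{Y}_k)$ is nonnegative and, assuming every symbol of $\mathcal{Y}$ carries positive mass, vanishes iff $Y$ is deterministic on $\mathcal{Y}_k$, i.e. iff $|\mathcal{Y}_k|=1$. As the weights $\mathbb{P}(Y\in\mathcal{Y}_k)$ are strictly positive, the sum is zero iff every block is a singleton, which is exactly the statement that no component has size greater than one.

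For the maximum I would start from $\tilde{P}_W(Y\backslash X)-H(Y|X)=I(X;Y)-H(Q_Y^X)$. Because $Q_Y^X$ is a sufficient statistic, $I(X;Q_Y^X)=I(X;Y)$, and $I(X;Q_Y^X)=H(Q_Y^X)-H(Q_Y^X|X)$; hence $\tilde{P}_W(Y\backslash X)=H(Y|X)$ iff $H(Q_Y^X|X)=0$, i.e. iff $Q_Y^X$ is also a function of $X$, i.e. iff $Q_Y^X$ is a \emph{common} RV of $X$ and $Y$. By maximality of $Q_*=X\wedge Y$ among common RVs, $H(Q_Y^X|Q_*)=0$; conversely, any two $y,y'$ with equal conditionals share a support point $x\in\mathcal{X}$ and are therefore joined by a path $y-x-y'$ in $B_{XY}$, so $Q_Y^X$ always refines the ergodic partition and $H(Q_*|Q_Y^X)=0$. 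Thus $Q_Y^X$ is a common RV iff $Q_Y^X\sim^{S}Q_*$, i.e. iff each component is a connected component of the ergodic decomposition. To close the three-way equivalence I would identify this coincidence with saturability: if $Q_Y^X\sim^{S}Q_*$ then all $y$ in an ergodic component share $p_{X|Y}(\cdot\mid y)$, which forces $p_{X|Y}(x\mid y)=p_{X|Q_*}(x\mid q_*)$ and hence $X\perp Y\mid Q_*$; conversely $I(X;Y|Q_*)=0$ makes $p_{X|Y}(\cdot\mid y)$ depend only on the component, merging the $\equiv$-classes into the ergodic classes. Since $I(X;Y|Q_*)=0$ is precisely the definition of saturability, all three conditions are equivalent.

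\textbf{Main obstacle.} The delicate part is the two-sided comparison of the minimal-sufficient-statistic partition ($Q_Y^X$) with the GK/ergodic partition ($Q_*$): establishing that $Q_Y^X$ always refines $Q_*$ through the bipartite-connectivity argument in $B_{XY}$, and that equality of the two partitions is equivalent to the conditional independence $X\perp Y\mid Q_*$. I also need the ``largest component'' hypothesis to genuinely pin each $\mathcal{Y}_k$ to an $\equiv$-class rather than to a coarser grouping, and I must keep the support conventions consistent so that $H(Y\mid Y\in\mathcal{Y}_k)=0$ holds exactly when $|\mathcal{Y}_k|=1$. These partition-matching steps, rather than any computation, are where the argument must be handled most carefully.
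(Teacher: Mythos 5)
Your proof is correct, and it takes a genuinely different, more self-contained route than the paper's. For the minimum, the paper argues only one direction directly (no component of size greater than one implies ${\tilde C_W}(Y\backslash X)=H(Y)$ via Lemma 3, hence ${\tilde P_W}(Y\backslash X)=0$ via Theorem 1) and cites Corollary 3 of \cite{ref42} for the converse; your component-wise expansion $H(Y|Q_Y^X)=\sum_k \mathbb{P}(Y\in\mathcal{Y}_k)\,H(Y\mid Y\in\mathcal{Y}_k)$ settles both directions at once. For the maximum, the paper reduces achievement to $H(Q_Y^X)=I(X;Y)$ exactly as you do, but then closes the argument through the Wyner common information: the chain $H(Q_*)\leq I(X;Y)\leq C_W(X;Y)\leq H(Q_Y^X)$ (using (8d)) forces $I(X;Y)=C_W(X;Y)$, which is equivalent to saturability by Remark 3 and Lemma A2, while the second equivalence (saturable iff the components are the ergodic components) is again outsourced, to Theorem 4 of \cite{ref42}. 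You bypass $C_W$ entirely: sufficiency gives $I(X;Q_Y^X)=I(X;Y)$, so $H(Q_Y^X)=I(X;Y)$ iff $H(Q_Y^X|X)=0$, i.e.\ iff the minimal sufficient statistic is itself a common RV; your bipartite-path refinement argument is precisely the paper's step ``$Q_*=f(Q_Y^X)$,'' and maximality of $Q_*$ supplies the reverse containment, after which your direct verification that equality of the two partitions is the same as $I(X;Y|Q_*)=0$ delivers all three equivalences without external citation. What the paper's route buys is brevity and placement of the result within the common-information inequality chain; what yours buys is self-containedness and a sharper structural reading---${\tilde P_W}(Y\backslash X)$ is maximal exactly when $Q_Y^X$ coincides (up to informational sameness) with the G\'acs--K\"orner common part $X\wedge Y$. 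The support convention you flag (every $y\in\mathcal{Y}$ has positive mass, so $H(Y\mid Y\in\mathcal{Y}_k)=0$ iff $|\mathcal{Y}_k|=1$) is indeed the only delicate point in the minimum half, and it is consistent with the paper's standing assumptions on the component decomposition.
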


\begin{example}
${{\tilde{P}}_{W}}(Y\backslash X)$ attains the lower bound for the following distribution $p_{XY}^{{}}$. Let $\mathcal{X}=\{1,2,3,4\},\mathcal{Y}=\{5,6,7\}$. We write $p_{XY}^{{}}(a,b)=(ab)$. Given, $(15)=\tfrac{5}{16},(17)=\tfrac{1}{8},(25)=\tfrac{3}{32},(27)=\tfrac{7}{32},(35)=\tfrac{5}{32},(37)=\tfrac{1}{16},(46)=\tfrac{1}{32}$, or graphically, ${{p}_{XY}}=\tfrac{1}{32}\left( \begin{matrix} 10 & 3 & 5 & .  \\    . & . & . & 1  \\    4 & 7 & 2 & .  \\ \end{matrix} \right)$. Let $f(y)={{p}_{X|Y=y}}$. Then we have $f(5)=[\tfrac{5}{9},\tfrac{1}{6},\tfrac{5}{18},0]$, $f(6)=[0,0,0,1]$, and $f(7)=[\tfrac{4}{13},\tfrac{7}{13},\tfrac{2}{13},0]$, so that $H(Q_{Y}^{X})=H(f(Y))=H(\tfrac{9}{16},\tfrac{1}{32},\tfrac{13}{32})=H(Y)=1.15$. Consequently ${{\tilde{P}}_{W}}(Y\backslash X)=0$. One can also easily verify that $H(Q_{X}^{Y})=H(\tfrac{21}{32},\tfrac{5}{16},\tfrac{1}{32})<H(X)$.
\end{example}

The quantity ${{\tilde{C}}_{W}}(Y\backslash X)$ first appeared in \cite{ref25} where it was called the dependent part of $Y$ from $X$.
Intuitively, ${{\tilde{C}}_{W}}(Y\backslash X)$ is the rate of the information contained in $Y$ about $X$. 
${{\tilde{C}}_{W}}(Y\backslash X)$ also appears in \cite{ref42}, \cite{ref43} and has the following coding-theoretic interpretation in a  source network with coded side information setup\footnote{See Theorem 16.4, p. 361 and Problem 16.26, p. 393 in \cite{ref34}.} where $X$ and $Y$ are encoded independently (at rates $R_X$ and $R_Y$, resp.) and a (joint) decoder needs to recover $X$ (with small error probability) using the rate-limited side information $Y$: ${{\tilde{C}}_{W}}(Y\backslash X)$ is the minimum rate $R_Y$ such that $R_X=H(X|Y)$ is achievable \cite{ref42}.
The following example shows that even though $H(Y)$ admits a decomposition\footnote{From an information structure aspect, recall that owing to the nonmodularity of the information lattice, even a unique decomposition into private and common information structures is not guaranteed (see Remark 1).} of the form in (8c), it might not always be possible to isolate its parts \cite{ref42}.

\begin{example}
Let $\mathcal{X}=\{1,2\}$ and $\mathcal{Y}=\{3,4,5,6\}$. Consider the perturbed uniform distribution $p_{XY}$ with $(13)=(14)=(15)=(16)=\tfrac{1}{8},(23)=\tfrac{1}{8}-\delta,(24)=\tfrac{1}{8}+\delta,(25)=\tfrac{1}{8}+{\delta }',(26)=\tfrac{1}{8}-{\delta }'$, where $\delta ,{\delta }'<\tfrac{1}{8}$. If $\delta ={\delta }'=\tfrac{1}{16}$, $H(Q_{Y}^{X})=H(\tfrac{3}{8},\tfrac{5}{8})<H(Y)$. However, if $\delta \ne {\delta }'$, then $H(Q_{Y}^{X})=H(Y)$. In fact, if $\delta \ne {\delta }'$, as $\delta,{\delta }'\to 0$, $H(Q_{Y}^{X})=H(Y)\approx 2$, while $I(X;Y)\to 0$. 
Thus, even when $I(X;Y)\ll H(Y)$, one needs to transmit the entire $Y$ (i.e., $R_Y \geq H(Y)$) to convey the full information contained in $Y$ about $X$.
\end{example}

\subsection{Related Common Information Measures}

We now briefly review some related candidate bivariate correlation measures. We highlight a duality in the optimizations in computing the various CI quantities. 

Starting with Witsenhausen \cite{ref41m}, the Hirschfeld-Gebelein-R{\'e}nyi (HGR) maximal correlation \cite{ref41n} has been used to obtain many impossibility results for the noninteractive simulation of joint distributions \cite{ref64}. The maximal correlation, denoted $\hgr{X;Y}$, is a function of $p_{XY}(x,y)$ and is defined as
\begin{align*}
\hgr{X;Y}=\mathbb{E}[f_1(X)f_2(Y)]
\end{align*}
where $\mathbb{E}[\cdot]$ is the expectation operator and the supremum is taken over all real-valued RVs $f_1(X)$ and $f_2(Y)$ such that $\mathbb{E}[f_1(X)]=\mathbb{E}[f_2(Y)]=0$ and $\mathbb{E}[f_1^2(X)]=\mathbb{E}[f_2^2(Y)]=1$. $\hgr{X;Y}$ has the following geometric interpretation \cite{ref41m}: if $L^2(X,Y)$ is a real separable Hilbert space, then $\hgr{X;Y}$ measures the cosine of the angle between the subspaces $L^2(X)=\{f_1(X):\mathbb{E}[f_1]=0,\textbf{ }\mathbb{E}[f_1^2]<\infty\}$ and $L^2(Y)=\{f_2(Y):\mathbb{E}[f_2]=0,\textbf{ }\mathbb{E}[f_2^2]<\infty\}$. $\hgr{X;Y}$ shares a number of interesting properties with $I(X;Y)$, viz., (a) nonnegativity: $0\leq \hgr{X;Y}\leq 1$ with $\hgr{X;Y}=0$ iff $X\perp Y$, and $\hgr{X;Y}=1$ iff $C_{GK}(X;Y)>0$, i.e. iff $p_{XY}(x,y)$ is decomposable \cite{ref41m}, and (b) data processing: $X'-X-Y-Y'\implies (\hgr{X';Y'}\leq\hgr{X;Y})$. 

Intuitively, for indecomposable distributions, if $\hgr{X;Y}$ is near 1, then $(X,Y)$ have still lots in common. 
Consider again the GK setup with node $\mathcal{X}$ observing $X^n$, node $\mathcal{Y}$ observing $Y^n$, where $(X^n,Y^n)$ is generated by a 2-DMS $(\mathcal{X}\times \mathcal{Y},\text{ }p_{XY})$.
Now, a (one-way) rate-limited channel is made available from node $\mathcal{Y}$ to $\mathcal{X}$. Then per \cite{ref65}, the maximum rate of CR extraction at rate $R$ (denoted $C(R)$) is, 
\begin{align*}
C(R)=\mathop {\max }\limits_{\substack{p_{Q|Y}:\text{ }I(Q;Y)-I(Q;X)\leq R}} I(Q;Y).
\end{align*}
We have $C_{GK}(X;Y)=C(0)$ by definition. Hence, if $R=0$, for indecomposable sources, not even a single bit in common can be extracted \cite{ref41m}. 
But if $R>0$, the first few bits of communication can ``unlock'' the common core of the 2-DMS. Assuming $C(0)=0$, the initial efficiency of CR extraction is given by \cite{ref67}
\begin{align*}
C'(0)=\lim_{R\downarrow 0} \frac{C(R)}{R}=\frac{1}{1-(s^{*}(X;Y))^2},
\end{align*}
where $s^{*}(X;Y)=\mathop {\max}\limits_{\substack{p_{Q|Y}:\text{ }I(Q;Y)>0}} \tfrac{I(Q;X)}{I(Q;Y)}$.

Alternatively, given a 2-DMS $(\mathcal{X}\times \mathcal{Y},\text{ }p_{XY})$, one can define the maximum amount of information that a rate $R$ description of source $\mathcal{Y}$ conveys about source $\mathcal{X}$, denoted $\Upsilon(R)$, that admits the following single-letter characterization \cite{ref67}.
\begin{align*}
\Upsilon(R)=\mathop {\max }\limits_{\substack{p_{Q|Y}:\text{ }I(Q;Y)\leq R}} I(Q;X), \tag{9}
\end{align*}
where it suffices to restrict ourselves to $p_{Q|Y}$ with alphabet $\mathcal{Q}$ such that $|\mathcal{Q}|\leq |\mathcal{Y}|+1$.
The initial efficiency of information extraction from source $\mathcal{Y}$ is given by
\begin{align*}
\Upsilon'(0)=\left. \frac{d\Upsilon(R)}{dR} \right\vert_{R\downarrow 0}=s^*(X;Y).
\end{align*}
We have $s^{*}(X;Y)=1$ iff $C_{GK}(X;Y)>0$ \cite{ref67}. 

Interestingly, a dual of the optimization in (9) gives the well-known \emph{information bottleneck} (IB) optimization \cite{ref62} that provides a tractable algorithm for approximating the minimal sufficient statistic of $Y$ with respect to $X$ ($Q_{Y}^{X}$ in Lemma 3). For some constant $\epsilon$, the IB solves the nonconvex optimization problem,
\begin{align*}
\min_{\substack{p_{Q|Y}:\text{ }I(Q;X)\geq \epsilon}} I(Q;Y) \tag{10}
\end{align*}
by alternating iterations amongst a set of convex distributions \cite{ref62}.

Since ${C_W}(X;Y)$ is neither concave nor convex in $Q$, computation of $C_W(X;Y)$ remains a difficult extremization problem in general, and simple solutions exist only for some special distributions \cite{ref41}.

\subsection{New Measures}
A symmetric measure of CI that combines features of both the GK and Wyner measures can be defined by a RV $Q$ as follows.
\begin{align*}
C^1(X;Y)=\mathop {\min }\limits_{p_{Q|XY}} I(Y;Q|X)+I(X;Q|Y)+I(X;Y|Q), \tag{11} 
\end{align*}
where it suffices to minimize over all $Q$ such that $|{\mathcal{Q}}|\leq |\mathcal{X}||\mathcal{Y}|+2$. Observe that $C^1(X;Y)=0$ if $p_{XY}$ is saturable. $C^1(X;Y)$ thus quantifies the minimum distance to saturability. However, $C^1(X;Y)$ is much harder to compute than the GK CI. 

More useful for our immediate purposes is the following asymmetric notion of CI for 3 RVs $(X_1,X_2,Y)$ \cite{ref8}. 
\begin{align*}
C^2(\{X_1,X_2\};Y)=\mathop {\max }\limits_{Q:\text{ }Q-X_i-Y,\text{ }i=1,2} I(Q;Y) \tag{12} 
\end{align*}
It is easy to see that $C^2$ retains an important monotonicity property of the original definition of GK (see Remark 4) in that $C^2$ is  monotonically nonincreasing in the number of input $X_i$'s, i.e., $C^2(\{X_1,\ldots,X_K\};Y)\leq C^2(\{X_1,\ldots,X_{K-1}\};Y)$.

One can also define the following generalization of the Wyner common entropy in (5).
\begin{align*}
C^3(\{X_1,X_2\};Y) = \mathop {\min }\limits_{ Q:\text{ }X_i-Q-Y,\text{ }i=1,2} H(Q) \tag{13}
\end{align*}
It is easy to see that $C^3(\{X_1,X_2\};Y)\geq C^3(\{X_1\};Y)=G(X_1;Y) \geq C_{W}(X_1;Y)\geq I(X_1;Y)$. $C^3$ is monotonically nondecreasing in the number of input $X_i$'s.

Any reasonable CI-based measure of redundancy in the PI decomposition framework must be nonincreasing in the number of predictors. In the next section, we exclusively concentrate on $C^2$. Better understanding of $C^2$ will guide our investigation in Section III in search of an ideal measure of redundancy for PI decomposition.

\section{Partial Information Decomposition: The Case for One Target and Two Predictor Variables}
\vspace{2mm}

Consider the following generalization of Shannon's MI for three RVs $(X_1,X_2,Y)$, called \emph{co-information} \cite{ref46} or \emph{interaction information} (with a change of sign) \cite{ref46n}. 
\begin{align*}
{I_{Co}}({X_1};{X_2};{Y}) = I({X_1};{X_2})-I({X_1};{X_2}|{Y}) \tag{14}
\end{align*}
Co-information is symmetric with respect to permutations of its input arguments and can be interpreted as the gain or loss in correlation between two RVs, when an additional RV is considered. The symmetry is evident from noting that $I({X_1};{X_2})-I({X_1};{X_2}|{Y})=I({X_1};{Y})-I({X_1};{Y}|{X_2})=I({X_2};{Y})-I({X_2};{Y}|{X_1})$. Given a ground set $\Omega$ of RVs, the Shannon entropies form a Boolean lattice consisting of all subsets of $\Omega$, ordered according to set inclusions \cite{ref45}. Co-informations and entropies are M\"{o}bius transform pairs with the co-informations also forming a lattice \cite{ref46}. Co-information can however be negative when there is pairwise independence, as is exemplified by a simple two-input $\textsc{Xor}$ function, $Y=\textsc{Xor}({{X}_{1}};{{X}_{2}})$. Bringing in additional side information $Y$ induces artificial correlation between ${{X}_{1}}$ and ${{X}_{2}}$ when there was none to start with. Intuitively, these artificial correlations are the source of synergy. Indeed, co-information is widely used as a synergy-redundancy measure with positive values implying redundancy and negative values expressing synergy \cite{ref15,ref28,ref29,ref30}. However, as the following example shows, co-information confounds synergy and redundancy and is identically zero if the interactions induce synergy and redundancy in equal measure.

\begin{example}
Let $\mathcal{X}_1=\mathcal{X}_2=\mathcal{Y}=\{1,2,3,4\}$. We write $p_{{X_1}{X_2}{Y}}(a,b,c)\defeq(abc)$. Consider the following distribution: $(111)=(122)=(212)=(221)=(333)=(344)=(434)=(443)=\tfrac{1}{8}$. First note that $I(X_1X_2;Y)=2$ bits. The construction $p_{{X_1}{X_2}{Y}}$ is such that one bit of information about $Y$ is contained identically in both $X_1$ and $X_2$. The other bit of information about $Y$ is contained only in the joint RV $X_1X_2$. Thus, $X_1,X_2$ contains equal amounts of synergistic and redundant information about $Y$. However, it is easy to check that ${I_{Co}}(Y;X_1;X_2)=I(Y;X_1)-I(Y;X_1|X_2)=0$. 
\end{example}

It is also less clear if the co-information retains its intuitive appeal for higher-order interactions ($>2$ predictor variables), when the same state of a target RV $Y$ can have any combination of redundant, unique and (or) synergistic effects \cite{ref46}. 

The partial information (PI) decomposition framework (due to Williams and Beer \cite{ref2}) offers a solution to disentangle the redundant, unique and synergistic contributions to the total mutual information that a set of $K$ predictor RVs convey about a target RV. Consider the $K=2$ case. We use the following notation: $UI(\{{{X}_{1}}\};Y)$ and $UI(\{{{X}_{2}}\};Y)$ denote respectively, the unique information about $Y$ that $X_1$ and $X_2$ exclusively convey; ${{I}_{\cap}}(\{{{X}_{1}},{{X}_{2}}\};Y)$ is the redundant information about $Y$ that $X_1$ and $X_2$ both convey; $SI(\{{{X}_{1}}{{X}_{2}}\};Y)$ is the synergistic information about $Y$ that is conveyed only by the joint RV $(X_1,X_2)$. 

The governing equations for the PI decomposition are given in (15) \cite{ref2,ref3}.
\begin{align*}
  I({X_1}{X_2};Y) &= \underbrace {{I_ \cap }(\{ {X_1},{X_2}\} ;Y)}_{\text{redundant}} + \underbrace {SI(\{ {X_1}{X_2}\} ;Y)}_{\text{synergistic}}\\
                        &+ \underbrace {UI(\{ {X_1}\} ;Y) + UI(\{ {X_2}\} ;Y)}_{\text{unique}} \tag{15a}\\
  I({X_1};Y) &= {I_ \cap }(\{ {X_1},{X_2}\} ;Y) + UI(\{ {X_1}\} ;Y)  \tag{15b}\\
  I({X_2};Y) &= {I_ \cap }(\{ {X_1},{X_2}\} ;Y) + UI(\{ {X_2}\} ;Y)  \tag{15c}
\end{align*}
Using  the chain rule of MI, (15a)-(15c) implies
\begin{align*}
I({X_1};Y|{X_2}) = SI(\{ {X_1}{X_2}\} ;Y) + UI(\{ {X_1}\} ;Y)  \tag{15d}\\
I({X_2};Y|{X_1}) = SI(\{ {X_1}{X_2}\} ;Y) + UI(\{ {X_2}\} ;Y) \tag{15e}\\
I(Y;{X_1}) + UI(\{ {X_2}\} ;Y) = I(Y;{X_2}) + UI(\{ {X_1}\} ;Y) \tag{15f}
\end{align*}
From (15b)-(15e), one can easily see that the co-information is the difference between redundant and synergistic information. In particular, we have the following bounds.
\begin{align*}
-\min&\{I(X_1;Y|X_2),I(X_2;Y|X_1),I(X_1;X_2|Y)\}\\
&\leq {I_ \cap }(\{ {X_1},{X_2}\} ;Y)-SI(\{ {X_1}{X_2}\} ;Y)\\
&\leq \min\{I(X_1;Y),I(X_2;Y),I(X_1;X_2)\}\tag{15g}
\end{align*}
Equivalently, ${I_ \cap }(\{ {X_1},{X_2}\} ;Y) \leq SI(\{ {X_1}{X_2}\} ;Y)$ when there is any pairwise independence, i.e., when $X_1\perp X_2$, or $X_1\perp Y$, or $X_2\perp Y$, and ${I_ \cap }(\{ {X_1},{X_2}\} ;Y) \geq SI(\{ {X_1}{X_2}\} ;Y)$ when $(X_1,X_2,Y)$ form a Markov chain in any order, i.e., when $X_1-Y-X_2$, or $X_1-X_2-Y$ or $X_2-X_1-Y$. The following lemma gives conditions under which $I_\cap$ achieves its bounds.

\begin{lemma}
\leavevmode
    \begin{enumerate}
        \item[a)] If $X_1-X_2-Y$, then ${I_\cap}(\{ {X_1},{X_2}\};Y)=I({X_1};Y)$.
        \item[b)] If $X_2-X_1-Y$, then ${I_\cap}(\{ {X_1},{X_2}\};Y)=I({X_2};Y)$. 
        \item[c)] If $X_1-X_2-Y$ and $X_2-X_1-Y$, then ${I_\cap}(\{ {X_1},{X_2}\};$ $Y)=I({X_1};Y)=I({X_2};Y) = I(X_1 X_2;Y)$. 
        \item[d)] If $X_1-Y-X_2$, then ${I_\cap}(\{ {X_1},{X_2}\};Y)\geq I(X_1;X_2)$. 
    \end{enumerate}
\end{lemma}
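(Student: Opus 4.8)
The plan is to establish all four parts directly from the structural constraints of the decomposition in (15), namely the nonnegativity of the four atoms ${I_\cap}(\{X_1,X_2\};Y)$, $SI(\{X_1X_2\};Y)$, $UI(\{X_1\};Y)$, $UI(\{X_2\};Y)$, together with the chain-rule identities (15d)--(15e) and the co-information relation ${I_\cap}(\{X_1,X_2\};Y) - SI(\{X_1X_2\};Y) = I_{Co}(X_1;X_2;Y)$ (which itself follows from (15b)--(15e)). The point worth stressing is that no explicit formula for ${I_\cap}$ is needed: each Markov hypothesis forces a suitable conditional mutual information to vanish, and nonnegativity of the atoms then pins down the individual contributions.

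For part (a), the condition $X_1 - X_2 - Y$ means $X_1 \perp Y \mid X_2$, so $I(X_1;Y\mid X_2)=0$. Substituting into (15d) gives $SI(\{X_1X_2\};Y) + UI(\{X_1\};Y) = 0$, and since both summands are nonnegative, each must vanish. Feeding $UI(\{X_1\};Y)=0$ into (15b) yields ${I_\cap}(\{X_1,X_2\};Y) = I(X_1;Y)$. Part (b) is the mirror image: $X_2 - X_1 - Y$ forces $I(X_2;Y\mid X_1)=0$, so by (15e) and nonnegativity $SI(\{X_1X_2\};Y)=UI(\{X_2\};Y)=0$, and then (15c) gives ${I_\cap}(\{X_1,X_2\};Y)=I(X_2;Y)$.

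For part (c) I would simply combine (a) and (b): when both hypotheses hold, all three of $SI(\{X_1X_2\};Y)$, $UI(\{X_1\};Y)$, $UI(\{X_2\};Y)$ are zero and ${I_\cap}(\{X_1,X_2\};Y) = I(X_1;Y) = I(X_2;Y)$; plugging these three vanishing atoms into (15a) collapses it to $I(X_1X_2;Y) = {I_\cap}(\{X_1,X_2\};Y)$. For part (d), the condition $X_1 - Y - X_2$ gives $I(X_1;X_2\mid Y)=0$, whence $I_{Co}(X_1;X_2;Y) = I(X_1;X_2)$ directly from (14); combining this with ${I_\cap}(\{X_1,X_2\};Y) - SI(\{X_1X_2\};Y) = I_{Co}(X_1;X_2;Y)$ and $SI(\{X_1X_2\};Y)\geq 0$ yields ${I_\cap}(\{X_1,X_2\};Y) = I(X_1;X_2) + SI(\{X_1X_2\};Y) \geq I(X_1;X_2)$.

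The arguments are short and self-contained, so I do not anticipate a genuine obstacle; the one thing to be careful about is that the entire proof rests on the nonnegativity of the atoms---the defining desideratum of the PI framework---rather than on any particular candidate for ${I_\cap}$. Consequently the conclusions hold for \emph{every} admissible nonnegative decomposition satisfying (15), which is exactly the generality needed for the impossibility argument to follow.
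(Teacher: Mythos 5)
Your proof is correct and follows essentially the same route as the paper's one-line proof, which derives all four parts ``directly from (15b)--(15e) and the symmetry of co-information''; your relation ${I_\cap}-SI=I_{Co}$ is precisely that symmetry made explicit. The only difference is presentational: you spell out the appeal to nonnegativity of the PI atoms (needed in parts (a)--(c) to split a vanishing sum and in part (d) to drop $SI\geq 0$), which the paper leaves implicit as the defining desideratum \GP/\LN\ of an ideal decomposition.
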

\begin{proof}
The proofs follow directly from (15b)-(15e) and the symmetry of co-information.
\end{proof}

The following easy lemma gives the conditions under which the functions  $I_ \cap$, $UI$ and $SI$ vanish.

\begin{lemma}
\leavevmode
    \begin{enumerate}
        \item[a)] If $X_1\perp Y$ or $X_2\perp Y$, then ${I_ \cap }(\{ {X_1},{X_2}\};Y)=0$. Also, $X_1\perp X_2 \notimplies {I_ \cap }(\{ {X_1},{X_2}\};Y)=0$.
        \item[b)] If $X_1-X_2-Y$, then $UI(\{ {X_1}\} ;Y)=0$. Further, $SI(\{ {X_1}{X_2}\};Y)=0$, ${I_ \cap }(\{ {X_1},{X_2}\} ;Y)=I({X_1};Y)$, and $UI(\{ {X_2}\};Y)=I({X_2};Y|{X_1})$.
        \item[c)] If the predictor variables are identical or if either $X_1-X_2-Y$ or $X_2-X_1-Y$, then $SI(\{ {X_1}{X_2}\};Y)=0$. Also, if $\mathcal{Y}=\mathcal{X}_1\times \mathcal{X}_2$ and $Y=X_1X_2$, then $SI(\{ {X_1}{X_2}\};Y)=0$ and ${I_ \cap }(\{ {X_1},{X_2}\} ;Y)=I(X_1;X_2)$. 
    \end{enumerate}
\end{lemma}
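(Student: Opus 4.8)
The plan is to read off every claim from the linear system (15a)--(15f), the nonnegativity of the four atoms $I_\cap$, $SI$, $UI(\{X_1\})$, $UI(\{X_2\})$, and Lemma~5, since each hypothesis amounts to forcing a single (conditional) mutual information to vanish. For part~(a), if $X_1\perp Y$ then $I(X_1;Y)=0$, and because (15b) reads $I(X_1;Y)=I_\cap+UI(\{X_1\})$ with both summands nonnegative, I immediately get $I_\cap=0$ (and $UI(\{X_1\})=0$); the case $X_2\perp Y$ is identical using (15c). The non-implication is not a deduction but a witness: I would exhibit a distribution with $X_1\perp X_2$ for which any admissible redundancy measure assigns $I_\cap>0$. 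The two-bit \textsc{And} mechanism (independent uniform inputs, $Y=X_1\wedge X_2$) developed in Section~III is the natural candidate, since there the predictors are independent yet each individually carries overlapping information about the target.

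For part~(b), Lemma~5(a) already gives $I_\cap=I(X_1;Y)$ under $X_1-X_2-Y$, and substituting this into (15b) forces $UI(\{X_1\})=0$. The same Markov condition gives $I(X_1;Y|X_2)=0$, so (15d) becomes $0=SI+UI(\{X_1\})$ and nonnegativity yields $SI=0$. Finally (15e) collapses to $I(X_2;Y|X_1)=UI(\{X_2\})$, completing the four assertions.

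For part~(c), each ``no-synergy'' hypothesis kills a conditional mutual information: $X_1-X_2-Y$ makes $I(X_1;Y|X_2)=0$ in (15d), $X_2-X_1-Y$ makes $I(X_2;Y|X_1)=0$ in (15e), and identical predictors make both vanish; in every case nonnegativity of the atoms forces $SI=0$. For the copy gate $\mathcal{Y}=\mathcal{X}_1\times\mathcal{X}_2$, $Y=X_1X_2$, I would first note $I(X_1;X_2|Y)=0$, so the relation ``co-information equals redundancy minus synergy'' established from (15b)--(15e) reads $I_\cap-SI=I_{Co}(X_1;X_2;Y)=I(X_1;X_2)$, which pins down only the difference. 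The remaining ingredient is that a copy carries no synergy, $SI=0$ (the identity property), which then yields $I_\cap=I(X_1;X_2)$.

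The main obstacle is precisely that the last two claims are the ones \emph{not} forced by the linear constraints. The system (15) together with nonnegativity is underdetermined and, for independent predictors, always admits the trivial solution $I_\cap=0$: when $X_1\perp X_2$ one has $I_{Co}=-I(X_1;X_2|Y)\le 0$, so setting $I_\cap=0$ leaves $SI=I(X_1;X_2|Y)\ge 0$ together with $UI(\{X_i\})=I(X_i;Y)\ge 0$ feasible. Hence the non-implication in~(a) cannot be proved from (15) alone and must rest on a concrete measure or example, and the copy-gate conclusion in~(c) must \emph{invoke} the identity property rather than derive it. I would therefore keep the routine vanishing arguments (the genuinely ``easy'' content) cleanly separated from these two claims, flagging explicitly where an extra axiom or witness beyond the linear constraints is being used.
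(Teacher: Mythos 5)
Your proposal is correct and takes essentially the same route as the paper, whose proof likewise reads part (a) off (15b)--(15c) with nonnegativity, reads parts (b)--(c) directly off (15b)--(15e), handles the non-implication in (a) via the relation $I_\cap(\{X_1,X_2\};Y)=SI(\{X_1X_2\};Y)-I(X_1;X_2|Y)$ (instantiated by the \textsc{And} example), and tacitly relies on the identity/no-synergy assumption for the \textsc{Copy} claims in (c) --- a dependence you rightly make explicit where the paper's one-line proof leaves it implicit. One cosmetic slip: in your part (b) the reference ``Lemma~5(a)'' should be Lemma~4(a) (the statement being proved \emph{is} Lemma~5), though the derivation itself is sound.
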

\begin{proof}
The first part of a) is immediate from (15b) and (15c). The second part of a) is a direct consequence of the asymmetry built in the PI decomposition by distinguishing the predictor RVs ($X_1,X_2$) from the target RV ($Y$). Indeed, $X_1\perp X_2$ merely implies that ${I_ \cap }(\{ {X_1},{X_2}\};Y)=SI(\{ {X_1}{X_2}\};Y)-I(X_1;X_2|Y)$; the $\operatorname{RHS}$ does not vanish in general.
Part b) and c) follow directly from (15b)-(15e).
\end{proof}

We visualize the PI decomposition of the total mutual information $I({X_1}{X_2};Y)$ using a \emph{PI}-diagram \cite{ref2}. As detailed below, Fig. 1 shows the \emph{PI}-diagrams for the ``ideal'' PI decomposition of several canonical functions, viz., $\textsc{Copy}$ (and its degenerate simplifications $\textsc{Unq}$ and $\textsc{Rdn}$), $\textsc{Xor}$ and $\textsc{And}$ \cite{ref10,ref4}. Each irreducible PI atom in a \emph{PI}-diagram represents information that is either unique, synergistic or redundant. Ideally, one would like to further distinguish the redundancy induced by the function or mechanism itself (called \emph{functional} or \emph{mechanistic} redundancy) from that which is already present between the predictors themselves (called \emph{predictor} redundancy). However, at present it is not clear how these contributions can be disentangled, except for the special case of independent predictor RVs when the entire redundancy can be attributed solely to the mechanism \cite{ref10}.

\begin{example} 
Consider the $\textsc{Copy}$ function, $Y=\textsc{Copy}(X_1,X_2)$, where $Y$ consists of a perfect copy of $X_1$ and $X_2$, i.e., $Y=X_1X_2$ with $\mathcal{Y}=\mathcal{X}_1\times \mathcal{X}_2$. The $\textsc{Copy}$ function explicitly induces mechanistic redundancy and we expect that MI between the predictors completely captures this redundancy, i.e., ${I_ \cap }\left( {\{ {X_1},{X_2}\} ;({X_{{1}}},{X_{{2}}})} \right) = I({X_{{1}}};{X_{{2}}})$. Indeed, Lemma 5(c) codifies this intuition. 

(a) Fig. 1(a) shows the ideal PI decomposition for the distribution $p_{X_1X_2Y}$ with
$(00``00")=(01``01")=(11``11")=\tfrac{1}{3}$, where $(ab``ab")\defeq p_{{X_1}{X_2}{Y}}(a,b,ab)$. We then have ${I_ \cap }\left( {\{ {X_1},{X_2}\};({X_{{1}}},{X_{{2}}})} \right) = I({X_{{1}}};{X_{{2}}})=+.252$, $SI(\{ {X_1}{X_2}\};Y)=0$ and $UI(\{ {X_1}\};Y)=UI(\{ {X_2}\};Y)=+.667$.

(b) Fig. 1(b) shows the ideal PI decomposition for a simpler distribution $p_{X_1X_2Y}$ with $(00``00")=(01``01")=(10``10")=(11``11")=\tfrac{1}{4}$. Now $Y$ consists of a perfect copy of two i.i.d. RVs. Clearly, ${I_ \cap }\left( {\{ {X_1},{X_2}\} ;({X_{{1}}},{X_{{2}}})} \right)=0$. Since $SI(\{ {X_1}{X_2}\};Y)=0$ (vide Lemma 5(c)), only the unique contributions are nonzero, i.e., $UI(\{ {X_1}\};Y)=UI(\{ {X_2}\};Y)=+1$. We call this the $\textsc{Unq}$ function.

(c) Fig. 1(c) shows the ideal PI decomposition for the distribution $p_{X_1X_2Y}$ with $(000)=(111)=\tfrac{1}{2}$. This is an instance of a redundant $\textsc{Copy}$ mechanism with $X_1=X_2=Z$, where $Z=\operatorname{Bernoulli}(\tfrac{1}{2})$, so that $Y=X_1=X_2=Z$. We then have $I(X_1X_2;Y)={I_ \cap }\left( {\{ {X_1},{X_2}\} ;({X_{{1}}},{X_{{2}}})} \right)=1$. We call this the $\textsc{Rdn}$ function.
\end{example}

\begin{example}
Fig 1(d) captures the PI decomposition of the following distribution: $Y=\textsc{Xor}(X_1,X_2)$, where $X_i=\operatorname{Bernoulli}(\tfrac{1}{2})$, $i=1,2$. Only the joint RV $X_1X_2$ specifies information about $Y$, i.e., $I(X_1X_2;Y)=1$ whereas the singletons specify nothing, i.e., $I(X_i;Y)=0$, $i=1,2$. Neither the mechanism nor the predictors induce any redundancy since $I_\cap(\{ {X_1},{X_2}\} ;Y)=0$. $\textsc{Xor}$ is an instance of a purely synergistic function. 

Fig. 1(e) shows the ideal PI decomposition for the following distribution: $Y=(\textsc{Xor}(X_1',X_2'), (X_1'',X_2''),Z)$, where the predictor inputs are $X_1=(X_1',X_1'',Z)$ and $X_2=(X_2',X_2'',Z)$ with $X_1',X_2',X_1'',X_2'',Z$ i.i.d. The total MI of 4 bits is distributed equally between the four PI atoms. We call this the $\textsc{RdnUnqXor}$ function since it is a composition of the functions $\textsc{Rdn}$, $\textsc{Unq}$ and $\textsc{Xor}$. Also see Example 6 which gives an instance of composition of functions \textsc{Rdn} and \textsc{Xor}.
\end{example}

\begin{figure}[!t]
\centering
\includegraphics[width=3.5in]{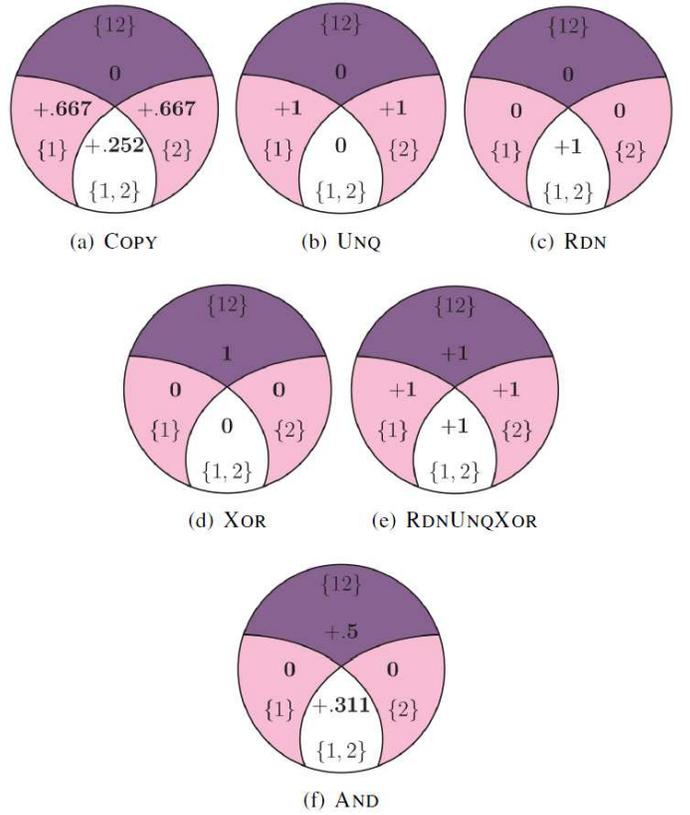}
\caption{\emph{PI}-diagrams showing the ``ideal'' PI decomposition of $I(X_1X_2;Y)$ for some canonical examples. $\{1\}$ and $\{2\}$ denote, resp. unique information about $Y$, that $X_1$ and $X_2$ exclusively convey; $\{1,2\}$ is the redundant information about $Y$ that $X_1$ and $X_2$ both convey; $\{12\}$ is the synergistic information about $Y$ that can only be conveyed by the joint RV $(X_1,X_2)$. (a) \textsc{Copy} (b) \textsc{Unq} (c) \textsc{Rdn} (d) \textsc{Xor} (e) \textsc{RdnUnqXor} (f) \textsc{And} (see description in text)} 
\label{fig:fig_pid}
\end{figure}

\begin{example}
Fig 1(f) shows the PI decomposition of the following distribution: $Y=\textsc{And}(X_1,X_2)$, where $X_i=\operatorname{Bernoulli}(\tfrac{1}{2})$, $i=1,2$ and $p_{X_1X_2Y}$ is such that $(000)=(010)=(100)=(111)=\tfrac{1}{4}$. The decomposition evinces both synergistic and redundant contributions to the total MI. The synergy can be explained as follows. First note that $X_1\perp X_2$, but $X_1\not\perp X_2|Y$ since $I(X_1;X_2|Y)=+.189\ne 0$. Fixing the output $Y$ induces correlations between the predictors $X_1$ and $X_2$ when there was none to start with. The induced correlations are the source of positive synergy.

Perhaps, more surprisingly, redundant information is not $0$ despite that $X_1\perp X_2$. The redundancy can be explained by noting that if either predictor input $X_1=0$ or $X_2=0$, then both $X_1$ and $X_2$ can exclude the possibility of $Y=1$. Hence the latter is nontrivial information shared between $X_1$ and $X_2$. This is clearer in light of the following argument that uses information structure aspects. Given the support of $p_{X_1X_2Y}$, the set of possible states of Nature include $\Omega=\{(000),(010),(100),(111)\}\defeq\{\omega_1,\omega_2,\omega_3,\omega_4\}$. $X_1$ generates the information partition ${\mathcal{P}_{X_1}} = {\omega _1}{\omega _2}|{\omega _3}{\omega _4}$. Likewise, $X_2$ generates the partition, ${\mathcal{P}_{X_2}} = {\omega _1}{\omega _3}|{\omega _2}{\omega _4}$. Let the true state of Nature be $\omega_1$. Consider the event $E=\{\omega_1,\omega_2,\omega_3\}$. Both $X_1$ and $X_2$ know $E$ at $\omega_1$, since ${\mathcal{P}_{X_1}}(\omega_1)=\{\omega_1,\omega_2\} \subset E$ and ${\mathcal{P}_{X_2}}(\omega_1)=\{\omega_1,\omega_3\} \subset E$. The event that $X_1$ knows $E$ is $K_{X_1}(E)=\{\omega_1,\omega_2\}$. Likewise, the event that $X_2$ knows $E$ is $K_{X_2}(E)=\{\omega_1,\omega_3\}$. Clearly, the event $K_{X_1}(E) \cap K_{X_2}(E)=\{\omega_1\}$ is known to both $X_1$ and $X_2$, so that $Y=1$ can be ruled out with probability of agreement one.

Indeed, for independent $X_1$ and $X_2$, when one can attribute the redundancy entirely to the mechanism, there is some consensus that $I_ \cap(\{ {X_1},{X_2}\} ;Y)=\tfrac{3}{4}\log\tfrac{4}{3}=+.311$ and $SI(\{ {X_1}{X_2}\};Y)=+.5$ \cite{ref2,ref5,ref10}.   
\end{example}

\begin{remark}
Independence of the predictor RVs implies a vanishing predictor redundancy but not necessarily a vanishing mechanistic redundancy (also see second part of Lemma 5(a)).
\end{remark}

As one final illustrative application of this framework, we consider the decomposition of Massey's directed information (DI) \cite{ref11} into PI atoms.
\begin{example}
For discrete-time stochastic processes ${{X}^{N}}$ and ${{Y}^{N}}$, the DI from $X$ to $Y$ is defined as follows. 
\[I({X^N} \to {Y^N}) \defeq \sum_{i = 1}^N I({X^i ;Y_i| Y^{i - 1}}),\]
where $X^i\defeq \{X_i,X_{i-1},\ldots\}$ denotes the past of $X$ relative to time $i$.
$I({{X}^{N}}\to {{Y}^{N}})$ answers the following operational question: Does consideration of the past of the process $X^N$ help in predicting the process $Y^N$ better than when considering the past of $Y^N$ alone? DI is a sum of conditional mutual information terms and admits an easy PI decomposition. 
\begin{align*}
  I(X^N \to Y^N) &=\sum\nolimits_{i = 1}^N {I({X^i};{Y_i}|{Y^{i - 1}})}\\
  &= \sum\nolimits_{i = 1}^N {UI(\{ {X^i}\} ;{Y_i}) + SI(\{ {X^i}{Y^{i - 1}}\} ;{Y_i})}, \tag{16}
\end{align*}
where we have used (15d) with $X_1\defeq X^i$, $X_2\defeq {{Y}^{i-1}}$ and $Y\defeq Y_i$. 

The decomposition has an intuitive appeal. Conditioning on the past gets rid of the common histories or redundancies shared between ${{X}^{i}}$ and ${{Y}^{i-1}}$ and adds in their synergy. Thus, given the knowledge of the past ${{Y}^{i-1}}$, information gained from learning ${{X}^{i}}$ has a unique component from ${{X}^{i}}$ alone as well as a synergistic component that comes from the interaction of ${{X}^{i}}$ and ${{Y}^{i-1}}$. The colored areas in Fig. 2 shows this decomposition of the ``local'' DI term $I({X^N} \to {Y^N})(i)$ into PI atoms, where $I({X^N} \to {Y^N}) = \sum\nolimits_{i = 1}^N I({X^N} \to {Y^N})(i)$.
\end{example}

\begin{figure}[!t]
\centering
\includegraphics[width=1.1in]{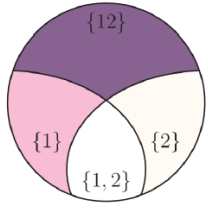}
\caption{\emph{PI}-diagram for the decomposition of Massey's directed information (DI). The colored areas correspond to the local DI term $I({X^N} \to {Y^N})(i)$, where $\{1\}=UI(\{{{X}^{i}}\};{{Y}_{i}})$, $\{12\}=SI(\{{{X}^{i}}{{Y}^{i-1}}\},{{Y}_{i}})$, $\{1,2\}=I_\cap(\{{{X}^{i}},{{Y}^{i-1}}\},{{Y}_{i}})$, and $\{2\}=UI(\{{{Y}^{i-1}}\};{{Y}_{i}})$ (see text)}
\label{fig:fig_DI}
\end{figure}

From (15a)-(15c), it is easy to see that the three equations specifying $I({{X}_{1}}{{X}_{2}};Y),\text{ }I({{X}_{1}};Y)$ and $I({{X}_{2}};Y)$ do not fully determine the four functions ${{I}_{\cap }}(\{{{X}_{1}},{{X}_{2}}\};Y)$, $UI(\{{{X}_{1}}\};Y)$, $UI(\{{{X}_{2}}\};Y)$ and $SI(\{{{X}_{1}}{{X}_{2}}\};Y)$. To specify a unique decomposition, one of the functions ${{I}_{\cap }}$, $SI$ or $UI$ needs to be defined or a fourth equation relating $I_\cap$, SI, and UI.

PI decomposition researchers have focused on axiomatically deriving measures of redundant \cite{ref2}, \cite{ref3}, \cite{ref7,ref8,ref9,ref10}, synergistic \cite{ref4} and unique information \cite{ref5}, \cite{ref6}. For instance, for a general $K$, any valid measure of redundancy ${{I}_{\cap }}({{X}_{1}},\ldots ,{{X}_{K}};Y)$ must satisfy the following \emph{basic} properties. Let ${{R}_{1}},\ldots ,{{R}_{k}}\subseteq \left\{ {{X}_{1}},\ldots ,{{X}_{K}} \right\}$, where $k\le K$.

\begin{itemize}[leftmargin=1.4cm]
    \item[\GP] Global Positivity: ${I_ \cap }(\{ {R_1}, \ldots ,{R_k}\} ;Y) \geq 0$.
    \item[\WS] Symmetry: ${I_ \cap }(\{ {R_1}, \ldots ,{R_k}\} ;Y)$ is invariant under reordering of the ${X_i}$'s.
    \item[\SR] Self-redundancy: ${I_ \cap }(R;Y) = I({X_R};Y)$. For instance, for a a single predictor $X_1$, the redundant information about the target $Y$ must equal $I(X_1;Y)$.
    \item[\Mo] Weak Monotonicity: ${I_ \cap }(\{ {R_1}, \ldots ,{R_{k - 1}},{R_k}\} ;Y) \leq {I_ \cap }(\{ {R_1},$ $\ldots ,{R_{k - 1}}\} ;Y)$ with equality if $\exists$ $R_i \in \{R_1,\ldots,R_k\}$ such that $H(R_iR_k)=H(R_k)$. 
    \item[\SMo] Strong Monotonicity: ${I_ \cap }(\{ {R_1}, \ldots ,{R_{k - 1}},{R_k}\} ;Y) \leq {I_ \cap }(\{ {R_1},$ $\ldots ,{R_{k - 1}}\} ;Y)$ with equality if $\exists$ $R_i \in \{R_1,\ldots,R_k\}$ such that $I(R_iR_k;Y)=I(R_k;Y)$. For the equality condition for $K=2$, also see Lemma 4(a)-(c).
    \item[\LN] Local Positivity: For all $K$, the derived PI measures are nonnegative. For instance for $K=2$, a nonnegative PI measure for synergy requires that $I(X_1X_2;Y)\geq {I_ \cup }(\{X_1,X_2\};Y)$, where $I_\cup$ is the \emph{union information} which is related to $I_\cap$ (for any $K$) by the inclusion-exclusion principle \cite{ref2}.
    \item[\Idty] Identity: For $K=2$, ${I_ \cap }\left( {\{ {X_1},{X_2}\} ;({X_{1}},{X_{2}})} \right) = I({X_{{1}}};{X_{{2}}})$ \cite{ref10}. 
\end{itemize}
    
The following properties capture the behavior of an ideal ${I_ \cap }$ when one of the predictor or target arguments is enlarged.
\begin{itemize}[leftmargin=1.4cm]
    \item[\TM] Target Monotonicity: If $H(Y|Z)=0$, then ${I_ \cap }(\{ {R_1}, \ldots ,{R_k}\} ;Y)\leq {I_ \cap }(\{ {R_1},$ $ \ldots ,{R_k}\};Z)$.
    \item[\PM] Predictor Monotonicity: If $H(R_1|R_1')=0$, then ${I_ \cap }(\{ {R_1}, \ldots ,{R_k}\} ;Y)\leq {I_ \cap }(\{ {R_1'},R_2,\ldots ,{R_k}\};Y)$.
\end{itemize}

A similar set of monotonicity properties are desirable of an ideal $UI$. We consider only the $K=2$ case and write $UI_{X_2}(\{{{X}_{1}}\};Y)$ to explicitly specify the information about $Y$ exclusively conveyed by $X_1$.
\begin{itemize}[leftmargin=1.4cm]
    \item[\TMU] Target Monotonicity: If $H(Y|Z)=0$, then $UI_{X_2}(\{{{X}_{1}}\};Y) \leq UI_{X_2}(\{{{X}_{1}}\};Z)$.
    \item[\PMU] Predictor Monotonicity: If $H(X_1|X_1^\prime)=0$, then $UI_{X_2}(\{{{X}_{1}}\};Y) \leq UI_{X_2}(\{X_1^\prime \};Y)$.
    \item[\PMUC] Predictor Monotonicity with respect to the complement: If $H(X_2|X_2^\prime)=0$, then $UI_{X_2^\prime}(\{{{X}_{1}}\};Y) \leq UI_{X_2}(\{X_1\};Y)$.
\end{itemize}

Properties \textbf{(M)} and \textbf{(SM)} ensure that any reasonable measure of redundancy is monotonically nonincreasing with the number of predictors. For a general $K$, given a measure of redundant information that satisfies \textbf{(S)} and \textbf{(M)}, only those subsets need to be considered which satisfy the ordering relation ${{R}_i}\nsubseteq{{R}_j},\forall i\ne j$ (i.e., the family of sets ${{R}_{1}},\ldots ,{{R}_{k}}$ forms an antichain) \cite{ref2}, \cite{ref3}. Define a partial order $\precsim$ on the set of antichains by the relation: ($S_1,\ldots,S_m$) $\precsim$ (${{R}_{1}},\ldots ,{{R}_{k}}$) iff for each $j=1,\ldots,k$ $\exists$ $i\leq m$ such that $S_i \subseteq R_j$. Then, equipped with $\precsim$, the set of antichains form a lattice ${\mathfrak{L}}$ called the PI or the redundancy lattice \cite{ref2}. By virtue of \textbf{(M)}, for a fixed $Y$, ${I_ \cap }(\{ {R_1}, \ldots ,{R_k}\} ;Y)$ is a monotone function with respect to $\precsim$. Then, a unique decomposition of the total mutual information is accomplished by associating with each element of ${\mathfrak{L}}$ a PI measure ${I_\partial }$ which is the M\"{o}bius transform of ${I_ \cap }$ so that we have ${I_ \cap }(\{ {R_1}, \ldots ,{R_k}\} ;Y)=\sum\limits_{({S_1}, \ldots ,{S_m})\precsim({R_1}, \ldots ,{R_k})} {{I_\partial }({S_1},\ldots,{S_m};Y)}$. 

For instance, for $K=2$ (see (15a)), the PI measures are ${{I}_{\partial}}(\{{{X}_{1}},{{X}_{2}}\};Y)={{I}_{\cap }}(\{{{X}_{1}},{{X}_{2}}\};Y)$, ${{I}_{\partial}}(\{{{X}_{1}}\};Y)=UI(\{{{X}_{1}}\};Y)$, ${{I}_{\partial}}(\{{{X}_{2}}\};Y)=UI(\{{{X}_{2}}\};Y)$, and ${{I}_{\partial}}(\{{{X}_{1}}{{X}_{2}}\};Y)=SI(\{{{X}_{1}}{{X}_{2}}\};Y)$.

While elegant in its formulation, the lattice construction does not by itself guarantee a \emph{nonnegative} decomposition of the total mutual information. The latter depends on the chosen measure of redundancy used to generate the PI decomposition. Given a measure of redundant information, some of the recurrent pathologies reported thus far include incompatibility of properties (a) \textbf{(LP)} and \textbf{(TM)} \cite{ref4}, \cite{ref8}, \cite{ref10}, \cite{ref5}, \cite{ref6}, (b) \textbf{(LP)} and \textbf{(Id)} for $K\geq 3$ \cite{ref6}, and (c) \textbf{(TM)} and \textbf{(Id)} for $K=2$, whenever there is mechanistic dependence between the target and the predictors \cite{ref6}. For a nonvanishing mechanistic dependency, \textbf{(TM)} and \textbf{(Id)} are incompatible since together they imply ${I_ \cap }\left({\{{X_1},{X_2}\};({X_1},{X_2})}\right) \leq I(X_1;X_2)$. For example, the desired decomposition of \textsc{And} in Example 9 contradicts \textbf{(TM)}. None of the measures of $I_\cap$ proposed thus far satisfies \textbf{(TM)}. 
In the next section, we restrict ourselves to the bivariate case as some of the pathological features are already manifest.

\subsection{Measures of Redundant Information Based on Common Information}

In this section, we dwell on the relationship between redundant information and the more familiar information-theoretic notions of common information. In particular, we seek to answer the following question: can optimization over a single RV yield a plausible measure of redundancy that satisfies \LN?

A simple measure of redundant information between predictors $({{X}_{1}},{{X}_{2}})$ about a target RV $Y$ is defined as follows \cite{ref9}.
\begin{align*}
I_ \cap ^1(\{ {X_1},{X_2}\} ;Y) &= \mathop {\max }\limits_{Q:\text{ }H(Q|{X_1}) = H(Q|{X_2}) = 0} I(Q;Y)\\
&= I({X_1} \wedge {X_2};Y)\tag{17}
\end{align*}
$I_{\cap }^{1}$ satisfies \textbf{(GP)}, \textbf{(S)}, \textbf{(I)}, \textbf{(M)} and \textbf{(TM)} but not \textbf{(Id)} \cite{ref9}. 
$I_{\cap }^{1}$ inherits the negative character of the original definition of GK and fails to capture any redundancy beyond a certain deterministic interdependence between the predictors. Unless ${{p}_{{{X}_{1}}{{X}_{2}}}}$ is decomposable, $I_ \cap ^1(\{ {X_1},{X_2}\} ;Y)$ is trivially zero, even if it is the case that the predictors share nontrivial redundant information about the target $Y$. Furthermore, $I_{\cap }^{1}$ violates \LN \cite{ref9} and is too restrictive in the sense that it does not capture the full informational overlap. 

One can relax the constraint in (17) in a natural way by using the asymmetric notion of CI, $C^2(\{ {X_1},{X_2}\} ;Y)$ introduced earlier in (12). For consistency of naming convention, we call this $I_\cap^2$. 
\begin{align*}
I_\cap^2(\{ {X_1},{X_2}\} ;Y) = \mathop {\max }\limits_{\substack{ Q:\text{ }Q-X_1-Y\\ \hspace{4mm} Q-X_2-Y}} I(Q;Y) \tag{18}
\end{align*}
The definition has an intuitive appeal. If $Q$ specifies the optimal redundant RV, then conditioning on any predictor $X_i$ should remove all the redundant information about $Y$, i.e., $I(Q;Y|{{X}_{i}})=0$, $i=1,2$ \cite{ref8}. $I_{\cap }^{2}$ remedies the degenerate nature of $I_{\cap }^{1}$ with respect to indecomposable distributions \cite{ref8}. It is also easy to see that the derived unique information measure, $UI^2$ is nonnegative. 
\begin{align*}
UI^2(\{ {X_1}\} ;Y) = \mathop {\min }\limits_{\substack{ Q:\text{ }Q-X_1-Y\\ \hspace{4mm} Q-X_2-Y}}  I({X_1};Y|Q) \tag{19}
\end{align*}
$UI^2$ readily satisfies the symmetry condition (15f) since given $Q$ such that $Q - {X_1} - Y$ and $Q - {X_2} - Y$, we have ${I({X_1};Y) + I({X_2};Y|Q)}\mathop= \limits^{{\text{(a)}}}{I(Q{X_1};Y) + I({X_2};Y|Q)}={I(Q{X_2};Y) + I({X_1};Y|Q)}\mathop= \limits^{{\text{(b)}}}{I({X_2};Y) + I({X_1};Y|Q)}$, where (a) follows from $Q-{{X}_{1}}-Y$ and (b) follows from $Q-{{X}_{2}}-Y$. 

For the proofs of Lemma 6 and 7 to follow, we shall use the standard facility of Information diagrams ($I$-diagrams) \cite{ref34a}. 
For finite RVs, there is a one-to-one correspondence between Shannon's information measures and a signed measure $\mu^*$ over sets, called the $I$-measure. We denote the $I$-Measure of RVs $(Q,X_1,X_2,Y)$ by $\mu^*$. For a RV $X$, we overload notation by using $X$ to also label the corresponding set in the $I$-diagram. Note that the $I$-diagrams in Fig. 3 are valid information diagrams since the sets $Q,X_1,X_2,Y$ intersect each other generically and the region representing the set $Q$ splits each atom into two smaller ones.

\begin{lemma}
If $X_1\perp X_2$, then $I_\cap^2(\{ {X_1},{X_2}\} ;Y)=0$.
\end{lemma}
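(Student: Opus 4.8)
The plan is to work entirely inside the $I$-diagram of the four RVs $(Q,X_1,X_2,Y)$ and translate the three hypotheses into $I$-measure identities. Predictor independence gives $\mu^*(X_1\cap X_2)=I(X_1;X_2)=0$; the constraint $Q-X_1-Y$ in the definition (18) reads $I(Q;Y|X_1)=\mu^*((Q\cap Y)\setminus X_1)=0$, and $Q-X_2-Y$ reads $\mu^*((Q\cap Y)\setminus X_2)=0$. Since $I(Q;Y)=\mu^*(Q\cap Y)\ge 0$ always, it suffices to establish the reverse inequality $I(Q;Y)\le 0$.

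The natural route is a confinement argument. From $\mu^*(Q\cap Y)=\mu^*(Q\cap Y\cap X_1)+\mu^*((Q\cap Y)\setminus X_1)$ the first Markov relation discards the last term, so $I(Q;Y)=\mu^*(Q\cap Y\cap X_1)$. Were $\mu^*$ an ordinary nonnegative measure, the second Markov relation would likewise confine $Q\cap Y$ to $X_2$, leaving $I(Q;Y)=\mu^*(Q\cap Y\cap X_1\cap X_2)\le\mu^*(X_1\cap X_2)=0$, which together with nonnegativity would finish the proof at once.

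The step I expect to be the real difficulty is precisely this confinement, because $\mu^*$ is only a \emph{signed} measure. The hypotheses annihilate the \emph{sums} $\mu^*(Q\cap Y\cap X_2\setminus X_1)+\mu^*((Q\cap Y)\setminus(X_1\cup X_2))$ and $\mu^*(Q\cap Y\cap X_1\setminus X_2)+\mu^*((Q\cap Y)\setminus(X_1\cup X_2))$, but not the individual three-fold atoms $\mu^*(Q\cap Y\cap X_1\setminus X_2)$ and $\mu^*(Q\cap Y\cap X_2\setminus X_1)$, which may be negative. Among the atoms making up $\mu^*(Q\cap Y)$ only $\mu^*((Q\cap Y)\setminus(X_1\cup X_2))=I(Q;Y|X_1X_2)\ge 0$ is sign-definite, so cancellation between the remaining atoms is not excluded and the set containment ``$Q\cap Y\subseteq X_1\cap X_2$'' cannot simply be read off the diagram.

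To close this I would abandon the purely pictorial step and reason directly on the laws, writing $p(q,y)$ both as the $X_1$-mixture $\sum_{x_1}p(x_1)\,p(q|x_1)\,p(y|x_1)$ and as the analogous $X_2$-mixture, and then examining whether these two mixture-of-products representations, constrained by $X_1\perp X_2$ and by matching marginals, are forced to equal $p(q)p(y)$. Verifying that no sign cancellation slips through here—equivalently, that this mixing identity genuinely forces a product rather than a rank-two law whose off-diagonal atoms merely cancel—is, I expect, the entire content of the lemma, and the step I would scrutinise most carefully before trusting the conclusion.
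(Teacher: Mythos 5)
Your skepticism about the confinement step is not merely a scruple to be checked --- it is fatal, and the paper's own proof is exactly the argument you warned against. The proof in the paper consists of marking atoms in the four-variable $I$-diagram (Fig.~3(a)) as vanishing and reading off $\mu^*(Q\cap Y)=0$; but as you observe, the three hypotheses only annihilate the \emph{sums} $I(Q;Y|X_1)=\mu^*(Q\cap Y\cap X_2\setminus X_1)+\mu^*(Q\cap Y\setminus(X_1\cup X_2))$, $I(Q;Y|X_2)=\mu^*(Q\cap Y\cap X_1\setminus X_2)+\mu^*(Q\cap Y\setminus(X_1\cup X_2))$, and $I(X_1;X_2)=\mu^*(X_1\cap X_2)$, and since $\mu^*$ is signed for four variables, no conclusion about individual atoms follows. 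In particular, $I(X_1;X_2)=0$ does not kill the atom $\mu^*(Q\cap Y\cap X_1\cap X_2)$: it can be strictly positive and cancel against negative sibling atoms of $X_1\cap X_2$ lying outside $Q\cap Y$. So the paper's proof has the gap you identified, and your refusal to accept the pictorial step was correct.

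Moreover, the repair you propose in your last paragraph cannot succeed, because the lemma itself is false: the mixture-of-products identity does \emph{not} force $p_{QY}$ to be a product. A minimal counterexample: let $X_1,X_2$ be i.i.d.\ $\operatorname{Bernoulli}(\tfrac12)$, $Y=\textsc{And}(X_1,X_2)$, and let $Q=\textsc{Nor}(X_1,X_2)$, the indicator of the event $\{X_1=X_2=0\}$. Given $X_1=0$ the variable $Y$ is constant, and given $X_1=1$ the variable $Q$ is constant, so $Q-X_1-Y$ holds; by symmetry $Q-X_2-Y$ holds as well, and $X_1\perp X_2$ by construction. Yet $p(Q=1,Y=1)=0\neq\tfrac1{16}=p(Q=1)p(Y=1)$, so $I(Q;Y)=\tfrac12\log_2\tfrac{32}{27}\approx 0.123$ bits, whence $I_\cap^2(\{X_1,X_2\};\textsc{And})>0$, contradicting the lemma (and the paper's later claim, following Proposition~5, that Lemma~6 forces a degenerate decomposition for \textsc{And}). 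The example also exhibits the signed cancellation concretely: since $Q$ and $Y$ are deterministic given $(X_1,X_2)$, the three atoms of $Q\cap Y$ outside $X_1\cap X_2$ all vanish, so $\mu^*(Q\cap Y\cap X_1\cap X_2)=I(Q;Y)>0$, while $\mu^*(X_1\cap X_2)=0$ is achieved by cancellation --- e.g.\ $\mu^*(X_1\cap X_2\setminus(Q\cup Y))=I(X_1;X_2|Q,Y)=\tfrac12$ bit, offset by negative atoms of $X_1\cap X_2$. So the step that fails is not in your write-up but in the statement being proved; the analysis you outline, pushed to completion, terminates in this counterexample rather than a proof, and any conclusion downstream of Lemma~6 (notably the claimed impossibility for single-variable optimization measures insofar as it rests on this lemma) needs to be reassessed.
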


\begin{proof}
The atoms on which $\mu^*$ vanishes when the Markov chains $Q-X_1-Y$ and $Q-X_2-Y$ hold and $X_1\perp X_2$ are shown in the generic $I$-diagram in Fig. 3(a); $\mu^*(Q \cap Y)=0$ which gives the result. 
\end{proof}

\begin{lemma}
If $X_1-Y-X_2$, then $I_\cap^2(\{ {X_1},{X_2}\};Y)\leq I(X_1;X_2)$.
\end{lemma}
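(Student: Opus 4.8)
The plan is to follow the $I$-measure approach already set up for Lemma 6, working with the signed measure $\mu^*$ on the atoms generated by $(Q,X_1,X_2,Y)$ in the generic $I$-diagram of Fig. 3. First I would translate the three hypotheses into vanishing region-masses: the two chains $Q-X_1-Y$ and $Q-X_2-Y$ give $I(Q;Y|X_1)=\mu^*(Q\cap Y\cap X_1^c)=0$ and $I(Q;Y|X_2)=\mu^*(Q\cap Y\cap X_2^c)=0$, while $X_1-Y-X_2$ gives $I(X_1;X_2|Y)=\mu^*(X_1\cap X_2\cap Y^c)=0$. In particular the last condition lets me recast the target as a statement about co-information, since $I(X_1;X_2)=I(X_1;X_2)-I(X_1;X_2|Y)=I(X_1;X_2;Y)=\mu^*(X_1\cap X_2\cap Y)\ge 0$.

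Next I would expand both $I(Q;Y)=\mu^*(Q\cap Y)$ and $I(X_1;X_2)=\mu^*(X_1\cap X_2)$ into their four constituent atoms (indexed by membership in the remaining two sets) and cancel using the three vanishing conditions. Note that each of the two atoms $\mu^*(Q\cap Y\cap X_1^c\cap X_2)$ and $\mu^*(Q\cap Y\cap X_1\cap X_2^c)$ is forced by the chains to equal $-I(Q;Y|X_1X_2)\le 0$, so the mass of $Q\cap Y$ collapses onto the central atom and the genuine (nonnegative) atom $I(Q;Y|X_1,X_2)=\mu^*(Q\cap Y\cap X_1^c\cap X_2^c)$. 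Carrying the cancellation through yields the clean identity
\[ I(X_1;X_2)-I(Q;Y)=I(X_1;X_2;Y\,|\,Q)+I(Q;Y\,|\,X_1X_2), \]
where $I(X_1;X_2;Y\,|\,Q)=\mu^*(X_1\cap X_2\cap Y\cap Q^c)$ is a conditional co-information. Equivalently, using $I(X_1;Y)=I(Q;Y)+I(X_1;Y|Q)$ (from $Q-X_1-Y$) together with $I(X_1;X_2)=I(X_1;Y)-I(X_1;Y|X_2)$ (from $X_1-Y-X_2$), the claim reduces to the single inequality $I(X_1;Y|Q)\ge I(X_1;Y|X_2)$.

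The second term $I(Q;Y|X_1X_2)$ is a genuine conditional mutual information and is nonnegative, so everything hinges on controlling the first term, and this is where I expect the main obstacle. The conditional co-information $I(X_1;X_2;Y|Q)$ is \emph{not} sign-definite in general, and the nonnegativity of the whole right-hand side is exactly the place where the hypothesis $X_1-Y-X_2$ must be used (dropping it, one can drive the quantity negative). I would try to certify the sign directly off the constrained diagram in Fig. 3(b), arguing that the negative atoms produced by the two chains are absorbed by the co-information mass that $X_1-Y-X_2$ makes available; as a cleaner sanity check I would first settle the deterministic case, where $Q=h(X_1)$ makes $Q-X_1-X_2$ hold automatically, so $Q-X_2-Y$ gives $I(Q;Y)\le I(Q;X_2)$ and the data-processing inequality gives $I(Q;X_2)\le I(X_1;X_2)$, chaining to the result. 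The delicate part is extending this from deterministic $h(X_1)$ to an arbitrary stochastic feasible $Q$, i.e., certifying nonnegativity of the residual conditional co-information uniformly over the feasible set.
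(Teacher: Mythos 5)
Your bookkeeping is correct and in fact sharper than the paper's own argument: the paper proves the lemma in exactly the framework you set up (the four-variable $I$-diagram of Fig.~3(b), with the three Markov conditions zeroing out region-masses), and your identity $I(X_1;X_2)-I(Q;Y)=I(X_1;X_2;Y\,|\,Q)+I(Q;Y\,|\,X_1X_2)$, the collapse of the atoms $\mu^*(Q\cap Y\cap X_1^c\cap X_2)=\mu^*(Q\cap Y\cap X_1\cap X_2^c)=-I(Q;Y|X_1X_2)$, and the reduction to $I(X_1;Y|Q)\ge I(X_1;Y|X_2)$ are all right, as is your deterministic sub-case $Q=h(X_1)$. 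But the step you defer --- certifying $I(X_1;X_2;Y|Q)+I(Q;Y|X_1X_2)\ge 0$ --- is a genuine gap, and your stated program for closing it (``certify the sign directly off the constrained diagram'') cannot succeed: the three hypotheses only force \emph{sums of pairs} of atoms to vanish, and one can assign atom masses consistent with every elemental Shannon inequality plus the constraints while making $\mu^*(Q^c\cap X_1\cap X_2\cap Y)$ negative enough that the target fails. So no linear/diagrammatic argument from these hypotheses alone closes the gap.

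Worse, over the feasible set as literally written (all $p_{Q|X_1X_2Y}$ satisfying the two chains), the residual conditional co-information genuinely goes negative past $I(Q;Y|X_1X_2)$. Take $p_{X_1X_2Y}(1,1,1)=\tfrac12$ and $p_{X_1X_2Y}(x_1,x_2,0)=\tfrac18$ for all four $(x_1,x_2)$, so that $X_1-Y-X_2$ holds ($(X_1,X_2)$ is uniform-product given $Y=0$ and deterministic given $Y=1$). Let $Q$ be the indicator of the event $\{X_1=0, X_2=0\}$. Then $Q-X_1-Y$ holds because given $X_1=0$ the target $Y$ is constant, while given $X_1=1$ the variable $Q$ is constant (and symmetrically $Q-X_2-Y$); yet $I(Q;Y)\approx 0.138$ exceeds $I(X_1;X_2)\approx 0.074$. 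Here $Q$ is $(X_1,X_2)$-measurable, so $I(Q;Y|X_1X_2)=0$ and $I(X_1;X_2;Y|Q)\approx -0.064$: the exact quantity you flagged as not sign-definite is indeed negative. Note this does not contradict your deterministic sanity check, which requires $Q=h(X_1)$ alone. The same example exposes that the paper's own final inference --- from $\mu^*(X_1\cap X_2\cap Y)=\mu^*(X_1\cap X_2)\ge 0$ to $\mu^*(Q\cap Y)\le \mu^*(X_1\cap X_2)$ --- silently assumes nonnegativity of $\mu^*(Q^c\cap X_1\cap X_2\cap Y)$, i.e., the very step you left open; the statement can only be rescued by restricting the feasible $Q$ (e.g., to functions of a single predictor, as in your sub-case) beyond what the two Markov constraints impose. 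So your diagnosis locates the real obstruction precisely, but the proposal does not --- and, from the stated hypotheses alone, cannot --- complete the proof.
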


\begin{proof}
The atoms on which $\mu^*$ vanishes when the Markov chains $Q-X_1-Y$, $Q-X_2-Y$ and $X_1-Y-X_2$ hold are shown in the $I$-diagram in Fig. 3(b). In general, for the atom $X_1 \cap X_2 \cap Y$, $\mu^*$ can be negative. However, since $X_1-Y-X_2$ is a Markov chain by assumption, we have $\mu^*(X_1 \cap X_2 \cap Y)=\mu^*(X_1 \cap X_2) \geq 0$. Then $\mu^*(Q \cap Y) \leq \mu^*(X_1 \cap X_2)$, which gives the desired claim.
\end{proof}

By Lemma 7, $I_\cap^2$ already violates the requirement posited in Lemma 4(d) for an ideal $I_\cap$. It turns out that we can make a more precise statement under a stricter assumption, which also amounts to proving that $I_\cap^2$ violates \textbf{(Id)}.  

\begin{figure}[!t]
\centering
\includegraphics[width=3.2in]{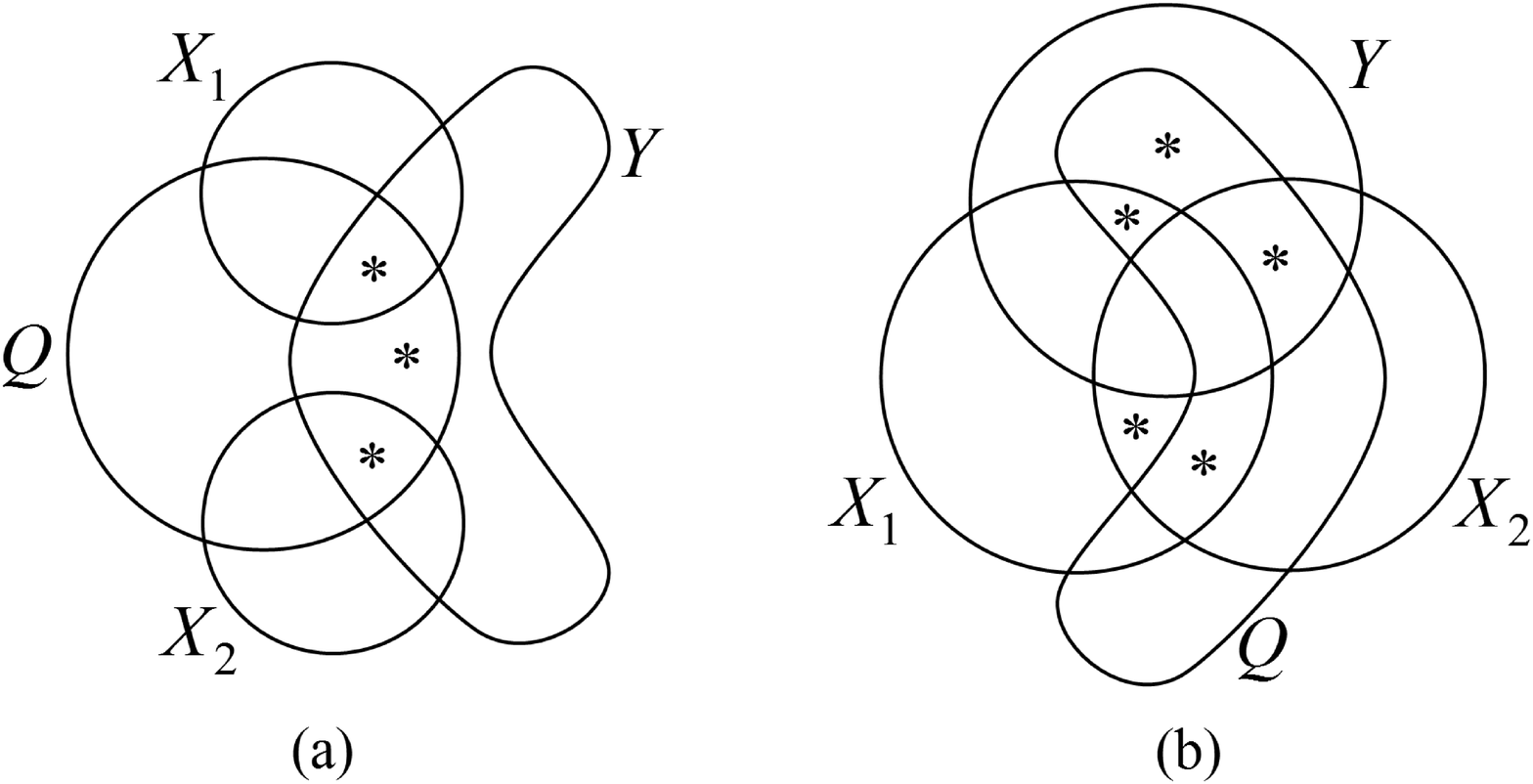}
\caption{\emph{I}-diagrams for proofs of (a) Lemma 6 and (b) Lemma 7. Denoting the $I$-Measure of RVs $(Q,X_1,X_2,Y)$ by $\mu^*$, the atoms on which $\mu^*$ vanishes are marked by an asterisk (see text)}\vspace{-1em}
\label{fig_Idiag_proofs}
\end{figure}

\begin{lemma}
Let $\mathcal{Y}=\mathcal{X}_1\times \mathcal{X}_2$ and $Y=X_1X_2$. Then $I_\cap^2(\{ {X_1},{X_2}\} ;Y)=C_{GK}(X_1;X_2)\leq I(X_1;X_2)$.
\end{lemma}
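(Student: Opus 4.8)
The plan is to show that, once we specialize to $Y=X_1X_2$ with $\mathcal{Y}=\mathcal{X}_1\times\mathcal{X}_2$, the constrained maximization defining $I_\cap^2$ in (18) is \emph{identical} to the maximization characterizing $C_{GK}(X_1;X_2)$ in (3). There are two things to match: the objective function and the feasible set (the pair of Markov constraints).

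First I would match the objective. Here $Y$ equals the joint variable $(X_1,X_2)$, so $\sigma(Y)=\sigma(X_1,X_2)$ and $I(Q;Y)=I(Q;X_1X_2)$ for every candidate $Q$; this is exactly the objective $I(X_1X_2;Q)$ appearing in the GK characterization (3).

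Next I would collapse the constraints. By the chain rule, $I(Q;Y\mid X_1)=I(Q;X_1X_2\mid X_1)=I(Q;X_2\mid X_1)$, since $I(Q;X_1\mid X_1)=0$. Hence $Q-X_1-Y$ holds iff $I(Q;X_2\mid X_1)=0$, i.e.\ iff $Q-X_1-X_2$; by the symmetric argument, $Q-X_2-Y$ holds iff $Q-X_2-X_1$. Thus the feasible set $\{Q:\ Q-X_1-Y,\ Q-X_2-Y\}$ of (18) is precisely the double-Markov set $\{Q:\ Q-X_1-X_2,\ Q-X_2-X_1\}$ of (3). No spurious freedom is introduced by $Y$, because $Y$ is a deterministic function of $(X_1,X_2)$, so specifying $p_{Q|X_1X_2Y}$ is the same as specifying $p_{Q|X_1X_2}$. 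With both objective and feasible set identified, the two maxima coincide, giving $I_\cap^2(\{X_1,X_2\};Y)=C_{GK}(X_1;X_2)$.

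The inequality then follows from Remark 3, which gives $C_{GK}(X;Y)\le I(X;Y)$ in general; setting $X=X_1$ and $Y=X_2$ yields $C_{GK}(X_1;X_2)\le I(X_1;X_2)$. I do not anticipate a genuine obstacle: the only step demanding care is the constraint reduction, where one must apply the chain rule correctly and observe that conditioning on $X_1$ collapses $Y=(X_1,X_2)$ to $X_2$ (and symmetrically), so that the two pairs of Markov conditions are really the same two conditions in disguise.
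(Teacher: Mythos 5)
Your proof is correct and follows essentially the same path as the paper's: both hinge on the observation that with $Y=X_1X_2$ the constraints $Q-X_i-Y$ collapse (via the chain rule) to the double Markov conditions $Q-X_1-X_2$ and $Q-X_2-X_1$ while the objective becomes $I(Q;X_1X_2)$, and your note that $Y$ being a deterministic function of $(X_1,X_2)$ introduces no extra freedom in choosing $p_{Q|X_1X_2Y}$ is exactly the care this step requires. The only difference is the endgame: you recognize the reduced program as the stated characterization (3) of $C_{GK}(X_1;X_2)$ and invoke Remark 3 for $C_{GK}(X_1;X_2)\leq I(X_1;X_2)$, whereas the paper unfolds that characterization explicitly via Lemma 2, writing $I(Q;X_1)=I(Q';Q)\leq H(Q')$ for the maximal common RV $Q'$; this re-derivation is logically equivalent (the paper notes that (3) itself follows from Lemma 2) but yields as a by-product the equality condition, namely $I_\cap^2(\{X_1,X_2\};X_1X_2)=I(X_1;X_2)$ iff $p_{X_1X_2}$ is saturable, which the lemma statement does not need but which the paper uses in its later discussion of the \textbf{(Id)} property.
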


\begin{proof}
First note that 
\begin{align*}
I_ \cap^2(\{ {X_1},{X_2}\} ;X_1X_2)&=\mathop {\max }\limits_{\substack{ Q:\text{ }Q-X_1-X_1X_2\\ \hspace{4mm} Q-X_2-X_1X_2}} I(Q;X_1X_2)\\
&=\mathop {\max }\limits_{\substack{ Q:\text{ }Q-X_1-X_2\\ \hspace{4mm} Q-X_2-X_1}} I(Q;X_1).
\end{align*}
From Lemma 2 we have that, given $p_{Q|X_1X_2}$ such that $Q-X_1-X_2$ and $Q-X_2-X_1$, $\exists\text{ }p_{Q'|X_1X_2}$ such that $H(Q'|X_1)=H(Q'|X_2)=0$ and $X_1X_2-Q'-Q$. Then $I(Q;X_1)=I(Q;X_1X_2)=I(Q;X_1X_2Q')=I(Q';Q)\leq H(Q')$. $Q'$ is the maximal common RV of $X_1$ and $X_2$. Thus, we have $\mathop {\max }\limits_{\substack{ Q:\text{ }Q-X_1-X_2\\ \hspace{4mm} Q-X_2-X_1}} I(Q;X_1)=H(Q')\leq I(X_1;X_2)$, with equality iff $X_1-Q-X_2$, or equivalently, iff $(X_1,X_2)$ is saturable (see Remark 3). 
\end{proof}

\begin{remark}
Consider the G{\'a}cs-K\"{o}rner version of $I_\cap^2$: 
$$C_\cap^2(\{ {X_1},{X_2}\} ;Y) = \mathop {\max }\limits_{\substack{ Q:\text{ }Q-X_1-Y\\ \hspace{4mm} Q-X_2-Y}} C_{GK}(Q;Y).$$
Interestingly, $C_\cap^2$ satisfies \textbf{(Id)} in the sense that $C_\cap^2(\{ {X_1},{X_2}\} ;X_1X_2)=C_{GK}(X_1;X_2)$, or equivalently, $\mathop {\max }\limits_{\substack{ Q:\text{ }Q-X_1-X_2\\ \hspace{4mm} Q-X_2-X_1}} C_{GK}(Q;X_1X_2)=C_{GK}(X_1;X_2)$. To show this, we again use Lemma 2. Clearly the $Q$ that achieves the maximum is the maximal common RV $Q'$ so that we have, $\operatorname{LHS}=C_{GK}(Q';X_1X_2)=C_{GK}(X_1\wedge X_2;X_1X_2)=H(X_1\wedge X_2\wedge X_1X_2)=H(X_1\wedge X_2)=C_{GK}(X_1;X_2)=\operatorname{RHS}$.
\end{remark}

\begin{proposition}
$I_{\cap }^{2}$ satisfies \GP$\text{, }$ \WS$\text{, }$ \SR$\text{, }$ \Mo$\text{, }$ and \SMo$\text{ }$ but not \LN$\text{ }$ and \Idty.
\end{proposition}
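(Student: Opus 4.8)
My plan is to dispatch the five affirmative axioms quickly and then exhibit one distribution that simultaneously breaks \Idty{} and \LN, which is where the substance lies. First I would note that \GP{} is immediate: $I(Q;Y)\ge0$ for every admissible $Q$ (a constant $Q$ is admissible and gives $0$), so the maximum in (18) is nonnegative. \WS{} holds because the constraint pair $\{Q-X_1-Y,\;Q-X_2-Y\}$ is invariant under swapping the predictors. For \SR{} a single predictor leaves only $Q-X_1-Y$, so the data-processing inequality gives $I(Q;Y)\le I(X_1;Y)$ with equality at $Q=X_1$, i.e. $I_\cap^2(\{X_1\};Y)=I(X_1;Y)$. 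The monotonicity inequality shared by \Mo{} and \SMo{} is equally direct: adjoining a predictor $R_k$ appends the Markov constraint $Q-R_k-Y$, which can only shrink the feasible set of channels $Q$, so the maximum cannot increase. Since \SMo's equality hypothesis $I(R_iR_k;Y)=I(R_k;Y)$ is weaker than \Mo's $H(R_iR_k)=H(R_k)$, it suffices to establish \SMo.

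For the equality clause of \SMo{} I would substitute the witnessing source. When $I(R_iR_k;Y)=I(R_k;Y)$, i.e. $R_i-R_k-Y$, and $R_i$ is the source against which the new predictor $R_k$ is compared --- the situation relevant to $K=2$, where the hypothesis reads $X_1-X_2-Y$ --- the choice $Q=R_i$ satisfies $Q-R_i-Y$ trivially and $Q-R_k-Y$ by hypothesis, hence is feasible and attains $I(R_i;Y)$; this reproduces Lemma 4(a),(b). I would also record two data-processing facts needed below: $Q-X_i-Y$ forces $I_\cap^2\le\min\{I(X_1;Y),I(X_2;Y)\}$, so the induced unique informations $UI(\{X_i\};Y)=I(X_i;Y)-I_\cap^2$ and $I_\cap^2$ itself are nonnegative; thus for $K=2$ the only partial-information atom that can turn negative is the synergy.

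Both negative claims follow from Lemma 8. I would take the \textsc{Copy} mechanism $Y=X_1X_2$ on $\mathcal{Y}=\mathcal{X}_1\times\mathcal{X}_2$ with the indecomposable law of Example 7(a), namely $p_{X_1X_2}(0,0)=p_{X_1X_2}(0,1)=p_{X_1X_2}(1,1)=\tfrac13$, for which $I(X_1;X_2)>0$ while $C_{GK}(X_1;X_2)=0$. Lemma 8 gives $I_\cap^2(\{X_1,X_2\};X_1X_2)=C_{GK}(X_1;X_2)=0\ne I(X_1;X_2)$, so \Idty{} fails. Feeding this instance into (15), and using $I(X_i;X_1X_2)=H(X_i)$, $I(X_1X_2;Y)=H(X_1X_2)$ and $I_\cap^2=C_{GK}(X_1;X_2)$, the induced synergy reduces to
\[
SI(\{X_1X_2\};Y)=C_{GK}(X_1;X_2)-I(X_1;X_2)=-I(X_1;X_2)<0,
\]
so \LN{} is violated. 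The inequalities and both impossibility results are immediate once Lemma 8 is in hand; the genuinely delicate step is the equality half of \SMo{} beyond two predictors, where one must show the appended relation $Q-R_k-Y$ is inherited by an \emph{optimizer} of the smaller program rather than merely by the substitution $Q=R_i$, which does not follow from the pairwise conditional independences alone.
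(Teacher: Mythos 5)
Your proof is correct and, in outline, matches the paper's own: data-processing and constraint-set containment give \GP, \WS, \SR{} and the shared monotonicity inequality; the feasible witness $Q=X_1$ settles the equality clauses; and Lemma 8 applied to the \textsc{Copy} mechanism with the law of Example 7(a) kills both \Idty{} and \LN. Two local differences deserve comment. First, you obtain the equality clause of \Mo{} as a corollary of \SMo{} (since $H(R_i|R_k)=0$ forces $I(R_i;Y|R_k)=0$), whereas the paper argues \Mo{} directly via the claim that $H(X_1|X_2)=0$ makes $Q-X_1-Y$ imply $Q-X_2-Y$. That implication is false in general: take $X_1$ constant, $Q,Y$ i.i.d.\ uniform bits and $X_2=Q\oplus Y$, so that $Q-X_1-Y$ holds while $I(Q;Y|X_2)=1$. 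The conclusion survives only because the maximizer of the smaller program is $Q=X_1$, which is feasible for the larger one since $I(X_1;Y|X_2)\leq H(X_1|X_2)=0$ --- exactly the witness your subsumption uses, and the same one the paper's \SMo{} proof implicitly relies on for attainment; so your route is the more robust of the two. Second, for \LN{} the paper first gives a general argument (Lemma 7: $X_1-Y-X_2$ forces $SI^2=I_\cap^2(\{X_1,X_2\};Y)-I(X_1;X_2)\leq 0$) before invoking \textsc{Copy}; you go straight to the concrete instance, which is the right emphasis, since a violation of \LN{} needs strict negativity and only the indecomposable \textsc{Copy} law with $I(X_1;X_2)=.252>0$ and $C_{GK}(X_1;X_2)=0$ delivers it. Your closing caveat about the equality half of \SMo{} for more than two predictors is well taken as a mathematical point --- the witness $Q=R_i$ need not be feasible for the remaining constraints, and nothing forces an optimizer of the smaller program to satisfy the appended chain --- but it is moot for this proposition: $I_\cap^2$ as defined in (18), and the paper's proof of Proposition 1, are confined to the bivariate case.
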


\begin{proof}

\GP$\text{ }$ Global positivity follows immediately from the nonnegativity of mutual information.

\vspace{2mm}
\WS$\text{ }$ Symmetry follows since $I_{\cap }^{2}$ is invariant under reordering of the ${X_i}$'s.

\vspace{2mm}
\SR$\text{ }$ If $Q-X_1-Y$, then $I(Q;Y)\leq I(X_1;Y)$. Then, self-redundancy follows from noting that $I_{\cap }^{2}(\{X_1\};Y)=\mathop {\max }\limits_{Q:\text{ }Q - {X_1} - Y} I(Q;Y)=I(X_1;Y)$. 

\vspace{2mm}
\Mo$\text{ }$ We first show that $I_{\cap }^{2}(\{X_1,X_2\};Y)\leq I_{\cap }^{2}(\{X_1\};Y)$. This follows immediately from noting that $\mathop {\max }\limits_{\substack{ Q:\text{ }Q-X_1-Y\\ \hspace{4mm} Q-X_2-Y}} I(Q;Y)$ $\leq \mathop {\max }\limits_{Q:\text{ }Q - {X_1} - Y} I(Q;Y)$, since the constraint set for the $\operatorname{LHS}$ is a subset of that for the $\operatorname{RHS}$ and the objective function for the maximization is the same on both sides.

For the equality condition, we need to show that if $H(X_1|X_2)=0$, then $I_{\cap }^{2}(\{X_1,X_2\};Y)=I_{\cap }^{2}(\{X_1\};Y)$. It suffices to show that if $H(X_1|X_2)=0$, then $\mathop {\max }\limits_{\substack{ Q:\text{ }Q-X_1-Y\\ \hspace{4mm} Q-X_2-Y}} I(Q;Y) \geq \mathop {\max }\limits_{Q:\text{ }Q - {X_1} - Y} I(Q;Y)$. This holds since if $H(X_1|X_2)=0$, then $Q-X_1-Y\implies Q-X_2-Y$. 

\vspace{2mm}
\SMo$\text{ }$ Since \textbf{(M)} holds, it suffices to show the equality condition. For the latter, we need to show that if $I(X_1X_2;Y)=I(X_2;Y)$ or equivalently, if $X_1-X_2-Y$, then $I_{\cap }^{2}(\{X_1,X_2\};Y)=I_{\cap }^{2}(\{X_1\};Y)$. This follows from noting that $I(Q;Y)\mathop\leq\limits^{{\text{(a)}}}I(X_1;Y)\mathop\leq\limits^{{\text{(b)}}}I(X_2;Y)$, where (a) follows from $Q-X_1-Y$ and (b) follows from $X_1-X_2-Y$. Hence, we have $I_{\cap }^{2}(\{X_1,X_2\};Y)=\mathop {\max }\limits_{\substack{ Q:\text{ }Q-X_1-Y\\ \hspace{4mm} Q-X_2-Y}} I(Q;Y)=I(X_1;Y)\mathop= \limits^{{\text{(c)}}}I_{\cap }^{2}(\{X_1\};Y)$, where (c) follows from \textbf{(I)}.

\vspace{2mm}
\LN$\text{ }$ Proof by counter-example: We show that if $X_1-Y-X_2$, then \textbf{(LP)} is violated. First note that if $X_1-Y-X_2$, then using the symmetry of co-information, the derived synergy measure is $SI^2(\{ {X_1}{X_2}\} ;Y)=I_{\cap }^{2}(\{X_1,X_2\};Y)-I(X_1;X_2)$. From Lemma 7, it follows that $I_{\cap }^{2}(\{X_1,X_2\};Y)\leq I(X_1;X_2)$ so that $SI^2(\{ {X_1}{X_2}\} ;Y)\leq 0$. Hence, there exists at least one distribution such that \textbf{(LP)} does not hold, which suffices to say that \textbf{(LP)} does not hold in general. 

Indeed, the \textsc{Copy} function in Example 7 provides a direct counterexample, since $I_{\cap }^{2}(\{X_1,X_2\};Y)=0$ and $SI^2(\{ {X_1}{X_2}\} ;Y)=-I(X_1;X_2)\leq 0$. Not surprisingly, the derived synergy measure exactly matches the deficit in mechanistic redundancy that $I_ \cap ^2$ fails to capture. 

\vspace{2mm}
\Idty$\text{ }$ By Lemma 8, $I_\cap^2$ violates \textbf{(Id)}. 

\end{proof}

\begin{proposition}
$I_{\cap }^{2}$ satisfies \PM$\text{ }$ but not \TM.
\end{proposition}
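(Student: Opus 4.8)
The plan is to prove the positive half (\PM) by a monotonicity-and-symmetry manipulation on the predictor lattice, reusing only the properties already secured in Proposition 1, and to refute \TM by an explicit counterexample built from the \textsc{Copy} mechanism of Lemma 8.

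For \PM, suppose $H(X_1|X_1')=0$. The naive route---claiming the feasible set of the optimization defining $I_\cap^2(\{X_1,X_2\};Y)$ sits inside that of $I_\cap^2(\{X_1',X_2\};Y)$---is not available, because conditioning on the finer variable $X_1'$ need not preserve the Markov chain $Q-X_1-Y$. Instead I would route through the three-predictor quantity $I_\cap^2(\{X_1,X_2,X_1'\};Y)$. First, inserting $X_1'$ into $\{X_1,X_2\}$ leaves $I_\cap^2$ unchanged: the equality clause of \Mo makes the addition of a predictor free whenever some predictor already present is a function of the newly inserted one, and $H(X_1|X_1')=0$ supplies exactly such an $X_1$; hence $I_\cap^2(\{X_1,X_2\};Y)=I_\cap^2(\{X_1,X_2,X_1'\};Y)$. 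Next, \WS makes the value invariant under reordering, so this equals $I_\cap^2(\{X_1',X_2,X_1\};Y)$. Finally, the inequality part of \Mo (adding $X_1$ can only lower redundancy) gives $I_\cap^2(\{X_1',X_2,X_1\};Y)\le I_\cap^2(\{X_1',X_2\};Y)$. Chaining the three relations yields $I_\cap^2(\{X_1,X_2\};Y)\le I_\cap^2(\{X_1',X_2\};Y)$, which is \PM.

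For the failure of \TM, I would display correlated predictors for which enlarging the target strictly decreases $I_\cap^2$. Fix $X_1,X_2$ whose joint $p_{X_1X_2}$ is indecomposable but dependent (for instance $p(0,0)=p(0,1)=p(1,1)=\tfrac13$), so that $\Cgk{X_1;X_2}=0$ while $I(X_1;X_2)>0$. Take $Z=(X_1,X_2)$ (the \textsc{Copy} mechanism) and $Y=X_1$, so that $H(Y|Z)=0$ as \TM requires. By Lemma 8, $I_\cap^2(\{X_1,X_2\};Z)=\Cgk{X_1;X_2}=0$. On the other hand I claim $I_\cap^2(\{X_1,X_2\};X_1)=I(X_1;X_2)$: the lower bound follows by the feasible choice $Q=X_2$, since both chains $Q-X_1-X_1$ and $Q-X_2-X_1$ hold trivially and $I(Q;X_1)=I(X_2;X_1)$, while the matching upper bound follows from \Mo together with \SR because $I_\cap^2(\{X_1,X_2\};X_1)\le I_\cap^2(\{X_2\};X_1)=I(X_2;X_1)$. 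Thus $I_\cap^2(\{X_1,X_2\};Y)=I(X_1;X_2)>0=I_\cap^2(\{X_1,X_2\};Z)$ even though $Y$ is a coarsening of $Z$, contradicting \TM.

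The main obstacle is conceptual rather than computational: it is recognizing that \PM cannot be obtained by direct feasible-set inclusion and must instead be reduced to the monotonicity and symmetry properties already proved, via the auxiliary predictor $X_1'$; once that reduction is in place both directions are one line each. A secondary point demanding care is the selection of the \TM counterexample so that $\Cgk{X_1;X_2}$ genuinely falls below $I(X_1;X_2)$---that is, an indecomposable yet dependent pair---since for independent predictors Lemma 6 forces $I_\cap^2=0$ and no violation could arise.
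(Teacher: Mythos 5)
Your \TM\ half is correct, and it is actually \emph{more} complete than the paper's own argument: the paper merely notes that $Q-X_i-Z\implies Q-X_i-Y$ while the converse fails, which shows the feasible-set inclusion runs the wrong way but does not by itself exhibit a violation of the inequality between the two maxima. Your \textsc{Copy}-based instance closes that logical gap exactly as Lemma~8 invites: with $p_{X_1X_2}$ indecomposable but dependent, $Z=(X_1,X_2)$ and $Y=X_1$, Lemma~8 gives $I_\cap^2(\{X_1,X_2\};Z)=\Cgk{X_1;X_2}=0$, while $Q=X_2$ is feasible for the target $X_1$ (both chains $Q-X_1-X_1$ and $X_2-X_2-X_1$ hold trivially) and \Mo\ together with \SR\ supplies the matching upper bound, so $I_\cap^2(\{X_1,X_2\};X_1)=I(X_1;X_2)>0$. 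This half stands.

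The \PM\ half, however, contains a genuine gap. Your chain --- adjoin $X_1'$ for free by the equality clause of \Mo, reorder by \WS, drop $X_1$ by the inequality clause of \Mo\ --- is valid property algebra, but its first step invokes the equality clause of \Mo\ for the \emph{three}-predictor collection $\{X_1,X_2,X_1'\}$, and that instance is not among the properties secured by Proposition~1: the proof there establishes \Mo\ for $I_\cap^2$ only in the two-predictor case, where the needed move (replace a feasible $Q$ by a regenerated copy $\tilde Q$ with $\tilde Q\perp(X_2,Y)\,|\,X_1$ and $p_{\tilde Q|X_1}=p_{Q|X_1}$, which preserves $p_{QY}$ and makes the one new chain hold) succeeds because only a single additional chain must be verified. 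For your step you must show that the maximum of $I(Q;Y)$ over $Q$ satisfying $Q-X_1-Y$ and $Q-X_2-Y$ is attained within the smaller set where additionally $Q-X_1'-Y$. Regenerating from $X_1$ does create the chain through the finer $X_1'$ (conditioning on $x_1'$ fixes $x_1=f(x_1')$, so $p(q|x_1',y)=p(q|x_1)$), but it preserves only $p_{QX_1Y}$, not $p_{QX_2Y}$, and so can destroy the chain $Q-X_2-Y$; keeping $Q$ as-is fails by your own (correct) observation that $Q-X_1-Y\not\Rightarrow Q-X_1'-Y$ for arbitrary couplings (take $X_1$ trivial and $X_1'=Q\oplus Y$ with $Q,Y$ i.i.d.\ bits). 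Thus the three-predictor equality clause is essentially equivalent in difficulty to \PM\ itself, and your reduction is circular at exactly the point you set out to avoid. Note, finally, that the paper proves \PM\ by asserting precisely the pointwise implication you reject, made legitimate only by the same implicit channel-regeneration reduction; so your reroute through \Mo\ and \WS\ does not in fact gain rigor over the paper's one-line argument --- to close the gap either way, one must exhibit a modification of $Q$ that installs the $X_1'$-chain while respecting the $X_2$-chain simultaneously, and neither your proposal nor the paper supplies it.
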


\begin{proof}

\TM$\text{ }$ We need to show that if $H(Y|Z)=0$, then, 
$I_ \cap^2(\{ {X_1},{X_2}\} ;Y) \leq I_ \cap^2(\{ {X_1},{X_2}\} ;Z)$, or equivalently, $\mathop {\max }\limits_{\substack{ Q:\text{ }Q-X_1-Y\\ \hspace{4mm} Q-X_2-Y}} I(Q;Y) \leq \mathop {\max }\limits_{\substack{ Q:\text{ }Q-X_1-Z\\ \hspace{4mm} Q-X_2-Z}} I(Q;Z)$. The latter does not hold since $Q-X_i-Z\implies Q-X_i-Y$, $i=1,2$, but the converse does not hold in general. Hence, $I_{\cap }^{2}$ violates \textbf{(TM)}.

\vspace{2mm}
\PM$\text{ }$ We need to show that if $H(X_1|X_1')=0$, then $I_ \cap^2(\{ {X_1},{X_2}\} ;Y) \leq I_ \cap^2(\{ {X_1'},{X_2}\} ;Y)$, or equivalently, $\mathop {\max }\limits_{\substack{ Q:\text{ }Q-X_1-Y\\ \hspace{4mm} Q-X_2-Y}} I(Q;Y) \leq \mathop {\max }\limits_{\substack{ Q:\text{ }Q-X_1'-Y\\ \hspace{4mm} Q-X_2-Y}} I(Q;Y)$. The latter holds since $Q-X_1-Y\implies Q-X_1'-Y$. Since $I_{\cap }^{2}$ is symmetrical in the $X_i$'s, $I_{\cap }^{2}$ satisfies \textbf{(PM)}. 
\end{proof}

\begin{proposition}
$UI^2$ satisfies \TMU$\text{ }$ and \PMUC$\text{ }$ but not \PMU.
\end{proposition}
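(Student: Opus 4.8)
The plan rests on one identity. For any feasible $Q$ (that is, $Q-X_1-Y$) we have $I(X_1;Y|Q)=I(X_1;Y)-I(Q;Y)$, so comparing (18) with (19) gives
\begin{align*}
UI_{X_2}(\{X_1\};Y)=I(X_1;Y)-I_\cap^2(\{X_1,X_2\};Y),
\end{align*}
which is just the instance (15b) of the decomposition. I would prove the two positive properties directly off this identity and the feasible sets, and kill \PMU\ with a single explicit counterexample; finding that counterexample is the only hard step.

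For \TMU, suppose $H(Y|Z)=0$, so $Y=g(Z)$. Two observations do the work. First, $Q-X_i-Z\implies Q-X_i-Y$, since the conditional independence $Q\perp Z\mid X_i$ is inherited by the function $Y=g(Z)$; hence the feasible set for target $Z$ is contained in that for target $Y$. Second, for \emph{every} $Q$ one has the data-processing bound $I(X_1;Y|Q)\le I(X_1;Z|Q)$ (expand $I(X_1;Z|Q)=I(X_1;Y|Q)+I(X_1;Z|Y,Q)$). Combining, minimizing $I(X_1;Y|Q)$ over the \emph{larger} set for $Y$ only lowers the value, and on the smaller set the objective is dominated pointwise:
\begin{align*}
UI_{X_2}(\{X_1\};Y)\le \min_{\substack{Q-X_1-Z\\ Q-X_2-Z}} I(X_1;Y|Q)\le \min_{\substack{Q-X_1-Z\\ Q-X_2-Z}} I(X_1;Z|Q)=UI_{X_2}(\{X_1\};Z).
\end{align*}

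For \PMUC\ I would reduce to \PM\ for $I_\cap^2$, already established in Proposition 2. If $H(X_2|X_2')=0$, then applying \PM\ to the second argument (legitimate by the symmetry \WS\ of $I_\cap^2$) gives $I_\cap^2(\{X_1,X_2\};Y)\le I_\cap^2(\{X_1,X_2'\};Y)$; since the self-term $I(X_1;Y)$ does not change, the identity above yields $UI_{X_2'}(\{X_1\};Y)=I(X_1;Y)-I_\cap^2(\{X_1,X_2'\};Y)\le I(X_1;Y)-I_\cap^2(\{X_1,X_2\};Y)=UI_{X_2}(\{X_1\};Y)$, which is exactly \PMUC.

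For the failure of \PMU, let $Y_1,Y_2$ be a non-saturable correlated pair (Remark 3), so $C_{GK}(Y_1;Y_2)<I(Y_1;Y_2)$; concretely take a doubly symmetric binary source with crossover $0<\epsilon<\tfrac12$, which is indecomposable, whence $C_{GK}(Y_1;Y_2)=0<I(Y_1;Y_2)$. Set $Y=(Y_1,Y_2)$, $X_2=Y_2$, $X_1=Y_1$ and $X_1'=(Y_1,Y_2)=Y$, so $H(X_1|X_1')=0$. Because $X_1'$ determines $Y$ the constraint $Q-X_1'-Y$ is vacuous, and self-redundancy (\SR) gives $I_\cap^2(\{X_1',X_2\};Y)=I(X_2;Y)=H(Y_2)$, hence $UI_{X_2}(\{X_1'\};Y)=H(Y)-H(Y_2)=H(Y_1|Y_2)$. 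For the coarse predictor, $Y=X_1X_2$ is a \textsc{Copy}, so Lemma 8 pins $I_\cap^2(\{X_1,X_2\};Y)=C_{GK}(Y_1;Y_2)$ and $UI_{X_2}(\{X_1\};Y)=H(Y_1)-C_{GK}(Y_1;Y_2)$; the difference is $I(Y_1;Y_2)-C_{GK}(Y_1;Y_2)>0$, so $UI_{X_2}(\{X_1\};Y)>UI_{X_2}(\{X_1'\};Y)$ although $H(X_1|X_1')=0$, contradicting \PMU. The genuinely delicate point is \emph{locating} this example: in every attempt where a predictor or $X_2$ pins down $Y$, or where $X_1\perp X_2$, the gain in $I_\cap^2$ exactly offsets the gain in $I(X_1;\cdot)$ and one is forced into equality, and even the symmetric common-cause source (two BSC observations of one bit) turns out to satisfy $I_\cap^2=I(X_1;X_2)$, again giving equality; the violation only appears once enlarging the predictor to determine $Y$ ``unlocks'' the full $I(X_2;Y)$ of redundancy while the coarse predictor can share no more than the strictly smaller Gács--Körner core $C_{GK}(Y_1;Y_2)$, and it is Lemma 8 that makes this coarse redundancy exactly computable.
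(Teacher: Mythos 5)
Your proofs of \TMU{} and \PMUC{} are correct and, at bottom, the same as the paper's; the interesting divergence is \PMU. For \TMU{} the paper's argument is exactly your two observations: $Q-X_i-Z\implies Q-X_i-Y$ shrinks the feasible set in the right direction, and $I(X_1;Y|Q)\leq I(X_1;Z|Q)$ dominates the objective pointwise. For \PMUC{} the paper argues directly on feasible sets, claiming $Q-X_2-Y\implies Q-X_2'-Y$ when $H(X_2|X_2')=0$; your route through the identity $UI^2_{X_2}(\{X_1\};Y)=I(X_1;Y)-I_\cap^2(\{X_1,X_2\};Y)$ (valid since on the common feasible set $Q-X_1-Y$ forces $I(X_1;Y|Q)=I(X_1;Y)-I(Q;Y)$; the paper uses this implicitly when verifying (15f)) combined with \PM{} of Proposition 2 is equivalent rather than genuinely different, because the paper's proof of \PM{} rests on the very same inclusion—your reduction just packages it more transparently. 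Where you genuinely depart from, and improve on, the paper is \PMU: the paper gives no counterexample at all, only the observation that $Q-X_1-Y\implies Q-X_1'-Y$ and not conversely, i.e., that the feasible-set monotonicity runs the wrong way; strictly speaking this shows only that one proof strategy fails, not that the property fails. Your explicit counterexample closes that logical gap: with $(Y_1,Y_2)$ a doubly symmetric binary source ($0<\epsilon<\tfrac12$, so all four masses are positive, $p_{Y_1Y_2}$ is indecomposable and $C_{GK}(Y_1;Y_2)=0<I(Y_1;Y_2)$), $Y=X_1X_2$ a \textsc{Copy} target, and $X_1'=Y$, Lemma 8 pins $I_\cap^2(\{X_1,X_2\};Y)=C_{GK}(Y_1;Y_2)=0$ so $UI^2_{X_2}(\{X_1\};Y)=H(Y_1)$, while the vacuous constraint $Q-X_1'-Y$ lets $Q=X_2$ achieve $I_\cap^2(\{X_1',X_2\};Y)=H(Y_2)$ and hence $UI^2_{X_2}(\{X_1'\};Y)=H(Y_1|Y_2)<H(Y_1)$, violating \PMU{} by exactly $I(Y_1;Y_2)-C_{GK}(Y_1;Y_2)>0$. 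All steps check out (the only cosmetic quibble: what you invoke as \SR{} for the enlarged predictor is really data processing under $Q-X_2-Y$ plus feasibility of $Q=X_2$, since \SR{} as stated concerns a single predictor; your closing narrative about where counterexamples cannot live is inessential to the proof). In short: your \TMU{} and \PMUC{} buy nothing new over the paper, but your \PMU{} argument is strictly stronger than the paper's and is the form the claim ought to take.
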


\begin{proof}

\TMU$\text{ }$ We need to show that if $H(Y|Z)=0$, then $UI^2_{X_2}(\{{{X}_{1}}\};Y) \leq UI^2_{X_2}(\{{{X}_{1}}\};Z)$, or equivalently, $\mathop {\min }\limits_{\substack{ Q:\text{ }Q-X_1-Y\\ \hspace{4mm} Q-X_2-Y}}  I({X_1};Y|Q) \leq \mathop {\min }\limits_{\substack{ Q:\text{ }Q-X_1-Z\\ \hspace{4mm} Q-X_2-Z}}  I({X_1};Z|Q)$. The latter holds since $Q-X_i-Z\implies Q-X_i-Y$, $i=1,2$ and $I({X_1};Y|Q) \leq I({X_1};Z|Q)$. 

\vspace{2mm}
\PMUC$\text{ }$ We need to show that if $H(X_2|X_2')=0$, then $UI^2_{X_2}(\{{{X}_{1}}\};Y) \geq UI^2_{X_2'}(\{X_1\};Y)$, or equivalently, $\mathop {\min }\limits_{\substack{ Q:\text{ }Q-X_1-Y\\ \hspace{4mm} Q-X_2-Y}}  I({X_1};Y|Q) \geq \mathop {\min }\limits_{\substack{ Q:\text{ }Q-X_1-Y\\ \hspace{4mm} Q-X_2'-Y}}  I({X_1};Y|Q)$. The latter holds since $Q-X_2-Y\implies Q-X_2'-Y$.

\vspace{2mm}
\PMU$\text{ }$ We need to show that if $H(X_1|X_1')=0$, then $UI^2_{X_2}(\{{{X}_{1}}\};Y) \leq UI^2_{X_2}(\{X_1'\};Y)$, or equivalently, $\mathop {\min }\limits_{\substack{ Q:\text{ }Q-X_1-Y\\ \hspace{4mm} Q-X_2-Y}}  I({X_1};Y|Q) \leq \mathop {\min }\limits_{\substack{ Q:\text{ }Q-X_1'-Y\\ \hspace{4mm} Q-X_2-Y}}  I({X_1'};Y|Q)$. The latter does not hold since $Q-X_1-Y\implies Q-X_1'-Y$, but the converse does not hold in general.
\end{proof}

\subsection{Comparison with Existing Measures}

For the $K=2$ case, it is sufficient to specify any one of the functions $I_{\cap}$, $UI$ or $SI$ to determine a unique decomposition of $I(X_1X_2;Y)$ (see (15a)). Information-geometric arguments have been forwarded in \cite{ref10}, \cite{ref3} to quantify redundancy. We do not repeat all the definitions in \cite{ref10}. However, for the sake of exposition, we prefer working with the unique information since geometrically, the latter shares some similarities with the mutual information which can be interpreted as a weighted distance.
\begin{align*}
I({X_1};Y) = \sum\nolimits_{x \in {{\mathcal{X}}_1}} {{p_{{X_1}}}(x)} D({p_{Y|{X_1} = {x_1}}}||{p_Y}).
\end{align*}

Given a measurement of a predictor, say $X_1=x_1$, unique information is defined in terms of the reverse information projection \cite{ref40} of  ${{p}_{Y}}_{|{{X}_{1}}={{x}_{1}}}$ on the convex closure of the set of all conditional distributions of $Y$ for all possible outcomes of ${{X}_{2}}$.
\begin{align*}
UI^{3}(\{ {X_1}\} ;Y) = \sum\nolimits_{x \in {{\mathcal{X}}_1}} {{p_{{X_1}}}(x)} \mathop {\min }\limits_{Q \in \Delta } D({p_{Y|{X_1} = {x_1}}}||Q),\tag{20}
\end{align*}
where $\Delta $ is the convex hull of ${{\left\{ {{p}_{Y}}_{|{{X}_{2}}={{x}_{2}}} \right\}}_{{{x}_{2}}\in {{\mathsf{\mathcal{X}}}_{2}}}}$, the family of all conditional distributions of $Y$ given the different outcomes of $X_2$. Since the KL divergence is convex with respect to both its arguments, the minimization in (20) is well-defined. It is easy to see however, that $UI^{3}$ violates the symmetry condition (15f) unless the projection is guaranteed to be unique. Uniqueness is guaranteed only when the set we are projecting onto is log-convex \cite{ref40}. In particular, (20) only gives a lower bound on the unique information so that we have,
\begin{align*}
UI^{3}(\{ {X_1}\} ;Y) \geq \sum\nolimits_{x \in {{\mathcal{X}}_1}} {{p_{{X_1}}}(x)} \mathop {\min }\limits_{Q \in \Delta } D({p_{Y|{X_1} = {x_1}}}||Q).
\end{align*}
The symmetry is restored by considering the minimum of the projected information terms for the derived redundant information \cite{ref10}.
\begin{align*}
I_\cap^{3}(\{ {X_1},{X_2}\} ;Y)=\min [&(I(X_1;Y)-UI^{3}(\{ {X_1}\} ;Y)),\\&(I(X_2;Y)-UI^{3}(\{ {X_2}\} ;Y))]. \tag{21}
\end{align*}
$I_\cap^{3}$ satisfies \textbf{(GP)}, \textbf{(S)}, \textbf{(I)}, \textbf{(M)}, \textbf{(LP)} and \textbf{(Id)} but not \textbf{(TM)} \cite{ref10}.  

The following measure of unique information is proposed in \cite{ref5}.
\begin{align*}
UI_\cap^{4}(\{ {X_1}\} ;Y)=\mathop {\max }\limits_{\substack{(X_1',X_2',Y'):\text{ }p_{X_1'Y'}=p_{X_1Y},\\\hspace{15.5mm}p_{X_2'Y'}=p_{X_2Y}}}  I({X_1'};Y'|X_2'). \tag{22}
\end{align*}
The derived redundant information is $I_\cap^{4}(\{ {X_1},{X_2}\} ;Y)=I(X_1;Y)-UI_\cap^{4}(\{ {X_1}\} ;Y)$.
$I_\cap^{4}$ satisfies \textbf{(GP)}, \textbf{(S)}, \textbf{(I)}, \textbf{(M)}, \textbf{(LP)} and \textbf{(Id)} but not \textbf{(TM)} \cite{ref5}.

Proposition 4 shows that both $I_\cap^{3}$ and $I_\cap^{4}$ satisfy \textbf{(SM)}.

\begin{proposition}
$I_\cap^{3}$ and $I_\cap^{4}$ satisfy \SMo$\text{. }$
\end{proposition}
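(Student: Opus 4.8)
The plan is to prove \SMo{} for $I_\cap^3$ and $I_\cap^4$ by a single argument that opens up neither of the variational definitions (21) and (22). Instead I would rely only on the two facts already recorded for each measure, namely that it satisfies \Mo{} and \LN{}, together with the governing equations (15b)--(15e) and self-redundancy \SR{}. The observation that drives everything is that for $K=2$ both the inequality and the equality clause of \SMo{} are consequences of the nonnegativity of the \emph{induced} partial-information atoms, so the two measures can be treated uniformly.

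First I would dispose of the inequality. For $K=2$ the bound required by \SMo{} is $I_\cap^j(\{X_1,X_2\};Y)\le I_\cap^j(\{X_1\};Y)=I(X_1;Y)$ (the last equality being \SR{}), which is exactly the inequality of \Mo{} and is already established; alternatively it is immediate from \LN{}, since the induced unique information $UI(\{X_1\};Y)=I(X_1;Y)-I_\cap^j(\{X_1,X_2\};Y)$, read off from (15b), is nonnegative. By \WS{} the symmetric bound with $X_1$ and $X_2$ interchanged also holds.

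The remaining content is the equality clause, and here I would first translate the hypothesis. For $K=2$ the condition ``$\exists R_i$ with $I(R_iR_k;Y)=I(R_k;Y)$'' has $R_k=X_2$ the added variable, so taking $R_i=X_1$ (and discarding the vacuous choice $R_i=R_k$) it reads $I(X_1X_2;Y)=I(X_2;Y)$, i.e.\ $I(X_1;Y|X_2)=0$, the Markov chain $X_1-X_2-Y$. Feeding this into (15d) gives $I(X_1;Y|X_2)=SI(\{X_1X_2\};Y)+UI(\{X_1\};Y)=0$. Because $I_\cap^j$ satisfies \LN{}, the induced atoms $SI$ and $UI(\{X_1\};Y)$ are both nonnegative, so a sum equal to zero forces each to vanish; in particular $UI(\{X_1\};Y)=0$. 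Substituting back into (15b) yields $I_\cap^j(\{X_1,X_2\};Y)=I(X_1;Y)=I_\cap^j(\{X_1\};Y)$, the desired equality. This is just the reasoning of Lemma 4(a) applied to the concrete measures, and the mirror hypothesis $X_2-X_1-Y$ is handled identically through (15e) and \WS{}.

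The step needing the most care is not a calculation but a bookkeeping point: throughout I must use the decomposition's induced unique information $UI(\{X_1\};Y)=I(X_1;Y)-I_\cap^j$, not the auxiliary projected quantity $UI^3$ of (20), which coincides with it only when the reverse information projection is unique. Once this is kept straight, the argument never distinguishes $I_\cap^3$ from $I_\cap^4$, since it uses only the nonnegativity supplied by \LN{}. The one thing I would verify explicitly is that \LN{} for these measures delivers $SI\ge 0$ \emph{and} $UI(\{X_1\};Y)\ge 0$ simultaneously; this is immediate because for $K=2$ the PI atoms are precisely $I_\cap$, $UI(\{X_1\})$, $UI(\{X_2\})$ and $SI$, so \LN{} is exactly the joint nonnegativity of these four quantities.
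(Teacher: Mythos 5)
Your proof is correct, but it takes a genuinely different route from the paper's. The paper's entire proof of Proposition 4 is a citation---``See Lemma 13 and Corollary 23 in \cite{ref5}''---i.e., it leans on the measure-specific arguments of Bertschinger \emph{et al.}, which work directly with the optimization (22) (and, via their Corollary 23, with the Harder \emph{et al.} measure behind (20)--(21)). You instead give an axiomatic derivation that never opens either definition: for $K=2$, any measure satisfying \WS, \SR, \Mo{} and \LN{} together with the governing equations (15) automatically satisfies \SMo, since the hypothesis $I(X_1X_2;Y)=I(X_2;Y)$ is the chain $X_1-X_2-Y$, (15d) then gives $SI(\{X_1X_2\};Y)+UI(\{X_1\};Y)=0$, \LN{} forces each induced atom to vanish, and (15b) with \SR{} yields $I_\cap^j(\{X_1,X_2\};Y)=I(X_1;Y)=I_\cap^j(\{X_1\};Y)$; this is indeed just Lemma 4(a) instantiated for measures known to be locally positive, and it matches the translation of the equality clause the paper itself uses in the \SMo{} part of Proposition 1 (including your discarding of the vacuous $R_i=R_k$ reading). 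Your two bookkeeping points are also sound: \LN{} for $K=2$ is precisely joint nonnegativity of the four induced atoms, and the induced $UI(\{X_1\};Y)=I(X_1;Y)-I_\cap^3$ from (15b) must not be conflated with the projected quantity $UI^3$ in (20), which is only a lower bound when the reverse information projection is not unique. What each approach buys: your argument is uniform in $j$ and shows that \SMo{} is not an independent property but a formal consequence of \LN, \SR{} and the chain rule for any bivariate decomposition---so it extends to any future measure with those properties. Its cost is that it is parasitic on \LN, itself a nontrivial theorem cited from \cite{ref10} and \cite{ref5}, so it cannot substitute for a direct proof where \LN{} fails; note the paper had to argue \SMo{} for $I_\cap^2$ directly in Proposition 1, since $I_\cap^2$ satisfies \SMo{} while violating \LN, and your method would be silent there.
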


\begin{proof}
See Lemma 13 and Corollary 23 in \cite{ref5}.
\end{proof}
\smallskip

It is easy to show that $I_\cap^{1}$ violates \textbf{(SM)} (see Example \textsc{ImperfectRdn} in \cite{ref8}).
Table 1 lists the desired properties satisfied by $I_\cap^{1}$, $I_\cap^{2}$, $I_\cap^{3}$ \cite{ref10} and $I_\cap^{4}$ \cite{ref5}.

The following proposition from \cite{ref5} gives the conditions under which $I_\cap^{3}$ and $I_\cap^{4}$ vanish.
\begin{proposition}
If both $X_1-Y-X_2$ and $X_1\perp X_2$ hold, then $I_\cap^{3}=I_\cap^{4}=0$.
\end{proposition}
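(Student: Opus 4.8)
The plan is to dispatch both vanishing results in parallel, since a single elementary identity drives them. First I would record that the two hypotheses combine to give, for all $x_1,x_2$,
\[ \sum_{y} p(y)\,p(x_1|y)\,p(x_2|y) = p(x_1,x_2) = p(x_1)\,p(x_2), \]
where the first equality is the Markov condition $X_1-Y-X_2$ (marginalize $p(x_1,x_2|y)=p(x_1|y)p(x_2|y)$ over $y$) and the second is $X_1\perp X_2$. Equivalently, in Bayes form, $\sum_y \tfrac{p(y|x_1)}{p_Y(y)}\,p(y|x_2)=1$ for every $x_1,x_2$. This is the only place both hypotheses are genuinely spent; dropping either one breaks it (e.g.\ the \textsc{And} example has $X_1\perp X_2$ but fails $X_1-Y-X_2$, and there $I_\cap\ne 0$). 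Throughout, $I_Q$ and $I_{Co}^{Q}$ denote mutual information and co-information computed under a law $Q$.

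For $I_\cap^{4}$: recall $I_\cap^{4}=I(X_1;Y)-UI_\cap^{4}(\{X_1\};Y)$ with $UI_\cap^{4}(\{X_1\};Y)$ the (minimizing) optimization of $I_Q(X_1;Y|X_2)$ over the family $\Delta_P=\{Q: Q_{X_1Y}=p_{X_1Y},\,Q_{X_2Y}=p_{X_2Y}\}$, the minimization that forces $UI_\cap^{4}\le I(X_1;Y)$ and hence $I_\cap^{4}\ge 0$. Since every $Q\in\Delta_P$ shares $p_{X_1Y}$, we have $I_Q(X_1;Y|X_2)=I(X_1;Y)-I_{Co}^{Q}(X_1;X_2;Y)$ with $I(X_1;Y)$ constant; so it suffices to show $I_{Co}^{Q}\le 0$ on $\Delta_P$, with equality at $Q=p$. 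Equality at $p$ is immediate ($I_p(X_1;X_2)=I_p(X_1;X_2|Y)=0$). For the inequality, note each $Q\in\Delta_P$ fixes $Q_Y=p_Y$, $Q(x_i|y)=p(x_i|y)$ and $Q_{X_i}=p_{X_i}$, so the only freedom is the per-slice coupling $Q(x_1,x_2|y)$. Writing $Q_{X_1X_2}(x_1,x_2)=\sum_y p(y)Q(x_1,x_2|y)$ and rewriting $p_{X_1}p_{X_2}=\sum_y p(y)p(x_1|y)p(x_2|y)$ via the displayed identity, joint convexity of relative entropy gives
\[ I_Q(X_1;X_2)=D\!\Big(\textstyle\sum_y p(y)Q(\cdot,\cdot|y)\,\Big\|\,\sum_y p(y)p(\cdot|y)p(\cdot|y)\Big)\le \sum_y p(y)\,D\big(Q(\cdot,\cdot|y)\,\|\,p(\cdot|y)p(\cdot|y)\big)=I_Q(X_1;X_2|Y). \]
Hence $I_{Co}^{Q}\le 0$ throughout $\Delta_P$, the minimum of $I_Q(X_1;Y|X_2)$ equals $I(X_1;Y)$ (attained at $p$), and $I_\cap^{4}=0$.

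For $I_\cap^{3}$: each bracket in (21) is nonnegative because $p_Y=\sum_{x_2}p_{X_2}(x_2)p_{Y|X_2=x_2}\in\Delta$ forces $\min_{Q\in\Delta}D(p_{Y|X_1=x_1}\|Q)\le D(p_{Y|X_1=x_1}\|p_Y)$; so it is enough to show one bracket vanishes, i.e.\ that $p_Y$ is itself a minimizer of the reverse information projection of $p_{Y|X_1=x_1}$ onto $\Delta$ for each $x_1$. As $Q\mapsto D(p_{Y|X_1=x_1}\|Q)$ is convex, $p_Y$ is optimal over the convex set $\Delta$ iff the directional derivative toward each extreme point $p_{Y|X_2=x_2}$ is nonnegative, i.e.\ iff $\sum_y \tfrac{p(y|x_1)}{p_Y(y)}\,p(y|x_2)\le 1$. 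By the displayed identity this sum equals exactly $1$, so the first-order condition holds (with equality) and $p_Y$ is optimal; therefore $UI^{3}(\{X_1\};Y)=\sum_{x_1}p_{X_1}(x_1)D(p_{Y|X_1=x_1}\|p_Y)=I(X_1;Y)$, that bracket is $0$, and so is $I_\cap^{3}$ (the same computation applies with $X_1,X_2$ exchanged).

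The main obstacle is conceptual rather than computational: recognizing that neither hypothesis alone suffices and isolating the exact identity they jointly produce. Once it is in hand, the two measures are closed out by the standard tool attached to each definition---joint convexity of $D(\cdot\|\cdot)$ for the marginal-constrained optimization behind $I_\cap^{4}$, and the first-order optimality condition for the reverse $I$-projection behind $I_\cap^{3}$. A minor point to verify in each case is that the relevant optimum really is the true-distribution object ($p$ itself for $I_\cap^{4}$, the output marginal $p_Y$ for $I_\cap^{3}$), which is precisely what the equality cases above certify.
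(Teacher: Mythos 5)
Your proof is correct, and it takes a genuinely different route from the paper: the paper disposes of Proposition 5 purely by citation (``See Corollary 10 and Lemma 21 in \cite{ref5}''), whereas you give a self-contained argument driven by the single identity $\sum_y p(y)\,p(x_1|y)\,p(x_2|y)=p(x_1)p(x_2)$. Both closing steps check out. For $I_\cap^4$: every $Q\in\Delta_P$ inherits $Q_Y=p_Y$, $Q(x_i|y)=p(x_i|y)$ and $Q_{X_i}=p_{X_i}$, so applying joint convexity of $D(\cdot\|\cdot)$ to the mixtures $Q_{X_1X_2}=\sum_y p(y)Q(\cdot,\cdot|y)$ and $p_{X_1}p_{X_2}=\sum_y p(y)\,p(\cdot|y)\,p(\cdot|y)$ indeed yields $I_Q(X_1;X_2)\leq I_Q(X_1;X_2|Y)$, i.e.\ $I_{Co}^{Q}\leq 0$ on all of $\Delta_P$, with equality at $Q=p$ since $X_1\perp X_2$ kills the first term and $X_1-Y-X_2$ the second; hence $I_\cap^4=\max_{Q\in\Delta_P}I_{Co}^{Q}=0$. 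For $I_\cap^3$: the directional derivative of the convex map $Q\mapsto D(p_{Y|X_1=x_1}\|Q)$ at $p_Y$ toward the extreme point $p_{Y|X_2=x_2}$ equals $1-\sum_y p(y|x_1)p(y|x_2)/p_Y(y)=0$ by your identity, and since the derivative is linear in the direction while $\Delta$ is the convex hull of these extreme points, $p_Y$ is a global minimizer, giving $UI^3(\{X_1\};Y)=I(X_1;Y)$ and a vanishing bracket. What your route buys is an explicit mechanism---nonpositivity of co-information over the whole marginal-constrained polytope, and $p_Y$ being its own reverse $I$-projection---in place of an appeal to external results; what the paper's citation buys is brevity plus the surrounding machinery of \cite{ref5} that your argument never needs. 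Two small points worth making explicit: (i) you silently read the optimization in (22) as a minimum, and this is the right reading---as printed, (22) says ``max,'' which is inconsistent with the paper's own claim that $I_\cap^4$ satisfies \GP{} (with a max, $I_\cap^4=I(X_1;Y)-UI_\cap^4$ would equal $\min_Q I_{Co}^{Q}$, which your own bound shows is $\leq 0$ here), so the displayed ``max'' is a typo relative to the definition in \cite{ref5}; (ii) the derivative computation for $I_\cap^3$ needs $p_Y(y)>0$ on the support of $p_{Y|X_1=x_1}$, which holds because $p_Y(y)\geq p_{X_1}(x_1)p(y|x_1)$ whenever $p_{X_1}(x_1)>0$, and atoms $x_1$ of probability zero contribute nothing to $UI^3$; one sentence would close that technicality.
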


\begin{proof}
See Corollary 10 and Lemma 21 in \cite{ref5}.
\end{proof}
\smallskip

In general, the conditions for which an ideal $I_\cap$ vanishes are given in Lemma 5(a). Indeed, if both $X_1-Y-X_2$ and $X_1\perp X_2$ hold, then from (15g) we have that ${I_ \cap }(\{ {X_1},{X_2}\} ;Y)-SI(\{ {X_1}{X_2}\} ;Y)=0$, so that $I_\cap\geq 0$ in general (also see Lemma 4(d)). However, we have not been able to produce a counterexample to refute Proposition 5 (for an ideal $I_\cap$). We conjecture that the conditions $X_1-Y-X_2$ and $X_1\perp X_2$ are ideally sufficient for a vanishing $I_\cap$.

Proposition 5 highlights a key difference between $I_\cap^{2}$ and the related measures $I_\cap^{3}$ and $I_\cap^{4}$. By Lemma 6, we have that $I_\cap^{2}$ vanishes if $X_1\perp X_2$. Clearly, unlike $I_\cap^{3}$ and $I_\cap^{4}$, $I_\cap^{2}$ is not sensitive to the extra Markov condition $X_1-Y-X_2$. This is most clearly evident for the \textsc{And} function in Example 9, where we have $I(X_1;X_2)=0$ and $I(X_1;X_2|Y)=+.189$. Lemma 6 dictates that $I_\cap^{2}=0$ if $X_1\perp X_2$ and the ensuing decomposition is degenerate. Thus, for independent predictor RVs, if $Y$ is a function of $X_1$ and $X_2$ when any positive redundancy can be attributed solely to the \emph{common effect} $Y$, $I_\cap^{2}$ fails to capture the required decomposition (see Remark 5). When the predictor RVs are not independent, a related degeneracy is associated with the violation of \textbf{(Id)} when $I_\cap^{2}$ fails to attain the mutual information between the predictor RVs (see the \textsc{Copy} function in Example 7). Indeed, by Lemma 8, ${I_ \cap }(\{ {X_1},{X_2}\} ;Y)=I(X_1;X_2)$ iff $p_{X_1X_2}$ is saturable. Also by Lemma 7, $I_\cap^{2}$ violates the requirement posited in Lemma 4(d) which generalizes the \textbf{(Id)} property.
Interestingly, Lemma 6 also shows that any reasonable measure of redundant information cannot be derived by optimization over a single RV.

\begin{table}[t]
\renewcommand{\arraystretch}{1.3}
\caption{Desired properties of $I_\cap$ satisfied by the CI-based measures $I_\cap^{1}$ and $I_\cap^{2}$, and the earlier measures $I_\cap^{3}$ \cite{ref10} and $I_\cap^{4}$ \cite{ref5}}
\label{Properties}
\centering
\begin{tabular}{ l l l l l l }
\toprule
\multicolumn{2}{c}{\textbf{Property}} & \textbf{$I_\cap^{1}$} & \textbf{$I_\cap^{2}$} & \textbf{$I_\cap^{3}$} & \textbf{$I_\cap^{4}$} \\
\midrule
 \GP & Global Positivity  & \checkmark & \checkmark & \checkmark & \checkmark \\
\addlinespace
 \WS & Weak Symmetry & \checkmark & \checkmark & \checkmark & \checkmark \\
\addlinespace
 \SR & Self-redundancy & \checkmark & \checkmark & \checkmark & \checkmark \\
\addlinespace
 \Mo & Weak Monotonicity & \checkmark & \checkmark & \checkmark & \checkmark \\
\addlinespace
 \SMo & Strong Monotonicity &  & \checkmark & \checkmark & \checkmark \\
\addlinespace
 \LN & Local Positivity &  &  & \checkmark & \checkmark \\
 \addlinespace
 \Idty & Identity &  &  & \checkmark & \checkmark \\
\bottomrule
\end{tabular}
\end{table}

We give one final example which elucidates the subtlety of the PI decomposition problem from a coding-theoretic point of view. 
Consider the distribution in Example 11, where $Y=\textsc{Copy}(X_1,X_2)$. 
The PI decomposition of $I(X_1X_2;Y)$ in this case reduces to the decomposition of $H(X_1X_2)$ into redundant and unique information contributions\footnote{Also see Example 5.}. 
\begin{example}
Let ${\mathcal{X}}_1=\{1,2\}$ and ${\mathcal{X}}_2=\{3,4,5,6\}$. 
Let $Y=X_1X_2$ with $\mathcal{Y}={\mathcal{X}}_1\times {\mathcal{X}}_2$. 
Consider the distribution $p_{X_1X_2Y}$ with $(13``13")=(14``14")=(15``15")=(16``16")=\tfrac{1}{8}$, $(23``23")=\tfrac{1}{8}-\delta$, $(24``24")=\tfrac{1}{8}+\delta$, $(25``25")=\tfrac{1}{8}+{\delta }'$, $(26``26")=\tfrac{1}{8}-{\delta }'$, where $\delta ,{\delta }'<\tfrac{1}{8}$ and $(ab``ab")\defeq p_{{X_1}{X_2}{Y}}(a,b,ab)$. 
If $\delta \ne {\delta }'$, as $\delta,{\delta }'\to 0$, we have the following ideal PI decomposition: ${I_ \cap }\left( {\{ {X_1},{X_2}\};({X_{{1}}},{X_{{2}}})} \right) = I({X_{{1}}};{X_{{2}}})\to 0$, $SI(\{ {X_1}{X_2}\};Y)=0$, $UI(\{ {X_1}\};Y)=H(X_1|X_2)\approx +1.0$ and $UI(\{ {X_2}\};Y)=H(X_2|X_1)\approx +2.0$.
\end{example}
\noindent
Consider again the source network with coded side information setup \cite{ref42} where predictors $X_1$ and $X_2$ are independently encoded and a joint decoder wishes to losslessly reconstruct only $X_1$, using the coded $X_2$ as side information. 
It is tempting to assume that a complete description of $X_1$ is always possible by coding the side information at a rate $R_{X_2}=I(X_1;X_2)$ and describing the remaining uncertainty about $X_1$ at rate $R_{X_1}=H(X_1|X_2)$. 
Example 11 provides an interesting counterexample to this intuition. Since the conditional distributions $p_{X_1|X_2}(\cdot|x_2)$ are different for all $x_2 \in {\mathcal{X}}_2$, we have $R_{X_2} \geq H(X_2)$ (see Theorem 2). Consequently one needs to fully describe $X_2$ to (losslessly) recover $X_1$, even if it is the case that $I(X_1;X_2)$ is arbitrarily small. Therefore, separating the redundant and unique information contributions from $X_2$ is not possible in this case.

\subsection{Conclusions}

We first took a closer look at the varied relationships between two RVs. Assuming information is embodied in $\sigma$-algebras and sample space partitions, we formalized the notions of common and private information structures. We explored the subtleties involved in decomposing $H(XY)$ into common and private parts. The richness of the information decomposition problem is already manifest in this simple case in which common and private informational parts sometimes cannot be isolated. 
We also inquired if a nonnegative PI decomposition of the total mutual information can be achieved using a measure of redundancy based on common information. We answered this question in the negative. In particular, we showed that for independent predictor RVs when any nonvanishing redundancy can be attributed solely to a mechanistic dependence between the target and the predictors, any common information based measure of redundancy cannot induce a nonnegative PI decomposition. 

Existing measures of synergistic \cite{ref8} and unique \cite{ref5} information use optimization over three auxiliary RVs to achieve a nonnegative decomposition. We leave as an open question if optimization over two auxiliary RVs can achieve a similar feat.
Also, at present it is not clear if the coding-theoretic interpretation leading up to the counterexample in Example 11 calls into question the bivariate PI decomposition framework itself. More work is needed to assess its implications on the definitions of redundant and unique information.

In closing, we mention two other candidate decompositions of the total mutual information. Pola \emph{et al.} proposed a decomposition of the total mutual information between the target and the predictors into terms that account for different coding modalities \cite{ref55}. Some of the terms can, however, exceed the total mutual information \cite{ref28}. Consequently, the decomposition is not nonnegative, thus severely limiting the operational interpretation of the different coding components.
More recently, a decomposition of the total mutual information is proposed in \cite{ref23} based on a notion of synergistic information, $S^{(2)}$, using maximum entropy projections on $k$-th order interaction spaces \cite{ref23a,ref23b}. The ensuing decomposition is, however, incompatible with \textbf{(LP)} \cite{ref23}. Like $I_\cap^1$, $S^{(2)}$ is symmetric with respect to permutations of the target and the predictor RVs which strongly hints that $S^{(2)}$ fails to capture any notion of mechanistic dependence. Indeed, for the \textsc{And} example, $S^{(2)}$ computes to zero, and consequently \textbf{(LP)} is violated.   

In general, the quest for an operationally justified nonnegative decomposition of multivariate information remains an open problem. Finally, given the subtle nature of the decomposition problem, intuition is not the best guide.

\vspace{-1mm}
\section*{Acknowledgment}
\vspace{1mm}
Thanks are due to Aditya Mahajan for short useful discussions over email.

\section{Appendices}
\subsection{Appendix A: Supplemental proofs omitted in Section II} 
\vspace{2mm}
\begin{proof}[Proof of Lemma 2]
(See Problem 16.25, p. 392 in \cite{ref34}; also see Corollary 1 in \cite{ref33}).
Given $p_{Q|XY}$ such that $X-Y-Q$ and $Y-X-Q$, it follows that $p_{XY}(x,y)>0 \implies p_{Q|XY}(q|x,y)=p_{Q|X}(q|x)=p_{Q|Y}(q|y)$ $\forall q$. Given an ergodic decomposition of $p_{XY}(x,y)$ such that $\mathcal{X}\times\mathcal{Y}=\bigcup\nolimits_{{q'}} \mathcal{X}_{q'}\times\mathcal{Y}_{q'}$, where the $\mathcal{X}_{q'}$$'$s and $\mathcal{Y}_{q'}$$'$s having different subscripts are disjoint, define $p_{Q'|XY}$ as $Q'=q'$ $\iff$ $x \in \mathcal{X}_{q'} \iff y \in \mathcal{Y}_{q'}$. Clearly $H(Q'|X)=H(Q'|Y)=0$. Then, for any $Q=q$ and for every $q'$, $p_{Q|XY}(q|\cdot,\cdot)$ is constant over $\mathcal{X}_{q'}\times\mathcal{Y}_{q'}$ which implies that $p_{Q|XY}(q|x,y)=p_{Q|Q'}(q|q')$. Thus, for any ${q}'$ for which ${{p}_{{{Q}'}}}({q}')>0$, ${p_{XYQ|Q'}}(x,y,q|q') = {p_{Q|XYQ'}}(q|x,y,q'){p_{XY|Q'}}(x,y|q')= {p_{Q|XY}}(q|x,y){p_{XY|Q'}}(x,y|q') = {p_{Q|Q'}}(q|q'){p_{XY|Q'}}(x,y|q')$, so that $XY-Q'-Q$. The converse is obvious. Thus, given (2), we get $Q'$ such that $I(XY;Q|Q')=0$ so that $I(XY;Q)=I(XYQ';Q)=I(Q';Q)=H(Q')-H(Q'|Q)\leq H(Q')$.
\end{proof}

\vspace{3mm}
\emph{Lemma A1.} \emph{${{p}_{XY}}$ is saturable iff there exists a pmf ${{p}_{Q|XY}}$ such that $X-Q-Y,\text{  }Q-X-Y,\text{  }Q-Y-X$.}
\vspace{1mm}
\begin{proof}
Given $Q:X-Y-Q,\text{  }Y-X-Q$, by Lemma 2, there exists a pmf ${{p}_{{Q}'|XY}}$ such that $H(Q'|X)=H(Q'|Y)=0$ and $XY-{Q}'-Q$. Clearly, $I(X;Y|Q)=0\implies I(X;Y|{Q}')=0$ since $I(X;Y|Q)=I(X{Q}';Y|Q)\geq I(X;Y|Q{Q}')=I(X;Y|{Q}')$, where the last equality follows from $XY-{Q}'-Q$. Taking ${Q}'$ as ${{Q}_{*}}$, the claim follows. 
Taking ${{Q}_{*}}$ as $Q$, the other direction is obvious. 
\end{proof}

\vspace{3mm}
\emph{Lemma A2.} \emph{${{C}_{GK}}(X;Y)=I(X;Y)\iff I(X;Y)={{C}_{W}}(X;Y)$} (see Problem 16.30, p. 395 in \cite{ref34}).
\vspace{1mm}
\begin{proof}
Let RV ${{Q}_{1}}$ achieve the minimization in (4). Note the following chain of equivalences \cite{ref33}: $I(XY;{Q_1}) = I(X;Y) \iff H(XY|{Q_1}) = H(X|Y) + H(Y|X) \mathop=\limits^{{\text{(a)}}} H(X|{Q_1}Y)$ $+ H(Y|{Q_1}X)\iff {Q_1}-X-Y,\text{ } {Q_1}-Y-X$, where (a) follows from $X-{{Q}_{1}}-Y.$ The claim follows then from invoking Lemma 2 and noting that $I(XY;{{Q}_{1}})=H({{Q}_{*}})={{C}_{GK}}(X;Y)$, where ${Q}_{*}$ is the maximal common RV.
\end{proof}

\vspace{3mm}
\emph{Lemma A3.} \emph{$C_{GK}(X_1;\ldots;X_K)$ is monotonically nonincreasing in $K$, whereas $C_{W}(X_1;\ldots;X_K)$ is monotonically nondecreasing in $K$. Also $C_{GK}(X_1;\ldots;X_K) \leq \mathop{\min}\limits_{i\ne j} I(X_i;X_j)$, while $C_W(X_1;\ldots;X_K) \geq \mathop{\max}\limits_{i\ne j} I(X_i;X_j)$, for any $i,j\in \{1,\ldots,K\}$.}
\vspace{1mm}
\begin{proof}
Let ${X_\mathcal{A}} \triangleq {\{ {X_i}\} _{i \in \mathcal{A}}}$ be a $K$-tuple of RVs ranging over finite sets $\mathcal{X}_i$ where $\mathcal{A}$ is an index set of size $K$, and let ${{\mathcal{P}}_{{X_\mathcal{A}}}}$ be the set of all conditional pmfs ${p_{Q|{X_\mathcal{A}}}}$ s.t. $|{\mathcal{Q}}| \leq \prod\nolimits_{i = 1}^K {|{\mathcal{X}_i}}| + 2$. First note the following easy extensions.
\begin{align*}
{C_{GK}}({X_1}; \ldots ;{X_K})&=\mathop {\max }\limits_{\substack{Q:\text{ }Q - {X_i} - {X_{\mathcal{A}\setminus i}}, \forall i \in {\mathcal{A}}}} I(X_{\mathcal{A}};Q)\\
{C_W}({X_1}; \ldots ;{X_K})&=\mathop {\min }\limits_{\substack{Q:\text{ }{X_i} - Q - {X_j}, \forall i,j \in {\mathcal{A}},i \ne j}} I(X_{\mathcal{A}};Q)
\end{align*}
Given ${p_{Q|{X_\mathcal{A}}}} \in {\mathcal{P}}_{{X_\mathcal{A}}}$ such that (a) $Q - {X_i} - {X_{\mathcal{A}\setminus i}}, \forall i \in {\mathcal{A}}$, we have $I({X_{\mathcal{A}\setminus K}};Q) \mathop{=}\limits^{{\text{(b)}}} I({X_{\mathcal{A}\setminus K}};Q)+I({X_{\mathcal{A}\setminus 1}};Q|X_1) \mathop{\geq}\limits^{{\text{(c)}}}I({X_\mathcal{A}};Q)$, where (b) follows from using $i=1$ in (a), and (c) follows from noting that $I(X_{\mathcal{A}\setminus 1};Q|X_1)\geq I(X_K;Q|X_{\mathcal{A}\setminus K})$. We then have
\begin{align*}
\mathop {\max }\limits_{\substack{Q:\text{ }Q - {X_i} - {X_{\mathcal{A}\setminus i}},\\\hspace{3mm} \forall i \in {\mathcal{A}}}} I(X_{\mathcal{A}};Q) &\leq \mathop {\max }\limits_{\substack{Q:\text{ }Q - {X_i} - {X_{\mathcal{A}\setminus i}}, \\\hspace{2mm} \forall i \in {\mathcal{A}}}} I(X_{\mathcal{A}\setminus K};Q) \\
&\mathop{\leq}\limits^{{\text{(d)}}} \mathop {\max }\limits_{\substack{Q:\text{ }Q - {X_i} - {X_{\mathcal{A}\setminus{\{i,K\}}}}, \\\hspace{3mm} \forall i \in {\mathcal{A}\setminus K }}} I(X_{\mathcal{A}\setminus K};Q),
\end{align*}
where (d) follows since $\forall i \in {\mathcal{A}}$, $Q - {X_i} - {X_{\mathcal{A}\setminus i}}$ implies $Q - {X_i} - {X_{\mathcal{A}\setminus{\{i,K\}}}}, \forall i \in {\mathcal{A}\setminus K}$. Hence, ${C_{GK}}({X_1}; \ldots ;{X_K})\leq {C_{GK}}({X_1}; \ldots ;X_{K-1})$. Also note that for any $i,j\in \mathcal{A}$,  $I(X_{\mathcal{A}};Q)\mathop{=}\limits^{{\text{(e)}}}I(X_i;Q)\mathop{\leq}\limits^{{\text{(f)}}}I(X_i;X_j)$, where (e) follows from (a) and (f) follows from invoking the data processing inequality after using (a) again, since for any $j \in \mathcal{A}$, $Q - {X_j} - {X_{\mathcal{A}\setminus j}} \implies Q-X_j-X_i$, with $i \in \mathcal{A}\setminus j$. Hence $C_{GK}(X_1;\ldots;X_K) \leq \mathop{\min}\limits_{i\ne j} I(X_i;X_j)$.

The claim for monotonicity of the Wyner CI is immediate from noting that $\mathop {\min }\limits_{\substack{Q:\text{ }{X_i} - Q - {X_j}, \forall i,j \in {\mathcal{A}\setminus K},i \ne j}} I(X_{\mathcal{A}\setminus K};Q) \leq \mathop {\min }\limits_{\substack{Q:\text{ }{X_i} - Q - {X_j}, \forall i,j \in {\mathcal{A}},i \ne j}} I(X_{\mathcal{A}};Q)$, since the constraint set for $Q$ in the $\operatorname{RHS}$ is a subset of that for the $\operatorname{LHS}$. Further, for any $i,j\in \mathcal{A}$, $X_i-Q-X_j \implies I(X_i;X_j)\leq I(X_j;Q)\leq I(X_{\mathcal{A}};Q)$, whence $\mathop{\max}\limits_{i\ne j} I(X_i;X_j) \leq C_W(X_1;\ldots;X_K)$ follows.
\end{proof}

\vspace{3mm}
\begin{proof}[Proof of Lemma 3]
Let $Q_{Y}^{X}={{Q}_{1}}$. 
Clearly, ${{Q}_{1}}-Y-X$ is a Markov chain. 
$Y-{{Q}_{1}}-X$ is also a Markov chain since given ${{Q}_{1}}={{q}_{1}}$, ${p_{X|{Q_1}}}(x|{q_1}) = \sum\nolimits_{y \in {\mathcal{Y}}} {{p_{XY|{Q_1}}}(xy|{q_1})}=\sum\nolimits_{y:{\text{}}{Q_1}={q_1}}{{p_{Y|{Q_1}}}(y|{q_1})}{p_{X|Y{Q_1}}}(x|y{q_1})$ $={p_{X|Y{Q_1}}}(x|y{q_1})$, $\forall y{\text{ given }}{Q_1} = {q_1}$.
Now let ${{Q}_{2}}=g(Y)$ so that $X-{{Q}_{2}}-Y$. For some $y,{y}'\in {\mathcal{Y}}$, let $g(y)=g({y}')={{q}_{2}}$. Then, ${{p}_{X|{{Q}_{2}}}}(x|{{q}_{2}})$ $={{p}_{X|Y}}(x|y)={{p}_{X|Y}}(x|{y}'),\text{ }x\in {\mathcal{X}}$. Thus $f(y)=f({y}')$ which implies $X-{{Q}_{1}}-{{Q}_{2}}-Y$. Hence ${{Q}_{1}}=Q_{Y}^{X}$ is a minimal sufficient statistic of $Y$with respect to $X$.
\end{proof}

\vspace{3mm}
\begin{proof}[Proof of Theorem 1]
From (7) and Lemma 3, it follows that $Q_{Y}^{X}$ is the minimizer in ${{\tilde{C}}_{W}}(Y\backslash X)$. Since $Q_{Y}^{X}$ is a minimal sufficient statistic of $Y$ with respect to $X$, for any $Q$ s.t. $Q-X-Y$ and $X-Y-Q$, it follows that $H(Q_{Y}^{X}|Q)=0$ (also see Lemma 3.4(5) in \cite{ref43}). Thus, $Q_{Y}^{X}$ achieves the maximum in ${{\tilde{P}}_{W}}(Y\backslash X)$. The decomposition $H(Y)=H(Y|Q_{Y}^{X})+H(Q_{Y}^{X})$ easily follows. Finally, $\mathop {\min }\limits_{Q:X - Q - Y} I(XY;Q)$ $\leq$ $\mathop {\min }\limits_{\substack{ Q:\text{ }X-Q-Y,\hspace{1mm}X-Y-Q}} H(Q)$, 
because if $X-Y-Q$ then $I(XY;Q)=I(Q;Y)\leq H(Q)$, so that ${{C}_{W}}(X;Y)\leq {{\tilde{C}}_{W}}(Y\backslash X)$.                       
\end{proof}

\vspace{3mm}
\begin{proof}[Proof of Theorem 2]
When $p_{XY}$ lacks the structure to form components of size greater than one, it follows from Lemma 3 that ${{\tilde{C}}_{W}}(Y\backslash X)=H(Y)$. Consequently by Theorem 1, ${{\tilde{P}}_{W}}(Y\backslash X)=0$. For the other direction, see Corollary 3 in \cite{ref42}, where analogous bounds for ${{\tilde{C}}_{W}}(Y\backslash X)$ are given.

For the second part, we prove the first equivalence. Let ${{Q}_{*}}=X\wedge Y={{g}_{X}}(X)={{g}_{Y}}(Y)$. For $x\in \mathsf{\mathcal{X}},\text{ }$ if $y,{y}'\in \mathsf{\mathcal{Y}}$ do not induce different conditional distributions on $X$, i.e., if ${{p}_{X|Y}}(x|y)={{p}_{X|Y}}(x|{y}')$, then we must have ${{g}_{Y}}(y)={{g}_{Y}}({y}')$. This implies the existence of a function $f$ such that ${{Q}_{*}}=f(Q_{Y}^{X})$. If ${{p}_{XY}}$ is saturable, we also have $X-{{Q}_{*}}-Y$. From Lemma 3 and Remark 3, it then follows that $H({{Q}_{*}})=H(Q_{Y}^{X})=I(X;Y)$ and consequently ${{\tilde{P}}_{W}}(Y\backslash X)=H(Y|X)$. For the converse, first note that $H({{Q}_{*}})\le I(X;Y)\le {{C}_{W}}(X;Y)\le H(Q_{Y}^{X})$ (see Remark 3 and (8d)). Demanding ${{\tilde{P}}_{W}}(Y\backslash X)=H(Y|X)$ or equivalently $H(Q_{Y}^{X})=I(X;Y)$ implies $I(X;Y)={{C}_{W}}(X;Y)=H(Q_{Y}^{X})$ when from Remark 3 it follows that ${{p}_{XY}}$ is saturable. For the second equivalence, see Theorem 4 in \cite{ref42}. This concludes the proof.
\end{proof}

\vspace{3mm}
\subsection{Appendix B}
We briefly provide several examples and applications, where information-theoretic notions of synergy and redundancy are deemed useful.

\emph{Synergistic and redundant information}. 
Synergistic interactions in the brain are observed at different levels of description. At the level of brain regions, cross-modal illusions offer a powerful window into how the brain integrates information streams emanating from multiple sensory modalities \cite{ref12}. A classic example of synergistic interaction between the visual and auditory channels is the Mcgurk illusion \cite{ref13}. Conflicting voice and lip-movement cues can produce a percept that differs in both magnitude and quality from the sum of the two converging stimuli. At the single neuronal level, temporally and spatially coincident multimodal cues can increase the firing rate of individual multisensory neurons of the superior colliculus beyond that can be predicted by summing the unimodal responses \cite{ref14}. In the context of neural coding, a pair of spikes closely spaced in time can jointly convey more than twice the information carried by a single spike \cite{ref15}. 
In cortex studies, evidence of weak synergy have been been found in the somatosensory \cite{ref55} and motor \cite{ref90} and primary visual cortex \cite{ref56}.
Similarly, there are several studies evidencing net redundancy at the neuronal population level \cite{ref28,ref29,ref51,ref54,ref55,ref56,ref53,ref58}. 
Often studies on the same model system have reached somewhat disparate conclusions. 
For instance, retinal population codes have been found to be approximately independent \cite{ref58b}, synergistic \cite{ref58a}, or redundant \cite{ref58}.

\emph{Unique information}. 
A wealth of evidence suggests that attributes such as color, motion and depth are encoded uniquely in perceptually separable channels in the primate visual system \cite{ref16}, \cite{ref12}. The failure to perceive apparent motion with isoluminant colored stimuli, dubbed as the color-motion illusion \cite{ref16} demonstrates that the color and motion pathways provide unique information with respect to each other. There is also mounting evidence in favor of two separate visual subsystems \cite{ref17} that encode the allocentric (vision for perception) and egocentric (vision for action) coordinates uniquely along the ventral and the dorsal pathways, respectively, for object identification and sensorimotor transformations.

In embodied approaches to cognition, an agent's physical interactions with the environment generates structured information and redundancies across multiple sensory modalities that facilitates cross-modal associations, learning and exploratory behavior \cite{ref18}. More recent work has focused on information decomposition in the sensorimotor loop to quantify morphological computation which is the contribution of an agent's morphology and environment to its behavior \cite{ref20}. Some related decompositions have also focused on extracting system-environment boundaries supporting biological autonomy \cite{ref21}.

Further motivating examples for studying information decomposition in general abound in cryptography \cite{ref25}, distributed control \cite{ref26} and adversarial settings like game theory \cite{ref27}, where notions of common knowledge shared between agents are used to describe epistemic states.

\end{document}